\documentclass[sigconf,authorversion,nonacm]{acmart}

\setcopyright{acmlicensed}
\copyrightyear{2026}
\acmYear{2026}
\acmDOI{XXXXXXX.XXXXXXX}

\usepackage[english]{babel}
\usepackage{amsmath, amssymb}
\usepackage{hyperref}
\usepackage{xcolor}
\usepackage{tikz, prftree}
\usepackage{mathtools, stmaryrd}
\usepackage{unicode-math}

\input{macros}

\definecolor{green}{HTML}{00cc00}

\begin{document}
%% The "title" command has an optional parameter,
%% allowing the author to define a "short title" to be used in page headers.
\title{Unravelling Abstract Cyclic Proofs into Proofs by Induction}

%% The "author" command and its associated commands are used to define
%% the authors and their affiliations.
%% Of note is the shared affiliation of the first two authors, and the
%% "authornote" and "authornotemark" commands
%% used to denote shared contribution to the research.
\author{Lide Grotenhuis}
%\orcid{0000-0003-2557-3959} TODO
\affiliation{%
  \institution{University of Amsterdam}
  \city{Amsterdam}
  \country{The Netherlands}
}
\email{lidegrotenhuis@gmail.com}

\author{Dani\"el Otten}
\orcid{0000-0003-2557-3959}
\affiliation{%
  \institution{University of Amsterdam}
  \city{Amsterdam}
  \country{The Netherlands}
}
\email{daniel@otten.co}

%% By default, the full list of authors will be used in the page
%% headers. Often, this list is too long, and will overlap
%% other information printed in the page headers. This command allows
%% the author to define a more concise list
%% of authors' names for this purpose.
%\renewcommand{\authors}{Anonymous}% Remove for final version
%\renewcommand{\shortauthors}{Anonymous authors}

%% The abstract is a short summary of the work to be presented in the
%% article.
\begin{abstract}
Cyclic proof theory breaks tradition by allowing certain infinite proofs: those that can be represented by a finite graph, while satisfying a soundness condition. We reconcile cyclic proofs with traditional finite proofs: we extend abstract cyclic proof systems with a well-founded induction principle, and transform any cyclic proof into a finite proof in the extended system. Moreover, this transformation preserves the structure of the cyclic proof. 

Our results leverage an annotated representation of cyclic proofs, which allows us to extract induction hypotheses and to determine their introduction order. The representation is essentially a reset proof with one key modification: names must be covered in a uniform way before a reset. This innovation allows us to handle cyclic proofs where the underlying inductive sort is non-linear.

Our framework is general enough to cover recursive functions satisfying the size-change termination principle, which are viewed as cyclic proofs under the Curry-Howard correspondence.

\end{abstract}

%% The code below is generated by the tool at http://dl.acm.org/ccs.cfm.
%% Please copy and paste the code instead of the example below.
\begin{CCSXML}
<ccs2012>
   <concept>
       <concept_id>10003752.10003790.10003792</concept_id>
       <concept_desc>Theory of computation~Proof theory</concept_desc>
       <concept_significance>500</concept_significance>
       </concept>
   <concept>
       <concept_id>10003752.10003790.10011740</concept_id>
       <concept_desc>Theory of computation~Type theory</concept_desc>
       <concept_significance>300</concept_significance>
       </concept>
    <concept>
       <concept_id>10003752.10003790.10003796</concept_id>
       <concept_desc>Theory of computation~Constructive mathematics</concept_desc>
       <concept_significance>300</concept_significance>
       </concept>
   <concept>
       <concept_id>10003752.10003790.10003794</concept_id>
       <concept_desc>Theory of computation~Automated reasoning</concept_desc>
       <concept_significance>100</concept_significance>
       </concept>
 </ccs2012>
\end{CCSXML}

\ccsdesc[500]{Theory of computation~Proof theory}
\ccsdesc[300]{Theory of computation~Type theory}
\ccsdesc[300]{Theory of computation~Constructive mathematics}
\ccsdesc[100]{Theory of computation~Automated reasoning}

%% Keywords. The author(s) should pick words that accurately describe
%% the work being presented. Separate the keywords with commas.
\keywords{Cyclic proof theory, induction, size-change termination, proof assistants, pattern matching, recursive functions, inductive types, }

%% A "teaser" image appears between the author and affiliation
%% information and the body of the document, and typically spans the
%% page.
% \begin{teaserfigure}
%   \includegraphics[width=\textwidth]{sampleteaser}
%   \caption{Seattle Mariners at Spring Training, 2010.}
%   \Description{Enjoying the baseball game from the third-base
%   seats. Ichiro Suzuki preparing to bat.}
%   \label{fig:teaser}
% \end{teaserfigure}

\received{22 Januari 2026}
%\received[revised]{12 March 2009}
%\received[accepted]{5 June 2009}

%% This command processes the author and affiliation and title
%% information and builds the first part of the formatted document.
\maketitle

\renewcommand\phi\varphi
\allowdisplaybreaks[1]

%%novalidate

\section{Introduction}

Cyclic structures are widespread in both computer science and mathematical logic.

In computer science, we find them in the form of inductive types and recursive functions.
An important example is given by proof assistants based on type theory (\texttt{Agda}, \texttt{Dedukti}, \texttt{Lean}, \texttt{Rocq}, ...), which allow the user to define functions out of inductive types using pattern matching and recursive calls.
These proof assistants are more restrictive than general programming languages: proof assistants only accept functions when it is clear that the program defining them always terminates.

In logic, cyclic structures are increasingly often found in the form of \emph{cyclic proofs}: proofs of possibly infinite length that can be represented by a finite tree with \emph{back-edges}, and satisfy some additional condition to ensure soundness.
This \emph{soundness condition} is part of the specification of the cyclic proof system.
\begin{example}\label{exam:back-edge} Consider the following cyclic proof in arithmetic: 
    \begin{align*}\begin{tikzpicture}
        \node (sprout) at ( 0, 0) {\(
            \prftree[r]{\(\sf{case}_x\)}
                {\prftree[r]{\(\sf{axiom}\)}
                    {0+0=0}}
                {\prftree[noline]
                    {\hspace*{-2cm}\prftree
                        {\prftree[r]{\(\sf{axiom}\)}
                            {\sf{suc}(x)+0=\sf{suc}(x+0)}}
                        {\prftree[noline]
                            {x+0=x}}
                        {\hspace*{2cm}\vdots}}
                    {\sf{suc}(x)+0=\sf{suc}\,x}}
                {x+0=x}~~~~\)};
        \path[->,in=0,out=-45,out looseness=2.25,in looseness=0, blue, dashed, overlay] (3.5, 0.5) edge (0.25, -0.8);
    \end{tikzpicture}\end{align*}
    The \textcolor{blue}{dashed arrow} indicates a back-edge.
    This proof is sound because the variable \(x\) has decreased before the back-edge (by the \(\sf{case}_x\) rule), and we can interpret the proof as an argument by infinite descent. 
\end{example}

Example \ref{exam:back-edge} illustrates the use of cyclic proof systems for logics whose domain contains \emph{(co)inductive sorts} \cite{SprengerDamComp, afshari2023proof, simpson2017cyclic}.
Cyclic proof systems are also very suitable for logics whose language includes \emph{(co)inductive formulas}, such as the fixed-point formulas in the modal $\mu$-calculus \cite{NW96, marti2021focus, afshari2023cyclic} or Brotherston's inductive definitions \cite{brotherston_cyclic_2005,brotherston_sequent_2011}.

In fact, coinductive sorts and formulas are both generalized by dependent types, which allow us to view these systems in a unified way. Indeed, the shared nature of recursive functions and cyclic proofs is no coincidence: under the Curry-Howard correspondence, cyclic proofs are recursive programs operating on inductive sorts:

\[\begin{array}{c|c}
    \text{cyclic proof} & \text{recursive function} \\
    \hline
    \text{(co)inductive formula} & \text{(co)inductive sort} \\
    \text{cycle} & \text{recursive call} \\
    \text{soundness condition} & \text{termination checking}
\end{array}\]

\paragraph{Cycles versus (co)induction.} By allowing cycles, cyclic proof systems circumvent the need for an explicit (co)induction rule. One advantage of this lies in proof search: to apply (co)induction, we need to guess the right (co)induction hypothesis, whereas with cycles we can start generating a proof until our current goal matches one that we have seen before.

Similarly, defining a function by pattern matching on the input and making recursive calls is generally much easier than defining a function using the elimination principles of the underlying inductive types. For example, the elimination principle of the natural numbers type allows one to build a function by specifying its value at zero, and specifying how its value at a successor is generated given the previous value. The following three recursive definitions contain complex recursive calls that would be hard to implement using this elimination principle directly:
\begin{example}\label{exam:run}
    Consider the following definition of addition with a swapped recursive call: \begin{align*}
        0+x_1&\coloneqq x_1, \\
        \sf{suc}(x_0')+x_1&\coloneqq\sf{suc}(\smash{\textcolor{red}{\overbracket{\textcolor{black}{x_1+x_0'}}^{c^+}}}).
    \intertext{Secondly, consider a definition of the Ackermann function:}
        A(0,x_1)&\coloneqq\sf{suc}(x_1), \\
        A(\sf{suc}(x_0'),0)&\coloneqq\textcolor{red}{\underbracket{\textcolor{black}{A(x_0',1)}}_{c_0^{A}}}, \\
        A(\sf{suc}(x_0'),\sf{suc}(x_1'))&\coloneqq\textcolor{red}{\underbracket{\textcolor{black}{A(x_0',\smash{\textcolor{red}{\overbracket{\textcolor{black}{A(\sf{suc}(x_0'),x_1')}}^{c_1^{A}}})}}}_{c_2^{A}}}.
    \intertext{Lastly, consider the following inefficient way of calculating distance:}
        d(x_0,0)&\coloneqq x_0, \\
        d(0,x_1)&\coloneqq x_1, \\
        d(\sf{suc}(x_0'),\sf{suc}(x_1'))&\coloneqq\smash{\textcolor{red}{\overbracket{\textcolor{black}{d(x_1',x_1')}}^{c_0^{d}}}+\textcolor{red}{\overbracket{\textcolor{black}{d(x_0',x_1')}}^{c_1^{d}}}+\textcolor{red}{\overbracket{\textcolor{black}{d(x_0',x_0')}}^{c_2^{d}}}.}
    \end{align*}
    We have \alert{named} the recursive calls for future reference.
\end{example}

\paragraph{Type-theoretic conservativity.}
To ensure safe use of recursive calls, it is important to know when the programs defined in this way terminate.
However, termination is famously undecidable (the halting problem), so we have to settle for a decidable termination condition that is necessary but not sufficient.

Given such a termination condition, it is also important to know whether the recursive programs satisfying this condition are \emph{conservative}, in the sense that the functions they define can already be defined using the primitive elimination rules of the underlying type theory. 
This is not only important to understand the theory implemented by proof assistants employing this termination condition, but is also needed to maintain consistency with extensions of type theory, such as homotopy type theory \cite{hott}.
Note that a type-theoretic conservativity result must necessarily be `proof-relevant', in the sense that the two function definitions must induce the same computational behaviour. 

At the moment, the known conservativity results \cite{EliminatingDependentPatternMatching,EliminatingDependentPatternMatchingWithoutK} cover pattern matching on general indexed inductive types with recursive calls that satisfy \emph{structural recursion}: there is one inductive input that is decreased in every recursive call. 
This condition is quite restrictive; it rules out all definitions in Example \ref{exam:run}. 

Meanwhile, proof assistants like $\texttt{Agda}$, $\texttt{Dedukti}$, and $\texttt{Isabel}/\texttt{HOL}$ already accept mutually recursive functions with much more complex interleaving of recursive calls.
The termination condition employed here is the \textit{size-change termination principle}: for any infinite sequence of function calls that might occur, there should eventually be an input/output that we can track, where progress is made infinitely often \cite{lee_size-change_2001,DependentSizeChangeTermination, thibodeau_intensional_2020,hyvernat_size-change_2025}.
The size-change termination principle is more lenient than structural recursion, and accepts all definitions in Example \ref{exam:run}.
However, while the size-change termination principle ensures termination, it is no longer clear that the accepted function definitions are conservative over the primitive elimination rules.

\paragraph{Proof-theoretic conservativity.}
On the proof-theoretic side there are similar conservativity results where cyclic proofs are shown to derive the same statements as finitary proofs with (co)induction rules; we call the latter \emph{(co)inductive proofs}. 
Examples are Sprenger and Dam's translation of cyclic into inductive proofs for the first-order $\mu$-calculus with ordinal approximations  \cite{sprenger_structure_2003}, Simpson's result that cyclic arithmetic is equivalent to standard Peano arithmetic \cite{simpson2017cyclic}, and Berardi and Tatsuta's generalisation of the latter to both Heyting and Peano arithmetic extended with Martin-L\"of-style inductive definitions \cite{berardi_equivalence_2017}.\footnote{Berardi and Tatsuta also showed that, without arithmetic, the cyclic system for Martin-Lof's inductive is strictly stronger than the primitive system with induction rules \cite{berardi_classical_2019}.} 

These conservativity results need not be proof-relevant, in the sense that the `content' of the cyclic proof need not be reflected by the inductive proof.
However, Sprenger and Dam's method produces an inductive proof that structurally stays very close to the original cyclic one: they \emph{unfold} the cyclic representation until the structure of the proof tree matches the \emph{induction order} of the back-edges, and then replace each cycle by an appropriate inductive argument. This `unfolding technique' was recently used by Wehr \cite{wehr2025cyclic} to refine the conservativity results for cyclic arithmetic. 

\paragraph{Our goal.} Interestingly, the soundness conditions in the cyclic proof systems discussed above are quite lenient, and reminiscent of the size-change termination principle. This suggests that the conservativity
results obtained via the unfolding technique could be instrumental in extending
the current type-theoretic conservativity results. However, as the current proof-theoretic results are presented for specific settings, it is unclear how the technique could be applied in the setting of general inductive types. The goal of this work is to fill precisely this gap: we generalize the unfolding technique to obtain a proof-relevant translation from cyclic proofs into inductive proofs in an abstract setting.

\paragraph{Contribution.}
We introduce an abstract notion of a cyclic proof system $\cal{C}$, where progress is defined in terms of size-change graphs, and the soundness condition is precisely the size-change termination principle.\footnote{The characterisation of the soundness condition in terms of size-change termination is not new; see for example Ikebuchi \cite{ikebuchi_cyclic_2025}.}
This abstract notion of cyclic proofs generalises both the cyclic proof systems considered in \cite{sprenger_structure_2003, simpson2017cyclic, berardi_equivalence_2017, berardi2017equivalence} and the size-change terminating recursive functions as presented in \cite{lee_size-change_2001}.
Subsequently, we introduce a first-order proof system $\cal{C}_{\sf{ind}}$ that extends the signature of $\cal{C}$ with binary connectives $<$ and $\leq$ and an induction principle for $<$.
Our main result is:

\begin{proof}[Theorem \ref{the:main}]\renewcommand\qedsymbol{}\emph{
    For every $\cal{C}$-proof $\pi$, there exists a  \(\cal C_{\sf{ind}}\)-proof $\pi_{\sf{ind}}$ of the same statement. Moreover, $\pi_{\sf{ind}}$ preserves the structure of \(\pi\): if we forget all rules of $\pi_\ind$ that are not present in the original system~\(\cal C\) (namely the introduction and application of induction hypotheses) then we obtain a finite subtree of \(\pi\).
}\end{proof}

As an example, we apply this theorem to recover the known result that cyclic Heyting arithmetic (\(\sf{CHA}\)) is conservative over Heyting arithmetic (\(\sf{HA}\)).
Lastly, we show that our result can be extended to cyclic proof systems that contain multiple sorts.

\paragraph{Method.} Our method is heavily inspired by the results laid out in the PhD thesis of Dominik Wehr \cite{wehr2025cyclic}, developed in collaboration with Bahared Afshari and Graham Leigh \cite{afshari_abstract_2022,leigh_gtc_2024,leigh2025unravelling}.
Although their translation from cyclic to inductive proofs is presented for $\sf{HA}$ and $\sf{PA}$, the `preparatory work' in \cite{wehr2025cyclic} is done for an abstract notion of cyclic proofs.\footnote{
    Their setting is slightly more general: instead of using two activation values (progress and preservation), they allow for a finite semilattice of such values. 
    We expect that our result can be generalised to their full framework, but are not aware of examples from the literature that need this level of generality.
} 
To find a suitable finite representation of the cyclic proof, a well-known strategy in cyclic proof theory is used: sequents are given \emph{annotations} to obtain \emph{reset proofs}, which have a local rather than a global soundness condition.
These reset proofs can be used to extract suitable induction hypotheses, and to determine the order in which these hypotheses are to be introduced (the \emph{induction order}).
Following Sprenger and Dam \cite{SprengerDamComp}, the reset proof is then unfolded until the order of the back-edges respects the induction order.

For the final translation to the inductive proof, Leigh and Wehr rely on the linearity of natural numbers, which fails for more general inductive sorts.\footnote{
    For example, in our definition of distance in Example \ref{exam:run}, we can do induction on the maximum of the two inputs as this value is always lower for the input of the recursive calls than for the original input.
    However, in Example \ref{exam:run1} we will see a function with a similar recursive structure, defined on a non-linear inductive type, where this maximum does not exist.
}
We circumvent the need for linearity by imposing stronger criteria for progress on reset proofs (requiring us to redo the preparatory work of Leigh and Wehr) and a more delicate selection of induction hypotheses.

\paragraph{Related work.}
Besides the work that we have already mentioned, there are proof-relevant conservativity results by Das \cite{das2020circular} on a cyclic system for G\"odel's $\sf{T}$ and by Das and Curzi \cite{curzi2023computationalex} on a cyclic system for
%who gives a conservativity result with a detailed characterization of the computational behaviour of cyclic proofs for G\"odel's system T 
%the work by  for the cyclic system $\sf{C}\mu\sf{LJ}$ for 
intuitionistic propositional logic with fixpoint operators; the latter covers functions on simple inductive types via the Curry-Howard correspondence. In their approach, cyclic proofs are interpreted as arithmetical functions, which are then shown to be representable by inductive proofs by appealing to results in reverse mathematics.
In particular, they employ a formalisation of the totality argument of such functions within a suitable fragment of second-order arithmetic.
In comparison, our approach is more direct; and, as our translation preserves the structure of the cyclic proof in a straightforward way, we expect it to be applicable to more fine-grained notions of computational content such as $\beta$-reduction on arbitrary terms.

%Moreover, the characterisation of the soundness condition in terms of size-change termination is also used by Ikebuchi \cite{ikebuchi_cyclic_2025} for cyclic proofs in arithmetic.

\paragraph{Structure of the article.}
Section \ref{sec:cyclic system} introduces abstract cyclic call and proof systems, and Section \ref{sec:inductive system} introduces the extended system with a well-founded induction principle.
Section \ref{sec:examples} sketches the main idea of our transformation through examples.
Section \ref{sec:safra} covers the preparatory work by finding the right annotated representation of a cyclic proof, while Section \ref{sec:transformation} is dedicated to transforming this representation into the inductive proof.
Section \ref{sec:application} illustrates how our results can be applied to (Heyting) arithmetic, and in Section \ref{sec:multiple sorts} we extend the translation to systems with multiple sorts.
%%novalidate

\section{Abstract cyclic systems}\label{sec:cyclic system}

We will see two closely connected notions: the computer-scientific  notion of a size-change terminating function and the proof-theoretic notion of a cyclic proof.
Both can be formulated using size-change graphs \cite{lee_size-change_2001,ikebuchi_cyclic_2025}:
\begin{definition}[size-change graph]
    Let \(m,n\in\bb N\).
    A \emph{size-change graph} $G:n\to m$ is a bipartite graph from $n$ nodes to $m$ nodes where edges are labelled as either $\geq$ \emph{(preserving)} or $>$ \emph{(progressing)}.
    A \emph{path} through a (possibly infinite) sequence of size-change graphs \((G_i:{n_i\to n_{i+1}})_{i<m}\) consists of a node \(t_i<n_i\) for ever $i$, such that $t_i$ and \(t_{i+1}\) are connected by an edge in $G_i$.
    A \emph{trace} through an infinite sequence of size-change graphs \((G_i:{n_i\to n_{i+1}})_{i<\omega}\) consists of a starting time \(k\in\bb N\) and a path through \((G_i)_{k\le i<\omega}\).
    We call the sequence \emph{progressing} if there exists a trace where the connecting edge is progressing infinitely often.
\end{definition}

We first define the notion of a cyclic call system:
\begin{definition}[cyclic call system]
    A \emph{cyclic call system} consists of a finite set $\sf{Fun}$ \emph{(the functions)} and a finite set $\sf{Call}$ \emph{(the recursive function calls)}.
    In addition, each function $f\in\sf{Fun}$ comes with an arity $\sf{ar}(f)\in\bb{N}$ and each call $c\in \sf{Call}$ comes with a domain $\sf{dom}(c)\in\sf{Fun}$, a codomain $\sf{codom}(c)\in\sf{Fun}$, and a \emph{size-change graph} $\sf{graph}(c):\sf{ar}(\sf{dom}(c))\to\sf{ar}(\sf{codom}(c))$.
    For \(c\in\sf{Call}\) we write \(c:f\to g\) if \(\sf{dom}(c)=f\) and \(\codom(c)=g\).
\end{definition}
\begin{definition}[terminating function]
    A function \(f_0\) in a cyclic call system is \emph{(size-change) terminating} if for every infinite sequence of calls \((c_i:f_i\to f_{i+1})_{i<\omega}\) starting with \(f_0\), the corresponding sequence of size-change graphs is progressing.
\end{definition}
Although this property deals with infinitely many infinite sequences, it turns out to be decidable; in fact, it is PSPACE-complete, which can be shown using stream automata \cite{lee_size-change_2001}.
\begin{example}\label{exam:run0}
    The functions in Example \ref{exam:run} induce a cyclic call system with three functions of arity 2, together with the seven recursive calls with size-change graphs: \begin{gather*}
        \begin{gathered}
            \textcolor{red}{c^+}\!\vcenter{\hbox{
            \begin{tikzpicture}[xscale = 1.5]
                \node (+0)  at ( 0,    0) {\(+\)};
                \node (+00) at (-0.25, 0) {\({}^{\vphantom0}_{\vphantom0}x_0\)};
                \node (+01) at ( 0.25, 0) {\({}^{\vphantom0}_{\vphantom0}x_1\)};
                \node (+1)  at ( 0,    1) {\(+\)};
                \node (+10) at (-0.25, 1) {\({}^{\vphantom0}_{\vphantom0}y_0\)};
                \node (+11) at ( 0.25, 1) {\({}^{\vphantom0}_{\vphantom0}y_1\)};
                \path [->,out= 75,in=-105,green] (+00.north) edge node[very near start,left] {\(\rotatebox{90}{\(>\)}\)} (+11.south);
                \path [->,out=105,in= -75]       (+01.north) edge node[very near start,right] {\(\rotatebox{90}{\(\ge\)}\)} (+10.south);
            \end{tikzpicture}}}
        \end{gathered} \quad \begin{gathered}
            \begin{gathered}
                \textcolor{red}{c^{A}_0}\! \vcenter{\hbox{
                \begin{tikzpicture}[xscale = 1.5]
                    \node (+0)  at ( 0,    0) {\(A\)};
                    \node (f0t)  at ( 0.125,    0) {\((\)};
                    \node (f0ttt)  at ( 0.625,    0) {\()\)};
                    \node (+00) at ( 0.25, 0) {\({}^{\vphantom0}_{\vphantom0}x_0\mathrlap,\)};
                    \node (+01) at ( 0.5,  0) {\({}^{\vphantom0}_{\vphantom0}x_1\)};
                    \node (+1)  at ( 0,    1) {\(A\)};
                    \node (f0t)  at ( 0.125,    1) {\((\)};
                    \node (f0ttt)  at ( 0.625,    1) {\()\)};
                    \node (+10) at ( 0.25, 1) {\({}^{\vphantom0}_{\vphantom0}y_0\mathrlap,\)};
                    \node (+11) at ( 0.5,  1) {\({}^{\vphantom0}_{\vphantom0}y_1\)};
                    \path [->,green] (+00.north) edge node[left] {\(\rotatebox{90}{\(>\)}\)} (+10.south);
                \end{tikzpicture}}}
            \end{gathered} \quad
            \begin{gathered}
                \textcolor{red}{c^{A}_1}\! \vcenter{\hbox{
                \begin{tikzpicture}[xscale = 1.5]
                    \node (+0)  at ( 0,    0) {\(A\)};
                    \node (f0t)  at ( 0.125,    0) {\((\)};
                    \node (f0ttt)  at ( 0.625,    0) {\()\)};
                    \node (+00) at ( 0.25, 0) {\({}^{\vphantom0}_{\vphantom0}x_0\mathrlap,\)};
                    \node (+01) at ( 0.5,  0) {\({}^{\vphantom0}_{\vphantom0}x_1\)};
                    \node (+1)  at ( 0,    1) {\(A\)};
                    \node (f0t)  at ( 0.125,    1) {\((\)};
                    \node (f0ttt)  at ( 0.625,    1) {\()\)};
                    \node (+10) at ( 0.25, 1) {\({}^{\vphantom0}_{\vphantom0}y_0\mathrlap,\)};
                    \node (+11) at ( 0.5,  1) {\({}^{\vphantom0}_{\vphantom0}y_1\)};
                    \path [->] (+00.north) edge node[left] {\(\rotatebox{90}{\(\ge\)}\)} (+10.south);
                    \path [->,green] (+01.north) edge node[right] {\(\rotatebox{90}{\(>\)}\)} (+11.south);
                \end{tikzpicture}}}
            \end{gathered} \quad
            \begin{gathered}
                \textcolor{red}{c^{A}_2}\! \vcenter{\hbox{
                \begin{tikzpicture}[xscale = 1.5]
                    \node (+0)  at ( 0,    0) {\(A\)};
                    \node (f0t)  at ( 0.125,    0) {\((\)};
                    \node (f0ttt)  at ( 0.625,    0) {\()\)};
                    \node (+00) at ( 0.25, 0) {\({}^{\vphantom0}_{\vphantom0}x_0\mathrlap,\)};
                    \node (+01) at ( 0.5,  0) {\({}^{\vphantom0}_{\vphantom0}x_1\)};
                    \node (+1)  at ( 0,    1) {\(A\)};
                    \node (f0t)  at ( 0.125,    1) {\((\)};
                    \node (f0ttt)  at ( 0.625,    1) {\()\)};
                    \node (+10) at ( 0.25, 1) {\({}^{\vphantom0}_{\vphantom0}y_0\mathrlap,\)};
                    \node (+11) at ( 0.5,  1) {\({}^{\vphantom0}_{\vphantom0}y_1\)};
                    \path [->,green] (+00.north) edge node[left] {\(\rotatebox{90}{\(>\)}\)} (+10.south);
                \end{tikzpicture}}}
            \end{gathered} \\
            \begin{gathered}
                \textcolor{red}{c^{\makesize[l]{d}{A}}_0}\! \vcenter{\hbox{
                \begin{tikzpicture}[xscale = 1.5]
                    \node (+0)  at ( 0,    0) {\(\makesize[r]dA\)};
                    \node (f0t)  at ( 0.125,    0) {\((\)};
                    \node (f0ttt)  at ( 0.625,    0) {\()\)};
                    \node (+00) at ( 0.25, 0) {\({}^{\vphantom0}_{\vphantom0}x_0\mathrlap,\)};
                    \node (+01) at ( 0.5,  0) {\({}^{\vphantom0}_{\vphantom0}x_1\)};
                    \node (+1)  at ( 0,    1) {\(\makesize[r]dA\)};
                    \node (f0t)  at ( 0.125,    1) {\((\)};
                    \node (f0ttt)  at ( 0.625,    1) {\()\)};
                    \node (+10) at ( 0.25, 1) {\({}^{\vphantom0}_{\vphantom0}y_0\mathrlap,\)};
                    \node (+11) at ( 0.5,  1) {\({}^{\vphantom0}_{\vphantom0}y_1\)};
                    \path [->,out=135,in=-90,green] (+01.north) edge node[left] {\(\rotatebox{90}{\(>\)}\)} (+10.south);
                    \path [->,green] (+01.north) edge node[right] {\(\mathrlap{\rotatebox{90}{\(>\)}}\)} (+11.south);
                \end{tikzpicture}}}
            \end{gathered} \quad
            \begin{gathered}
                \textcolor{red}{c^{\makesize[l]{d}{A}}_1}\! \vcenter{\hbox{
                \begin{tikzpicture}[xscale = 1.5]
                    \node (+0)  at ( 0,    0) {\(\makesize[r]dA\)};
                    \node (f0t)  at ( 0.125,    0) {\((\)};
                    \node (f0ttt)  at ( 0.625,    0) {\()\)};
                    \node (+00) at ( 0.25, 0) {\({}^{\vphantom0}_{\vphantom0}x_0\mathrlap,\)};
                    \node (+01) at ( 0.5,  0) {\({}^{\vphantom0}_{\vphantom0}x_1\)};
                    \node (+1)  at ( 0,    1) {\(\makesize[r]dA\)};
                    \node (f0t)  at ( 0.125,    1) {\((\)};
                    \node (f0ttt)  at ( 0.625,    1) {\()\)};
                    \node (+10) at ( 0.25, 1) {\({}^{\vphantom0}_{\vphantom0}y_0\mathrlap,\)};
                    \node (+11) at ( 0.5,  1) {\({}^{\vphantom0}_{\vphantom0}y_1\)};
                    \path [->,green] (+00.north) edge node[left] {\(\rotatebox{90}{\(>\)}\)} (+10.south);
                    \path [->,green] (+01.north) edge node[right] {\(\mathrlap{\rotatebox{90}{\(>\)}}\)} (+11.south);
                \end{tikzpicture} }}
            \end{gathered} \quad
            \begin{gathered}
                \textcolor{red}{c^{\makesize[l]{d}{A}}_2}\! \vcenter{\hbox{
                \begin{tikzpicture}[xscale = 1.5]
                    \node (+0)  at ( 0,    0) {\(\makesize[r]dA\)};
                    \node (f0t)  at ( 0.125,    0) {\((\)};
                    \node (f0ttt)  at ( 0.625,    0) {\()\)};
                    \node (+00) at ( 0.25, 0) {\({}^{\vphantom0}_{\vphantom0}x_0\mathrlap,\)};
                    \node (+01) at ( 0.5,  0) {\({}^{\vphantom0}_{\vphantom0}x_1\)};
                    \node (+1)  at ( 0,    1) {\(\makesize[r]dA\)};
                    \node (f0t)  at ( 0.125,    1) {\((\)};
                    \node (f0ttt)  at ( 0.625,    1) {\()\)};
                    \node (+10) at ( 0.25, 1) {\({}^{\vphantom0}_{\vphantom0}y_0\mathrlap,\)};
                    \node (+11) at ( 0.5,  1) {\({}^{\vphantom0}_{\vphantom0}y_1\)};
                    \path [->,green] (+00.north) edge node[left] {\(\rotatebox{90}{\(>\)}\)} (+10.south);
                    \path [->,out=45,in=-90,green] (+00.north) edge node[right] {\(\mathrlap{\rotatebox{90}{\(>\)}}\)} (+11.south);
                \end{tikzpicture}}}
            \end{gathered}
        \end{gathered}
    \end{gather*}
    Here the size-change graphs explain the relations between the input \(\overline x\) of the original function, and the input \(\overline y\) given to the function that is called recursively.
    For example, \(+\) has one recursive call in the line \(\sf{suc}(x_0')+x_1\coloneqq\sf{suc}(x_1+x_0')\), and the relations between the inputs of the original function (\(x_0\coloneqq\sf{suc}(x_0')\) and \(x_1\)) and the input of the function that is called recursively (\(y_0\coloneqq x_1\) and \(y_1\coloneqq x_0'\)) are outlined above.
    Note that that all three functions are terminating.
\end{example}
\begin{example}\label{exam:run1}
    We can also consider mutually recursive functions, such as the following two: \begin{align*}
        \begin{aligned}
            f(x_0,x_1)&\coloneqq\textcolor{red}{\underbracket{\textcolor{black}{g(\sf{min}(x_0,x_1))}}_{c^g}}, \\
            g(0)&\coloneqq1, \\
            g(\sf{suc}(x_0'))&\coloneqq\smash{\textcolor{red}{\overbracket{\textcolor{black}{f(x_0',100)}}^{c^f}}+1}.
        \end{aligned} &&
        \begin{gathered}
            \textcolor{red}{c^g}\! \vcenter{\hbox{
            \begin{tikzpicture}[xscale = 1.5]
                \node (+0)  at ( 0,    0) {\(f\)};
                \node (f0t)  at ( 0.125,    0) {\((\)};
                \node (f0ttt)  at ( 0.625,    0) {\()\)};
                \node (+00) at ( 0.25, 0) {\({}^{\vphantom0}_{\vphantom0}x_0\mathrlap,\)};
                \node (+01) at ( 0.5,  0) {\({}^{\vphantom0}_{\vphantom0}x_1\)};
                \node (+1)  at ( 0,    1) {\(g\)};
                \node (f0t)  at ( 0.125,    1) {\((\)};
                \node (f0ttt)  at ( 0.375,    1) {\()\)};
                \node (+10) at ( 0.25, 1) {\({}^{\vphantom0}_{\vphantom0}y_0\)};
                \path [->] (+00.north) edge node[left] {\(\rotatebox{90}{\(\ge\)}\)} (+10.south);
                \path [->,out=90,in=-45] (+01.north) edge node[right] {\(\mathrlap{\rotatebox{90}{\(\ge\)}}\)} (+10.south east);
            \end{tikzpicture}}}
        \end{gathered} &&
        \begin{gathered}
            \textcolor{red}{c^f}\! \vcenter{\hbox{
            \begin{tikzpicture}[xscale = 1.5]
                \node (+0)  at ( 0,    0) {\(g\)};
                \node (f0t)  at ( 0.125,    0) {\((\)};
                \node (f0ttt)  at ( 0.375,    0) {\()\)};
                \node (+00) at ( 0.25, 0) {\({}^{\vphantom0}_{\vphantom0}x_0\)};
                \node (+1)  at ( 0,    1) {\(f\)};
                \node (f0t)  at ( 0.125,    1) {\((\)};
                \node (f0ttt)  at ( 0.625,    1) {\()\)};
                \node (+10) at ( 0.25, 1) {\({}^{\vphantom0}_{\vphantom0}y_0\mathrlap,\)};
                \node (+11) at ( 0.5,  1) {\({}^{\vphantom0}_{\vphantom0}y_1\)};
                \path [->, green] (+00.north) edge node[left] {\(\rotatebox{90}{\(>\)}\)} (+10.south);
            \end{tikzpicture}}}
        \end{gathered}
    \end{align*}
    Moreover, we can consider functions that operate on other inductive types.
    For example, full binary trees have a well-founded partial order given by the subtree-relation.
    The following alternative definition of \(d\) takes full binary trees as input, but gives the same call-graphs that we saw in the previous example: \begin{align*}
        d(\sf{leaf},\sf{leaf})&\coloneqq 1, \\
        d(\sf{node}(x_0^l,x_0^r),\sf{leaf})&\coloneqq0, \\
        d(\sf{leaf},\sf{node}(x_1^l,x_1^r))&\coloneqq0, \\
        d(\sf{node}(x_0^l,x_0^r),\sf{node}(x_1^l,x_1^r))&\coloneqq\smash{\textcolor{red}{\overbracket{\textcolor{black}{d(x_1^l,x_1^r)}}^{c_0^{d}}}+\textcolor{red}{\overbracket{\textcolor{black}{d(x_0^l,x_1^r)}}^{c_1^{d}}}+\textcolor{red}{\overbracket{\textcolor{black}{d(x_0^l,x_0^r)}}^{c_2^{d}}}.}
    \end{align*}
\end{example}

We now generalise the notion of a cyclic call system to our
abstract notion of a cyclic proof system: 
\begin{definition}[cyclic proof system]
    A \emph{proof system} consists of a set $\sf{Judg}$ (the \emph{judgments}), a set $\sf{Rule}$ (the \emph{rule instances}), and for each rule $r\in\sf{Rule}$ an associated finite list of \emph{premises} $\sf{Prem}(r)\in\sf{Judg}^*$ and a \emph{conclusion} $\sf{concl}(r)\in\sf{Judg}$.
    A \emph{cyclic proof system} is a proof system with additional structure: each judgment $A\in \sf{Judg}$ comes with a number of \emph{(trace) objects} $\sf{ob}(A)\in \bb{N}$ and for each rule $r\in \sf{Rul}$ and each premise $\sf{Prem}(r)_i$ we have a \emph{size-change graph} $\sf{Graph}(r)_i$ from $\sf{ob}(\sf{concl}(r))$ to $\sf{ob}(\sf{Prem}(r)_i)$.
\end{definition}

See Section \ref{sec:application} of an example of a cyclic proof system for arithmetic.
Given a (cyclic) proof system, a \emph{derivation} will be a (possibly infinite) tree where every node is labelled by a rule in a compatible way.
For us, a \emph{tree} consists of a set \(\sf{Node}\) (the \emph{nodes}), together with a \emph{root} \(\sf{root}\in\sf{Node}\), and for every \(n\in\sf{Node}\) a non-repeating list of \emph{children} \(\sf{Child}(n)\in\sf{Node}^*\), such that for any node \(n\) there exists a unique path from the root to \(n\).
Here a \emph{path} is a sequence of nodes \((n_i)_{i<m}\) such that \(n_{i+1}\) is a child of \(n_i\), and a \emph{branch} is a maximal path.
We say that \(n\) is \emph{below} \(n'\) if \(n\) appears on the path from the root to \(n'\).
\begin{definition}[derivation]
    A \emph{derivation} is a tree where every node \(n\) has a rule $\sf{rule}(n)\in\sf{Rule}$ such that: \begin{align*}
        |\sf{Prem}(\sf{rule}(n))|&=|\sf{Child}(n)|, \\
        \sf{Prem}(\sf{rule}(n))\makesize{{}_i}{|}&=\sf{concl}(\sf{rule}(\sf{Child}(n)_i)).
    \end{align*}
    We define \(\sf{judg}(n)\coloneqq\sf{concl}(\sf{rule}(n))\).
\end{definition}
A derivation is \emph{regular} if it only has finitely many distinct subderivations up to isomorphism; here a \emph{subderivation} is given by taking an arbitrary node \(n\) as the new root and the subset of nodes above \(n\) as the new nodes, and an \emph{isomorphism} is a bijection that respects the root node, the children, and the rules.
Every finite derivation is a \emph{proof}. Moreover, in a cyclic proof system, we also allow certain infinite derivations:
\begin{definition}[proof]
    A  derivation is a \emph{proof} if it is regular and for every infinite branch, the induced sequence of size-change graphs is progressing.
\end{definition}
The term `cyclic' is justified by the fact that every regular infinite derivation can be represented by a finite subtree with `back-edges'; we define the latter in the form of a \emph{sprout} (or \emph{companion}) \emph{map}. 
\begin{definition}[cyclic representation]
    A \emph{cyclic representation} of a regular derivation consists of a finite set of nodes \(\sf{RepNode}\subseteq\sf{Node}\) such that for every \(n\in\sf{RepNode}\) the path from the root to \(n\) is contained in \(\sf{RepNode}\).
    In addition, we have a subset \(\sf{Bud}\subseteq\sf{RepNode}\) (the \emph{buds}) such that for every \(n\in\sf{RepNode}\): if \(n\in\sf{Bud}\), then none of its children are in \(\sf{RepNode}\), while if \(n\notin\sf{Bud}\), then we call it an \emph{internal} node, and all of its children are in \(\sf{RepNode}\).
    Every bud \(b\in\sf{Bud}\) also has an associated \emph{sprout/companion} \(\sf{sprout}(b)\in\sf{RepNode}\)  that lies strictly below \(b\), such that \(b\) and \(\sf{sprout}(b)\) have isomorphic subderivations (so in particular the same judgment).
\end{definition}
\begin{definition}[induced cyclic proof system]
    Every cyclic call system induces a cyclic proof system: for every function \(f\) we have a judgment \({\downarrow}f\) (read: \emph{\(f\) terminates}) with \(\sf{ob}({\downarrow}f)=\sf{ar}(f)\) and a rule \[
        \prfstack[r]{\(f\),}
            {{\downarrow}g\qquad\text{for every \(c:f\to g\)}}
            {{\downarrow}f}
    \] where for every \(c:f\to g\) the size-change graph between the conclusion \({\downarrow}f\), and the premise \({\downarrow}g\) is \(\sf{graph}(c)\).
    Note that every function has a unique derivation (because every judgment \({\downarrow}f\) is the conclusion of a unique rule \(f\)), and the function is size-change terminating iff this derivation is a proof.
\end{definition}

Because of this, our theorems will deal exclusively with cyclic proof systems, although many of our examples will be given by cyclic call systems.
\section{Induced inductive system}\label{sec:inductive system}

Fix a cyclic proof system $\cal{C}$.
We will define a new (non-cyclic) proof system \(\cal C_{\sf{ind}}\) with an explicit induction rule such that every (possibly infinite) proof in $\cal{C}$ can be transformed into a finite proof in \(\cal C_{\sf{ind}}\) of the same conclusion.
We try to keep \(\cal C_{\sf{ind}}\) a minimal extension of \(\cal C\) by only adding what we absolutely need to describe the size-change termination condition: binary connectives \(\ge\), \(>\), and a minimal amount of connectives from first-order logic.

Formulas in \(\cal C_{\sf{ind}}\) are given by the following first-order grammar: \[
    \phi,\psi,\dots\coloneqq A(\overline x)~|~x>y~|~x\ge y~|~\phi\to\psi~|~\forall x\,\psi.
\] Here \(A\) is a judgment of \(\cal C\), which is viewed as a relation symbol of arity \(\sf{ob}(A)\).
A judgment of \(\cal C_{\sf{ind}}\) is a sequent \(\Gamma\vdash\phi\) where \(\Gamma\) is a list of formulas.
We equate formulas and sequents if they are the same under renaming bound variables, and for convenience we will always assume that bound variables are distinct from each other and the free variables.
One way of making this precise is using De Bruijn indices \cite{deBruin1972}.

\(\cal C_{\sf{ind}}\) contains the following intuitionistic natural deduction rules: \begin{gather*}
    \prfstack[r]{identity,}
        {\phi\vdash\phi} \qquad
    \prfstack[r]{exchange,}
        {\Gamma,\phi,\phi',\Gamma'\vdash\delta}
        {\Gamma,\phi',\phi,\Gamma'\vdash\delta} \\[2ex]
    \prfstack[r]{contraction,}
        {\Gamma,\phi,\phi\vdash\delta}
        {\Gamma,\phi\vdash\delta} \qquad
    \prfstack[r]{weakening,}
        {\Gamma\vdash\delta}
        {\Gamma,\phi\vdash\delta} \\[2ex]
    \begin{aligned}
        &\prfstack[r]{\(\to\)-intro,}
            {\Gamma,\phi\vdash\psi}
            {\Gamma\vdash\phi\to\psi}\, &
        &\prftree[r]{\(\to\)-elim,}
            {\Gamma\vdash\phi\to\psi}
            {\Gamma\vdash\phi}
            {\Gamma\vdash\psi} \\[2ex]
        &\prfstack[r]{\(\forall\)-intro,}
            {\Gamma\vdash\psi}
            {\Gamma\vdash\forall x\,\psi}\, &
        &\prfstack[r]{\(\forall\)-elim.}
            {\Gamma\vdash\forall x\,\psi}
            {\Gamma\vdash\psi[y/x]}
    \end{aligned}
\end{gather*}
In addition, we have rules for the (strict) quasi-orders $>$ and  $\geq$, including an induction rule for $<$:\footnote{We write \(\ge\) and \(>\) instead of the more common \(\le\) and \(<\), because this will always be the direction in which we encounter these connectives. Moreover, a quasi-order (or preorder) is a relation \(\ge\) that is reflexive and transitive. However, when working constructively, being well-founded is defined as `satisfying an induction principle', the formulation of which requires an irreflexive relation such as the relation \(>\) defined by \((x>y)\coloneqq(x\ge y)\land\neg(y\le x)\).
    It therefore makes sense to consider a transitive relation \(>\) that satisfies the induction scheme and define the partial-order \(\ge\) by \((x\ge y)\coloneqq(x>y)\lor(x=y)\), or to take both \(\ge\) and \(>\) as primitives.
    We take the second approach because it is more general and means that we do not have to add \(\lor\) and \(=\) to \(\cal C_{\ind}\).
    Note that a well-quasi-order is stronger than a well-founded quasi-order: classically, the former is a well-founded quasi-order without infinite antichains, while constructively it is better to consider `almost full' relations \cite{hutchison_stop_2012}.
}
\begin{gather*}
    \prfstack[r]{\(\ge\)-refl,}
        {\Gamma\vdash x\ge x} \qquad
    \prftree[r]{\(\ge\)-trans,}
        {\Gamma\vdash x\ge y}
        {\Gamma\vdash y\ge z}
        {\Gamma\vdash x\ge z} \\[2ex]
    \prfstack[r]{\(\ge\)-subsum,}
        {\Gamma\vdash x>y}
        {\Gamma\vdash{x}\ge y} \\[2ex]
    \prfstack[r]{\(>\)-extend\(_0\),}
        {\Gamma\vdash{\makesize xx}\ge y}
        {\Gamma\vdash{\makesize[r]yx}>z}
        {\Gamma\vdash{\makesize xx}>z} \qquad
    \prfstack[r]{\(>\)-extend\(_1\),}
        {\Gamma\vdash{\makesize xx}>y}
        {\Gamma\vdash{\makesize[r]yx}\ge z}
        {\Gamma\vdash{\makesize xx}>z} \\[2ex]
    \prfstack[r,l]{\(>\)-ind\(_x\).}{\(x'\) fresh}
        {\Gamma,\forall x'\,(x>x'\to\phi[x'/x])\vdash\phi}
        {\Gamma\vdash\forall x\,\phi}
\end{gather*}
Lastly, for every rule \(r\) in \(\cal C\) with conclusion $A$, premises $A_0,\dots,A_{n-1}$, and graphs \(G_0,\dots,G_{n-1}\) we have a rule:
\begin{align*}
    \prfstack[r,l]{$r$.}{\(\overline y\) fresh}
        {\Gamma,\overline x\gtrsim_{G_i}\overline y\vdash A_i(\overline y) & \qquad \text{for every $i<n$}}
        {\Gamma\vdash A(\overline{x})}
\end{align*}
Here \(\overline x\gtrsim_{G_i}\overline y\) is a list of the inequalities given by the edges in the call graph \(G_i\): it contains \(x_{j}>y_{j'}\) if \((j> j')\in G_i\), and \(x_{j}\ge y_{j'}\) if \((j\ge j')\in G_i\).
We call the rules of this form the \emph{$\cal{C}$-rules of \(\cal C_\ind\)} and we call all other rules of \(\cal C_\ind\) the \emph{extending rules of \(\cal C_\ind\)}.

Instead of \(>\)-ind\(_x\) itself, we will use the following variation: \begin{align*}
\prfstack[r,l]{\(>\)-ind\(_x'\).}{\(x',\overline z'\) fresh}
    {\Gamma,\forall x',\overline z'\,(x>x'\to(\Gamma\to\delta)[x',\overline z'/x,\overline z])\vdash\delta}
    {\Gamma\vdash\delta}
\end{align*}
These two induction rules are interderivable; one can derive \(>\)-ind\({}_x'\) by applying \(>\)-ind\(_x\) to the formula represented by the entire sequent: \[
    \phi\coloneqq\forall\overline z\,(\Gamma\to\delta).
\]
This variation allows us to do induction on the entire sequent rather than just the conclusion.
Note that \(>\)-ind\(_x\) can only be used in the case where \(x\) does not appear in \(\Gamma\) (because we assume that bound variables are distinct from free variables), while \(>\)-ind\('_x\) can always be used.
In addition, \(>\)-ind\(_x'\) provides a stronger induction hypothesis than \(>\)-ind\(_x\): the induction hypothesis of \(>\)-ind\(_x\) leaves all variables other than \(x\) fixed, while the induction hypothesis of \(>\)-ind\('_x\) can be applied to any variables \(\overline z'\) as long as we can show \(\Gamma[\overline z'/\overline z]\) (which in particular holds for the original variables \(\overline z'\coloneqq\overline z\)).
%%novalidate

\section{General idea and case studies}\label{sec:examples}

We start by considering some examples in which we transform a (possibly infinite) \(\cal C\)-proof \(\pi\) into a finite \(\cal C_\ind\)-proof \(\pi_\ind\).
This allows us to explain the techniques that we use in our general unravelling, and to highlight some of the emerging difficulties and how we can deal with them.

\paragraph{General idea.}
The transformation will proceed in two steps: we first translate $\pi$ into a (possibly infinite) derivation $\pi_\ind^{\sf{inf}}$ in the first-order system $\cal{C}_\sf{ind}$, and then show how the infinite branches of $\smash{\pi_\ind^{\sf{inf}}}$ can be `cut short' by introducing induction hypotheses below, and using them at the leaf node.

The first step is rather straightforward: we decorate the labels of $\pi$ with variables and inequalities in order to obtain the derivation $\pi_\ind^{\sf{inf}}$ in the first-order system $\cal{C}_\sf{ind}$.
Every branch $(n_i)_i$ in $\pi$ with corresponding sequences of judgments \((A_i)_i\), rules \((r_i)_i\), and size-change graphs $(G_i)_i$ induces a branch of $\pi_\ind^{\sf{inf}}$ of the following form:
\begin{align*}
    \prftree[r]{$r_0$.}
        {\prftree[r]{$r_1$}
            {\prftree[r]{$r_2$}
                {\prftree[noline]
                    {\vdots}
                    {\overline{x_0}\gtrsim_{G_0}\overline{x_1},\overline{x_1}\gtrsim_{G_1}\overline{x_2},\overline{x_2}\gtrsim_{G_2}\overline{x_3}\vdash A_3(\overline x_3)}}
                {\overline{x_0}\gtrsim_{G_0}\overline{x_1},\overline{x_1}\gtrsim_{G_1}\overline{x_2}\vdash A_2(\overline x_2)}}
            {\overline{x_0}\gtrsim_{G_0}\overline{x_1}\vdash A_1(\overline{x_1})}}
        {{}\vdash A_0(\overline{x_0})}
\end{align*}

For the second step, we will use a cyclic representation \(\pi_\sf{cyc}\) of $\pi$, and use this to define \(\pi_\ind\): we will introduce induction hypotheses at sprouts, and apply the induction hypothesis at buds.
Choosing a suitable cyclic representation will be the main difficulty in the proof, and we will use the rest of this section to describe some cyclic representations that work and some that do not, thus illustrating what we look for in such a representation. Each of the following examples will highlight the importance of one particular aspect of our transformation.

\begin{example}[importance of progress]\label{exam:importprogres}
    Consider the swapped addition function from Example \ref{exam:run}.
    It induces a cyclic proof system consisting of a unique judgment \({\downarrow}+\) (\(+\)~terminates), a unique rule \(+\), and a unique proof \(\pi\): \begin{align*}
        \begin{aligned}
            0+x_1&\coloneqq x_1, \\
            \sf{suc}(x_0')+x_1&\coloneqq\sf{suc}(\textcolor{red}{\underbracket{\textcolor{black}{x_1+x_0'}}_{c^+}}),
        \end{aligned} &&
        \begin{gathered}
            \textcolor{red}{c^+}\! \vcenter{\hbox{
            \begin{tikzpicture}[xscale = 1.5]
                \node (+0)  at ( 0,    0) {\(+\)};
                \node (+00) at (-0.25, 0) {\({}^{\vphantom0}_{\vphantom0}x_0\)};
                \node (+01) at ( 0.25, 0) {\({}^{\vphantom0}_{\vphantom0}x_1\)};
                \node (+1)  at ( 0,    1) {\(+\)};
                \node (+10) at (-0.25, 1) {\({}^{\vphantom0}_{\vphantom0}y_0\)};
                \node (+11) at ( 0.25, 1) {\({}^{\vphantom0}_{\vphantom0}y_1\)};
                \path [->,out= 75,in=-105,green] (+00.north) edge node[very near start,left] {\(\rotatebox{90}{\(>\)}\)} (+11.south);
                \path [->,out=105,in= -75]       (+01.north) edge node[very near start,right] {\(\rotatebox{90}{\(\ge\)}\)} (+10.south);
            \end{tikzpicture}}}
        \end{gathered} &&
        \vcenter{\hbox{\prftree[r]{\(+\).}
            {\prftree[noline]
                {\vdots}
                {{\downarrow}{+}}}
            {{\downarrow}{+}}}}
    \end{align*}
    Note that the infinite derivation \(\pi_\ind^{\sf{inf}}\) is of the following form: \[
        \vcenter{\hbox{
            \prftree[r]{$+$}
                {\prftree[r]{$+$}
                    {\prftree
                        {\vdots}
                        {x_{0}>y_{1},x_{1}\ge y_{0},y_{0}>z_{1},y_{1}\ge z_{0}\vdash{\downarrow}(z_{0}+z_{1})}}
                    {x_0>y_1,x_1\ge y_0\vdash{\downarrow}(y_{0}+y_{1})}}
                {{}\vdash{\downarrow}(x_{0}+x_{1})}
        }}
    \]
    The proof \(\pi\) has (among others) the following two cyclic representations, where we have visualized the size-change graphs: \begin{align*}
        \vcenter{\hbox{\begin{tikzpicture}[xscale = 1.5]
            \node (+0)  at ( 0,    0) {\(+\)};
            \node (+0t)  at (-0.375,    0) {\({\downarrow}(~\;\)};
            \node (+0ttt)  at ( 0.375,    0) {\()\)};
            \node (+00) at (-0.25, 0) {\(x_{0}\)};
            \node (+01) at ( 0.25,  0) {\(x_{1}\)};
            \node (+1)  at ( 0,    1) {\(+\)};
            \node (+1t)  at (-0.375,    1) {\({\downarrow}(~\;\)};
            \node (+1ttt)  at ( 0.375,    1) {\()\)};
            \node (+10) at (-0.25, 1) {\(y_{0}\)};
            \node (+11) at ( 0.25,  1) {\(y_{1}\)};
            \path [->,out= 75,in=-105,green] (+00.north) edge node[very near start,left] {\(\rotatebox{90}{\(>\)}\)} (+11.south);
            \path [->,out=105,in= -75]       (+01.north) edge node[very near start,right] {\(\rotatebox{90}{\(\ge\)}\)} (+10.south);
            \path [->,in=180,out=180, blue, dashed] (+1t) edge (+0t);
        \end{tikzpicture}}}& &
        \vcenter{\hbox{\begin{tikzpicture}[xscale = 1.5]
            \node (+0)  at ( 0,    0) {\(+\)};
            \node (+0t)  at (-0.375,    0) {\({\downarrow}(~\;\)};
            \node (+0ttt)  at ( 0.375,    0) {\()\)};
            \node (+00) at (-0.25, 0) {\(x_{0}\)};
            \node (+01) at ( 0.25,  0) {\(x_{1}\)};
            \node (+1)  at ( 0,    1) {\(+\)};
            \node (+1t)  at (-0.375,    1) {\({\downarrow}(~\;\)};
            \node (+1ttt)  at ( 0.375,    1) {\()\)};
            \node (+10) at (-0.25, 1) {\(y_{0}\)};
            \node (+11) at ( 0.25,  1) {\(y_{1}\)};
            \node (+2)  at ( 0,    2) {\(+\)};
            \node (+2t)  at (-0.375,    2) {\({\downarrow}(~\;\)};
            \node (+2ttt)  at ( 0.375,    2) {\()\)};
            \node (+20) at (-0.25, 2) {\(z_{0}\)};
            \node (+21) at ( 0.25,  2) {\(z_{1}\)};
            \path [->,out= 75,in=-105,green] (+00.north) edge node[very near start,left] {\(\rotatebox{90}{\(>\)}\)} (+11.south);
            \path [->,out=105,in= -75]       (+01.north) edge node[very near start,right] {\(\rotatebox{90}{\(\ge\)}\)} (+10.south);
            \path [->,out= 75,in=-105,green] (+10.north) edge node[very near start,left] {\(\rotatebox{90}{\(>\)}\)} (+21.south);
            \path [->,out=105,in= -75]       (+11.north) edge node[very near start,right] {\(\rotatebox{90}{\(\ge\)}\)} (+20.south);
            \path [->,in=0,out=0, blue, dashed] (+2ttt) edge (+0ttt);
        \end{tikzpicture}}}
    \end{align*}
    The left representation can be used to turn the infinite derivation \(\pi_\ind^{\sf{inf}}\) into a proof by introducing an induction hypothesis at the sprout and applying it at the bud: \begin{align*}
        \vcenter{\hbox{
            \prftree[r]{\(>\)-ind\('_{x_0}\),}
                {\prftree[r]{$+$}
                    {\prftree[r]{$+$}
                        {\prftree
                            {\text{(use \(\sf{hyp}(x_0)\) on \(x_{0}'\coloneqq z_{0}\) and \(x_{1}'\coloneqq z_{1}\))}}
                            {\sf{hyp}(x_0),x_{0}>y_{1},x_{1}\ge y_{0},y_{0}>z_{1},y_{1}\ge z_{0}\vdash{\downarrow}(z_{0}+z_{1})}}
                        {\sf{hyp}(x_0),x_0>y_1,x_1\ge y_0\vdash{\downarrow}(y_{0}+y_{1})}}
                    {\sf{hyp}(x_0)\vdash{\downarrow}(x_{0}+x_{1})}}
                {{}\vdash{\downarrow}(x_{0}+x_{1})}
        }}
    \end{align*}
    where \(\sf{hyp}(x_0)\coloneqq\forall x_0'x_1'\,(x_0'<x_0\to{\downarrow}(x_0'+x_1'))\).
    This is because there is an input that has made progress before the recursive call: the first variable at the bud (\(z_0\)) is strictly smaller than the first variable at the sprout (\(x_0\)).
    In fact, the same is true for the second variable, so we could have introduced an induction hypothesis for that one as well.
    The first cyclic representation does not have such an input that has made progress, and is therefore not suitable to be transformed into a proof by induction.
\end{example}

\begin{example}[importance of induction order]\label{exam:importorder}
    Consider the Ackermann function from Example \ref{exam:run}, whose termination proof has the following cyclic representation (we leave out one subtree because it is the same as the one on the left): \begin{align*}
        \begin{tikzpicture}[xscale = 1.75]
            \node (A0)  at ( 0,    0) {\({\downarrow}A\;\)};
            \node (A0t)  at ( 0.125,    0) {\((\)};
            \node (A0ttt)  at ( 0.625,    0) {\()\)};
            \node (A00) at ( 0.25, 0) {\({}^{\vphantom0}_{\vphantom0}x_0\mathrlap,\)};
            \node (A01) at ( 0.5,  0) {\({}^{\vphantom0}_{\vphantom0}x_1\)};
            \node (A1)  at (-1,    1) {\({\downarrow}A\;\)};
            \node (A1t)  at (-0.875,    1) {\((\)};
            \node (A1ttt)  at (-0.375,    1) {\()\)};
            \node (A10) at (-0.75, 1) {\({}^{\vphantom0}_{\vphantom0}y_0\mathrlap,\)};
            \node (A11) at (-0.5,  1) {\({}^{\vphantom0}_{\vphantom0}y_1\)};
            \node (A1')  at ( 0,    1) {\({\downarrow}A\;\)};
            \node (A1t')  at ( 0.125,    1) {\((\)};
            \node (A1ttt')  at ( 0.625,    1) {\()\)};
            \node (A10') at ( 0.25, 1) {\({}^{\vphantom0}_{\vphantom0}y_0\mathrlap,\)};
            \node (A11') at ( 0.5,  1) {\({}^{\vphantom0}_{\vphantom0}y_1\)};
            \path [->, out=180, in=-60, green] (A00.north) edge node[sloped,below, near end] {\(<\)} (A10.south);
            \path [->] (A00.north) edge node[sloped,above] {\(\ge\)} (A10'.south);
            \path [->, green] (A01.north) edge node[sloped,below] {\(>\)} (A11'.south);
            \path [->,in=180,out=180, blue, dashed, in looseness=2] (A1) edge node[below left] {\(x_0\)} (A0);
            \path [->,in=0,out=0, red, dashed, looseness=1] (A1ttt') edge node[right] {\(x_1\)} (A0ttt);
        \end{tikzpicture}
    \end{align*}
    Notice that both back-edges make progress, but on a different input.
    We can introduce an induction hypothesis for both \(x_0\) and \(x_1\), but the order in which we do is important.
    If we introduce the induction hypothesis for \(x_0\) first, then we can translate to: \begin{align*}
        \prftree[r]{\(>\)-ind\('_{x_0}\),}
            {\prftree[r]{\(>\)-ind\('_{x_1}\)}
                {\prftree[r]{\(A\)}
                    {\prftree
                        {\!\!\!\text{(use \(\sf{hyp}_0(x_0)\) on \(x_i'\coloneqq y_i\))}\!\!\!}
                        {\dots,x_0>y_0\vdash{\downarrow}A(y_0,y_1)}\!\!}
                    {\!\!\prftree
                        {\text{(use \(\sf{hyp}_1(x_1)\) on \(x_i'\coloneqq y_i\))}}
                        {\dots,x_0\ge y_0,x_1>y_1\vdash{\downarrow}A(y_0,y_1)}}
                    {\sf{hyp}_0(x_0),\sf{hyp}_1(x_1)\vdash{\downarrow}A(x_0,x_1)}}
                {\sf{hyp}_0(x_0)\vdash{\downarrow}A(x_0,x_1)}}
            {{}\vdash{\downarrow}A(x_0,x_1)}
    \end{align*} where the induction hypotheses are \begin{align*}
        \sf{hyp}_0(x_0)&\coloneqq\forall x_0'\,\forall x_1'\,(x_0'<x_0\to{\downarrow}A(x_0',x_1')), \\
        \sf{hyp}_1(x_1)&\coloneqq\forall x_0'\,\forall x_1'\,(x_1'<x_1\to\sf{hyp}_0(x_0')\to{\downarrow}A(x_0',x_1')).
    \end{align*}
    We see that \(\sf{hyp}_1(x_1)\) has an additional restriction: we can only use it on \(x_0',x_1'\) if we can show \(\sf{hyp}_0(x_0')\), which we can generally only do if \(x_0'\le x_0\) (since \(\sf{hyp}_0(x_0)\) only has one free variable \((x_0)\), which functions as an upper bound).
    This `induction order' works for the Ackermann function, while the other order of adding induction hypothesis would not work: the back-edge for \(x_1\) preserves \(x_0\), but the back-edge for \(x_1\) does not preserve \(x_0\).
\end{example}

\begin{example}[importance of forgetting]\label{exam:importforget}
    In the previous example, we saw the following principle: the more induction hypotheses we have already assumed, the weaker new induction hypotheses will have to be.
    This is because ind\({}'_x\) does induction on the current sequent \(\Gamma\vdash\delta\), and if the list of assumptions \(\Gamma\) is stronger, then the sequent represents an implication that is weaker.
    So, if we know that an induction hypothesis is no longer needed in a subtree, then it makes sense to forget it.
    For the same reason, it makes sense to forget inequalities about variables that are no longer relevant.
    Consider the following two forms of cyclic representations: \begin{align*}
        \vcenter{\hbox{\begin{tikzpicture}[scale = 0.5]
            \node (n0)  at (-1, 0) {\(\bullet\)};
            \node (n1)  at (-2, 1) {\(\bullet\)};
            \node (n2)  at (-3, 2) {\(\bullet\)};
            \node (n3)  at (-2, 3) {\(\bullet\)};
            \node (n3') at (-4, 3) {\(\bullet\)};
            \node (n4)  at (-5, 4) {\(\bullet\)};
            \path (n0)  edge (n1);
            \path (n1)  edge (n2);
            \path (n2)  edge (n3);
            \path (n2)  edge (n3');
            \path (n3') edge (n4);
            \path [->,out=0,in=0, blue, dashed] (n3) edge node[right] {\(x_0\)} (n0);
            \path [->,out=180,in=180, red, dashed] (n4) edge node[below left] {\(x_1\)} (n1);
        \end{tikzpicture}}} & & &
        \vcenter{\hbox{\begin{tikzpicture}[scale = 0.5]
            \node (n0)  at (-1, 0) {\(\bullet\)};
            \node (n1)  at (-2, 1) {\(\bullet\)};
            \node (n2)  at (-3, 2) {\(\bullet\)};
            \node (n3)  at (-2, 3) {\(\bullet\)};
            \node (n3') at (-4, 3) {\(\bullet\)};
            \node (n4)  at (-5, 4) {\(\bullet\)};
            \path (n0)  edge (n1);
            \path (n1)  edge (n2);
            \path (n2)  edge (n3);
            \path (n2)  edge (n3');
            \path (n3') edge (n4);
            \path [->,out=0,in=0, blue, dashed] (n3) edge node[right] {\(x_0\)} (n0);
            \path [->,out=180,in=180, red, dashed] (n4) edge node[below left] {\(x_1\)} (n3');
        \end{tikzpicture}}}
    \end{align*}
    For both, if we want to transform them into finite \(\cal C_\ind\)-proofs, we have to introduce an induction hypothesis at the root for \(x_0\).
    However, in the second example, we can forget the induction hypothesis for \(x_0\) (using the weakening rule) before introducing the induction hypothesis for \(x_1\), because we will not need it higher in the corresponding subtree.
    This is not possible in the first example because the sprout for \(x_1\) is situated on the path between the bud and sprout of \(x_0\), so in this example the back-edge for \(x_1\) has to `preserve' \(x_0\) in some way that we will specify later.
\end{example}

\begin{example}[importance of unfolding]\label{exam:importunfold}
    How can we find a suitable cyclic representation?
    One main technique is `unfolding' an existing representation: we replace one bud node by a copy of the subtree that starts at its sprout.
    For the buds we have two options when picking their sprout: we can either keep the original sprout, or, if this sprout has been copied, we may also choose the copied sprout.
    The graphic below showcases two possible unfoldings of the \(x_0\) back-edge in the middle: \[\begin{tikzpicture}[scale = 0.5]
        \node (rsprout)     at ( 1, 0) {\(\bullet\)};
        \node (gsprout)     at ( 2, 1) {\(\bullet\)};
        \node (rbud)        at ( 3, 2) {\(\bullet\)};
        \node (gbud)        at ( 1, 2) {\(\bullet\)};
         
        \path (rsprout) edge (gsprout);
        \path (gsprout) edge (rbud);
        \path (gsprout) edge (gbud);
        \path [->,out=0,in=0, blue, dashed] (rbud) edge node[below right, near end] {\(x_0\)} (rsprout);
        \path [->,out=180,in=180, red, dashed] (gbud) edge node[above, very near start] {\(x_1\)} (gsprout);
    
        \node (rto)         at ( 4, 1) {\(\rightsquigarrow\)};
        \node (lto)         at ( 0, 1) {\(\leftsquigarrow\)};
    
        \node (rsprout)     at ( 5, 0) {\(\bullet\)};
        \node (gsprout)     at ( 6, 1) {\(\bullet\)};
        \node (rbud)        at ( 7, 2) {\(\bullet\)};
        \node (gbud)        at ( 5, 2) {\(\bullet\)};
        \node (gsprout')    at ( 8, 3) {\(\bullet\)};
        \node (rbud')       at ( 9, 4) {\(\bullet\)};
        \node (gbud')       at ( 7, 4) {\(\bullet\)};
        
        \path (rsprout) edge (gsprout);
        \path (gsprout) edge (rbud);
        \path (gsprout) edge (gbud);
        \path (rbud) edge (gsprout');
        \path (gsprout') edge (rbud');
        \path (gsprout') edge (gbud');
        \path [->,out=180,in=180, red, dashed] (gbud) edge node[above, very near start] {\(x_1\)} (gsprout);
        \path [->,out=0,in=0, blue, dashed] (rbud') edge node[below right] {\(x_0\)} (rbud);
        \path [->,out=180,in=90, red, dashed] (gbud') edge node[above, very near start] {\(x_1\)} (gsprout);
        
        \node (rsprout)     at (-5, 0) {\(\bullet\)};
        \node (gsprout)     at (-4, 1) {\(\bullet\)};
        \node (rbud)        at (-3, 2) {\(\bullet\)};
        \node (gbud)        at (-5, 2) {\(\bullet\)};
        \node (gsprout')    at (-2, 3) {\(\bullet\)};
        \node (rbud')       at (-1, 4) {\(\bullet\)};
        \node (gbud')       at (-3, 4) {\(\bullet\)};
        
        \path (rsprout) edge (gsprout);
        \path (gsprout) edge (rbud);
        \path (gsprout) edge (gbud);
        \path (rbud) edge (gsprout');
        \path (gsprout') edge (rbud');
        \path (gsprout') edge (gbud');
        \path [->,out=180,in=180, red, dashed] (gbud) edge node[left, near start] {\(x_1\)} (gsprout);
        \path [->,out=0,in=0, blue, dashed] (rbud') edge node[below right, near end] {\(x_0\)} (rsprout);
        \path [->,out=180,in=180, red, dashed] (gbud') edge node[above, very near start] {\(x_1\)} (gsprout');
    \end{tikzpicture}\]
    This allows us to control the order in which sprouts appear in the tree, which impacts the strength of their induction hypotheses.
\end{example}
%%novalidate

\newcommand{\hcancel}[1]{%
    \tikz[baseline=(tocancel.base)]{
        \node[red, inner sep=0pt,outer sep=0pt] (tocancel) {\(\vphantom|\smash{#1}\)};
        \draw[red, line width = 0.75pt] (tocancel.south west) -- (tocancel.north east);
    }%
}%

\section{An annotated cyclic representation}\label{sec:safra}
Having obtained some intuition, we will start with the general unravelling.
We fix a (possibly infinite) $\cal{C}$-proof $\pi$ that we want to turn into a finite \(\cal C_{\sf{ind}}\)-proof \(\pi_\ind\).
To find a suitable representation automatically, we take inspiration from a well-known strategy in cyclic proof theory: we annotate our sequents with \emph{stacks} of \emph{names} to obtain \emph{reset proofs}, which have a local rather than a global soundness condition.
This means that the soundness condition will only depend on the paths between the sprouts and their corresponding buds rather than all possible paths through the proof tree.

We follow the approach of Wehr \cite{wehr2025cyclic}, with one important technical change: we ask for a name to be uniformly covered instead of just covered before we allow a reset.
This modification will allow us to deal with the fact that \(\ge\) is only a quasi-order and not necessarily linear; this will be further discussed in Section \ref{sec:conclusion}.

\subsection{Reset proofs}

To obtain a useful representation, we will generate additional data for every node in $\pi$.
For every node $n_k$ with path \(n_0,\dots,n_k\) from the root, we consider the corresponding judgment in \(\pi_\ind^{\sf{inf}}\): \[
    \overline{x_0}\gtrsim_{G_0}\dots\gtrsim_{G_{k-1}}\overline{x_k}\vdash A_k(\overline x_k),
\] where we are particularly interested in the set of variables \[
    \sf{Var}(n_k)\coloneqq\{x_{i,j}:i\le k,j<\sf{ob}(\sf{judg}(n_i))\},
\] and the relations \(\ge\) and \(>\) on \(\sf{Var}(n_k)\) defined by: \begin{itemize}
    \item $x_{i,j}\geq x_{i',j'}$ iff $j$ and $j'$ are connected by a path through the~sequence of call-graphs $G_i,\dots,G_{i'-1}$, and
    \item $x_{i,j}> x_{i',j'}$ iff $j$ and $j'$ are connected by a path through \(G_i,\dots,G_{i'-1}\) with at least one progressing edge.
\end{itemize}
Note that if we have $x_{i,j}\geq x_{i',j'}$ or $x_{i,j}> x_{i',j'}$ according to this definition, then this inequality can be proven in \(\cal C_\ind\) from the assumptions \(
    \overline{x_0}\gtrsim_{G_0}\dots\gtrsim_{G_{k-1}}\overline{x_k}
\).

If \(x_{i,j}\ge x_{i',j'}\), then we call \(x_{i,j}\) an \emph{ancestor} of \(x_{i',j'}\); if the inequality is strict, then we call \(x_{i,j}\) a \emph{strict ancestor} of \(x_{i',j'}\).
In addition, if \((i,j)\) is (strictly) earlier in the lexicographical ordering than \((i',j')\), then we call \(x_{i,j}\) \emph{(strictly) older} or \emph{(strictly) more important} than \(x_{i',j'}\).
Note that any ancestor is older, while older variables are only ancestors if they are connected by a path in the size-change graphs.

If we follow an infinite branch \(n_0,n_1,\dots\), then the set of variables and (strict) inequalities will only increase.
We will keep track of a bounded amount of this data, in the form of \emph{annotations}.
These annotations will be used to determine when it is a good idea to introduce a cycle, namely, when the current node and an earlier node have the same subderivation and annotations.
That only the same subderivation is not enough was shown in Example \ref{exam:importprogres} (importance of progress).
In addition, it will later be used to determine the amount of data that we have to remember, and which data we have to forget, see Example \ref{exam:importforget} (importance of forgetting).

For every node \(n_i\), its annotations will keep track of relevant ancestors of the newly introduced variables \(\overline{x_i}\) in the way that someone suffering from dementia might: sometimes we forget newer ancestors while we keep remembering older (more important) ancestors.
To keep the number of possible annotations finite, throughout the possibly infinite tree $\pi$ we will only use finitely many names from some fixed set \(\sf{Name}\coloneqq\{a_0,a_1,\dots\}\) to refer to these ancestors; the idea is here that names can be reused once the previous variable carrying that name is no longer relevant.

\begin{figure*}[!ht]
    \begin{center}
        \begin{tikzpicture}[xscale = 1.5]
            \node (+0)  at ( 0,    0) {\(+\)};
            \node (+0t)  at (-0.375,    0) {\({\downarrow}(~\;\)};
            \node (+0ttt)  at ( 0.375,    0) {\()\)};
            \node (+00) at (-0.25, 0) {\(a^{\vphantom0}_{\vphantom0}\)};
            \node (+01) at ( 0.25, 0) {\(b^{\vphantom0}_{\vphantom0}\)};
            \node (+1)  at ( 0,    1) {\(+\)};
            \node (+2t)  at (-0.375,    1) {\({\downarrow}(~\;\)};
            \node (+2ttt)  at ( 0.375,    1) {\()\)};
            \node (+10) at (-0.25, 1) {\(b^{\vphantom0}_{\vphantom0}\)};
            \node (+11) at ( 0.25, 1) {\(a\hcancel c^{\vphantom0}_{\vphantom0}\)};
            \node (+2)  at ( 0,    2) {\(+\)};
            \node (+2t)  at (-0.375,    2) {\({\downarrow}(~\;\)};
            \node (+2ttt)  at ( 0.375,    2) {\()\)};
            \node (+20) at (-0.25, 2) {\(a^{\vphantom0}_{\vphantom0}\)};
            \node (+21) at ( 0.25, 2) {\(b\hcancel c^{\vphantom0}_{\vphantom0}\)};
            \path [->,out= 75,in=-105,green] (+00.north) edge node[very near start,left] {\(\rotatebox{90}{\(>\)}\)} (+11.south);
            \path [->,out=105,in=-75]       (+01.north) edge node[very near start,right] {\(\rotatebox{90}{\(\ge\)}\)} (+10.south);
            \path [->,out= 75,in=-105,green] (+10.north) edge node[very near start,left] {\(\rotatebox{90}{\(>\)}\)} (+21.south);
            \path [->,out=105,in=-75]       (+11.north) edge node[very near start,right] {\(\rotatebox{90}{\(\ge\)}\)} (+20.south);
            \path [->,in=180,out=180, blue, dashed] (+2t) edge node[left] {\(b\)} (+0t);
        \end{tikzpicture} \hspace*{1cm}
        \begin{tikzpicture}[xscale = 1.75]
            \node (A0)  at ( 0,    0) {\({\downarrow}A\;\)};
            \node (A0t)  at ( 0.125,    0) {\((\)};
            \node (A0tt)  at ( 0.375,    0) {\(,^{\vphantom0}_{\vphantom0}\)};
            \node (A0ttt)  at ( 0.625,    0) {\()\)};
            \node (A00) at ( 0.25, 0) {\(a^{\vphantom0}_{\vphantom0}\)};
            \node (A01) at ( 0.5,  0) {\(b^{\vphantom0}_{\vphantom0}\)};
            \node (A1)  at (-1,    1) {\({\downarrow}A\;\)};
            \node (A1t)  at (-0.875,    1) {\((\)};
            \node (A1tt)  at (-0.625,    1) {\(,^{\vphantom0}_{\vphantom0}\)};
            \node (A1ttt)  at (-0.375,    1) {\()\)};
            \node (A10) at (-0.75, 1) {\(a\hcancel{c}^{\vphantom0}_{\vphantom0}\)};
            \node (A11) at (-0.5,  1) {\(d^{\vphantom0}_{\vphantom0}\)};
            \node (A1')  at ( 0,    1) {\({\downarrow}A\;\)};
            \node (A1't)  at ( 0.125,    1) {\((\)};
            \node (A1'tt)  at ( 0.375,    1) {\(,^{\vphantom0}_{\vphantom0}\)};
            \node (A1'ttt)  at ( 0.625,    1) {\()\)};
            \node (A10') at ( 0.25, 1) {\(a^{\vphantom0}_{\vphantom0}\)};
            \node (A11') at ( 0.5,  1) {\(b\hcancel{d}^{\vphantom0}_{\vphantom0}\)};
            \node (A2)  at (-2,    2) {\({\downarrow}A\;\)};
            \node (A2t)  at (-1.875,    2) {\((\)};
            \node (A2tt)  at (-1.625,    2) {\(,^{\vphantom0}_{\vphantom0}\)};
            \node (A2ttt)  at (-1.375,    2) {\()\)};
            \node (A20) at (-1.75, 2) {\(a\hcancel{c}^{\vphantom0}_{\vphantom0}\)};
            \node (A21) at (-1.5,  2) {\(b^{\vphantom0}_{\vphantom0}\)};
            \node (A2')  at (-1,    2) {\({\downarrow}A\;\)};
            \node (A2't)  at (-0.875,    2) {\((\)};
            \node (A2'tt)  at (-0.625,    2) {\(,^{\vphantom0}_{\vphantom0}\)};
            \node (A2'ttt)  at (-0.375,    2) {\()\)};
            \node (A20') at (-.75, 2) {\(a^{\vphantom0}_{\vphantom0}\)};
            \node (A21') at (-0.5,  2) {\(d\hcancel{b}^{\vphantom0}_{\vphantom0}\)};
            \path [->, out=180, in=-60, green] (A00.north) edge node[sloped,below, near end] {\(<\)} (A10.south);
            \path [->] (A00.north) edge node[sloped,above] {\(\ge\)} (A10'.south);
            \path [->, green] (A01.north) edge node[sloped,below] {\(>\)} (A11'.south);
            \path [->, out=180, in=-60, green] (A10.north) edge node[sloped,below, near end] {\(<\)} (A20.south);
            \path [->] (A10.north) edge node[sloped,above] {\(\ge\)} (A20'.south);
            \path [->, green] (A11.north) edge node[sloped,below] {\(>\)} (A21'.south);
            \path [->,in=180,out=180, blue, dashed, in looseness=2] (A2) edge node[below left] {\(a\)} (A0);
            \path [->,in=30,out=-30, blue, dashed, looseness=1] (A1'ttt) edge node[above right] {\(b\)} (A0ttt);
            \path [->,in=30,out=-30, blue, dashed, looseness=1] (A2'ttt) edge node[above right] {\(d\)} (A1ttt);
        \end{tikzpicture}\hspace*{2cm}
        \begin{tikzpicture}[xscale = 1.75]
            \node (f0)  at (0,    0) {\({\downarrow}f\;\)};
            \node (f0t)  at ( 0.125,    0) {\((\)};
            \node (f0tt)  at ( 0.375,    0) {\(,^{\vphantom0}_{\vphantom0}\)};
            \node (f0ttt)  at ( 0.625,    0) {\()\)};
            \node (f00) at (0.25, 0) {\(a^{\vphantom0}_{\vphantom0}\)};
            \node (f01) at (0.5,  0) {\(b^{\vphantom0}_{\vphantom0}\)};
            \node (g1)  at (0,    1) {\({\downarrow}g\,\)};
            \node (g1t)  at ( 0.125,    1) {\((\)};
            \node (g1tt)  at ( 0.375,    1) {\()\)};
            \node (g10) at (0.25, 1) {\(a^{\vphantom0}_{\vphantom0}\)};
            \node (f2)  at (0,    2) {\({\downarrow}f\;\)};
            \node (f2t)  at ( 0.125,    2) {\((\)};
            \node (f2tt)  at ( 0.375,    2) {\(,^{\vphantom0}_{\vphantom0}\)};
            \node (f2ttt)  at ( 0.625,    2) {\()\)};
            \node (f20) at (0.25, 2) {\(a\hcancel{c}^{\vphantom0}_{\vphantom0}\)};
            \node (f21) at (0.5,  2) {\(b^{\vphantom0}_{\vphantom0}\)};
            \path [->] (f00.north) edge node[left] {\(\rotatebox{90}{\(\ge\)}\)} (g10.south);
            \path [->,out=90,in=-45] (f01.north) edge node[right] {\(\mathrlap{\rotatebox{90}{\(\ge\)}}\)} (g10.south east);
            \path [->,green] (g10.north) edge node[left] {\(\rotatebox{90}{\(>\)}\)} (f20.south);
            \path [->,in=0,out=0, blue, dashed, looseness=1] (f2ttt) edge node[right] {\(a\)} (f0ttt);
        \end{tikzpicture} \\
        \begin{tikzpicture}[xscale = 2.25]
            \node (T0)  at ( 0,    0) {\({\downarrow}d\)};
            \node (T0,)  at ( 0.125,    0) {\((\)};
            \node (T0,,)  at ( 0.375,    0) {\(,^{\vphantom0}_{\vphantom0}\)};
            \node (T0ttt)  at ( 0.625,    0) {\()\)};
            \node (T00) at ( 0.25, 0) {\(a^{\vphantom0}_{\vphantom0}\)};
            \node (T01) at ( 0.5,  0) {\(b^{\vphantom0}_{\vphantom0}\)};
            \node (T1)  at (-1,    1) {\({\downarrow}d\)};
            \node (T1,)  at (-0.875, 1) {\((\)};
            \node (T1,,)  at (-0.625, 1) {\(,^{\vphantom0}_{\vphantom0}\)};
            \node (T1ttt)  at (-0.375, 1) {\()\)};
            \node (T10) at (-0.75, 1) {\(ac^{\vphantom0}_{\vphantom0}\)};
            \node (T11) at (-0.5,  1) {\(ad^{\vphantom0}_{\vphantom0}\)};
            \node (T1')  at ( 0,    1) {\({\downarrow}d\)};
            \node (T1',)  at (0.125, 1) {\((\)};
            \node (T1',,)  at (0.375, 1) {\(,^{\vphantom0}_{\vphantom0}\)};
            \node (T1'ttt)  at (0.675, 1) {\()\)};
            \node (T10') at ( 0.25, 1) {\(a\hcancel c^{\vphantom0}_{\vphantom0}\)};
            \node (T11') at ( 0.5,  1) {\(b\hcancel d^{\vphantom0}_{\vphantom0}\)};
            \node (T2)  at (-2,    2) {\({\downarrow}d\)};
            \node (T2,)  at (-1.875, 2) {\((\)};
            \node (T2,,)  at (-1.625, 2) {\(,^{\vphantom0}_{\vphantom0}\)};
            \node (T2ttt)  at (-1.375, 2) {\()\)};
            \node (T20) at (-1.75, 2) {\(a\hcancel{cb}^{\vphantom0}_{\vphantom0}\)};
            \node (T21) at (-1.5,  2) {\(a\hcancel{ce}^{\vphantom0}_{\vphantom0}\)};
            \node (T2')  at (-1,    2) {\({\downarrow}d\)};
            \node (T2',)  at (-0.875, 2) {\((\)};
            \node (T2',,)  at (-0.625, 2) {\(,^{\vphantom0}_{\vphantom0}\)};
            \node (T2'ttt)  at (-0.375, 2) {\()\)};
            \node (T20') at (-0.75, 2) {\(ac\hcancel b^{\vphantom0}_{\vphantom0}\)};
            \node (T21') at (-0.5,  2) {\(ad\hcancel e^{\vphantom0}_{\vphantom0}\)};
            \node (T3)  at (-4,    3) {\({\downarrow}d\)};
            \node (T3,)  at (-3.875, 3) {\((\)};
            \node (T3,,)  at (-3.625, 3) {\(,^{\vphantom0}_{\vphantom0}\)};
            \node (T3ttt)  at (-3.375, 3) {\()\)};
            \node (T30) at (-3.75, 3) {\(ab^{\vphantom0}_{\vphantom0}\)};
            \node (T31) at (-3.5,  3) {\(ac^{\vphantom0}_{\vphantom0}\)};
            \node (T3')  at (-2,    3) {\({\downarrow}d\)};
            \node (T3',)  at (-1.875, 3) {\((\)};
            \node (T3',,)  at (-1.625, 3) {\(,^{\vphantom0}_{\vphantom0}\)};
            \node (T3'ttt)  at (-1.375, 3) {\()\)};
            \node (T30') at (-1.75, 3) {\(ab^{\vphantom0}_{\vphantom0}\)};
            \node (T31') at (-1.5,  3) {\(ac^{\vphantom0}_{\vphantom0}\)};
            \node (T4)  at (-5,    4) {\({\downarrow}d\)};
            \node (T4,)  at (-4.875, 4) {\((\)};
            \node (T4,,)  at (-4.625, 4) {\(,^{\vphantom0}_{\vphantom0}\)};
            \node (T4ttt)  at (-4.375, 4) {\()\)};
            \node (T40) at (-4.75, 4) {\(a\hcancel{bd}^{\vphantom0}_{\vphantom0}\)};
            \node (T41) at (-4.5,  4) {\(a\hcancel{be}^{\vphantom0}_{\vphantom0}\)};
            \node (T4')  at (-4,    4) {\({\downarrow}d\)};
            \node (T4',)  at (-3.875, 4) {\((\)};
            \node (T4',,)  at (-3.625, 4) {\(,^{\vphantom0}_{\vphantom0}\)};
            \node (T4'ttt)  at (-3.375, 4) {\()\)};
            \node (T40') at (-3.75, 4) {\(ab\hcancel d^{\vphantom0}_{\vphantom0}\)};
            \node (T41') at (-3.5,  4) {\(ac\hcancel e^{\vphantom0}_{\vphantom0}\)};
            \node (T4'')  at (-3,    4) {\({\downarrow}d\)};
            \node (T4'',)  at (-2.875, 4) {\((\)};
            \node (T4'',,)  at (-2.625, 4) {\(,^{\vphantom0}_{\vphantom0}\)};
            \node (T4''ttt)  at (-2.375, 4) {\()\)};
            \node (T40'') at (-2.75, 4) {\(a\hcancel{bd}^{\vphantom0}_{\vphantom0}\)};
            \node (T41'') at (-2.5,  4) {\(a\hcancel{be}^{\vphantom0}_{\vphantom0}\)};
            \node (T4''')  at (-2,    4) {\({\downarrow}d\)};
            \node (T4''',)  at (-1.875, 4) {\((\)};
            \node (T4''',,)  at (-1.625, 4) {\(,^{\vphantom0}_{\vphantom0}\)};
            \node (T4'''ttt)  at (-1.375, 4) {\()\)};
            \node (T40''') at (-1.75, 4) {\(ab\hcancel d^{\vphantom0}_{\vphantom0}\)};
            \node (T41''') at (-1.5,  4) {\(ac\hcancel e^{\vphantom0}_{\vphantom0}\)};
            \path [->, out = 180, in=-90, green] (T00.north) edge (T10.south);
            \path [->, out = 115, in = -60, green] (T00.north) edge (T11.south);
            \path [->, green] (T00.north) edge (T10'.south);
            \path [->, green] (T01.north) edge (T11'.south);
            \path [->, out = 180, in=-90, green] (T10.north) edge (T20.south);
            \path [->, out = 115, in = -60, green] (T10.north) edge (T21.south);
            \path [->, green] (T10.north) edge (T20'.south);
            \path [->, green] (T11.north) edge (T21'.south);
            \path [->, out = 180, in=-90, green] (T20.north) edge (T30.south);
            \path [->, out = 115, in = -60, green] (T20.north) edge (T31.south);
            \path [->, green] (T20.north) edge (T30'.south);
            \path [->, green] (T21.north) edge (T31'.south);
            \path [->, out = 180, in=-90, green] (T30.north) edge (T40.south);
            \path [->, out = 115, in = -60, green] (T30.north) edge (T41.south);
            \path [->, green] (T30.north) edge (T40'.south);
            \path [->, green] (T31.north) edge (T41'.south);
            \path [->, out = 180, in=-90, green] (T30'.north) edge (T40''.south);
            \path [->, out = 115, in = -60, green] (T30'.north) edge (T41''.south);
            \path [->, green] (T30'.north) edge (T40'''.south);
            \path [->, green] (T31'.north) edge (T41'''.south);
            \path [->,in=180,out=-90, blue, dashed] (T4) edge node[below] {\(a\)} (T2);
            \path [->,in=30,out=-30, blue, dashed] (T4'ttt) edge node[left] {\(b\)} (T3ttt);
            \path [->,in=180,out=-90, blue, dashed] (T4'') edge node[right, pos=0.3] {\(a\)} (T2);
            \path [->,in=30,out=-30, blue, dashed] (T4'''ttt) edge node[right] {\(b\)} (T3'ttt);
            \path [->,in=30,out=-30, blue, dashed] (T2'ttt) edge node[right] {\(c\)} (T1ttt);
            \path [->,in=30,out=-30, blue, dashed] (T1'ttt) edge node[right] {\(a\)} (T0ttt);
        \end{tikzpicture}
    \end{center}
    \Description{Annotations for \(+\), \(A\), \(d\), \(f\), \(g\).}
    \caption{Annotations for Examples \ref{exam:run0} and \ref{exam:run1}.}\label{fig}
\end{figure*}
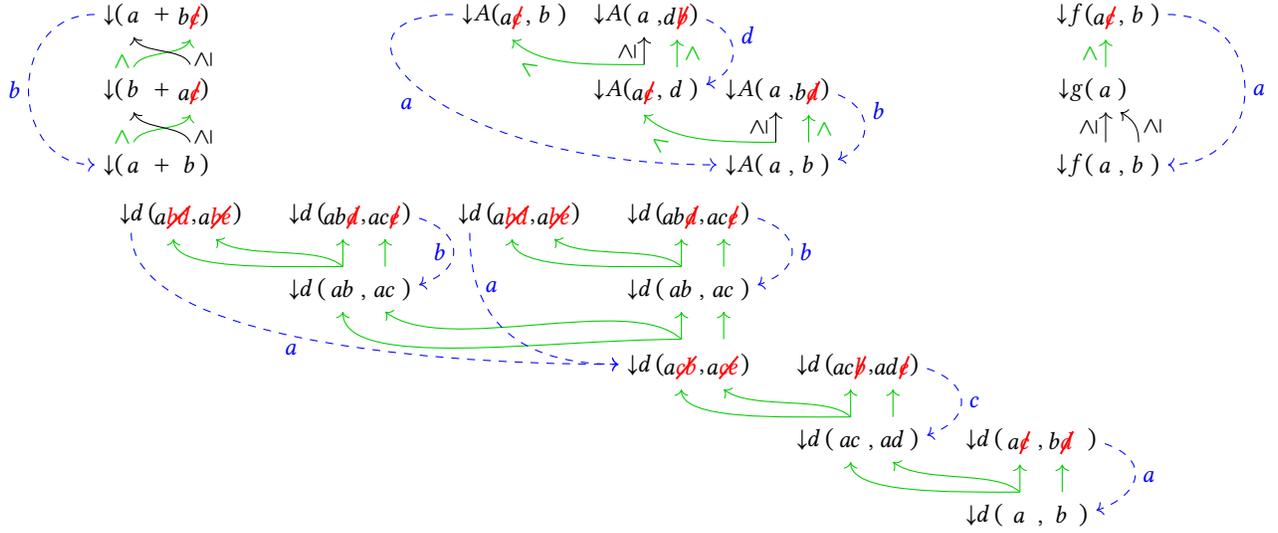
\begin{definition}[Annotations]
    We will define for every node \(n_i\) with path \(n_0,\dots,n_i\) from the root: \begin{enumerate}
        \item for each $j<\sf{ob}(\sf{judge}(n_i))$ a non-repeating list \(\sf{Stack}(n_i)_j\in\sf{Name}^*\) , which will consist of names for a sequence of relevant ancestors of \(x_{i,j}\); 
        \item a non-repeating list \(\sf{Name}(n_i)\in\sf{Name}^*\), consisting of all the names in the stacks $\sf{Stack}(n_i)_0,\sf{Stack}(n_i)_1,\dots$;
        \item a variable \(\sf{var}(n_i,a)\in\sf{Var}(n_i)\) for each \(a\in\sf{Name}(n_i)\), which will be the ancestor that carries the name \(a\) at \(n_i\).
    \end{enumerate}
    The order of the names in $\sf{Name}(n_i)$ indicates their \emph{age at $n_i$}: a name $a$ is \emph{(strictly) older at $n_i$} than a name $a'$ if $a$ occurs (strictly) before $a'$ in $\sf{Name}(n_i)$, and we will also make sure that this is the case iff \(\sf{var}(n_i,a)\) is (strictly) order than \(\sf{var}(n_i,a')\).
    We will make sure that the order of names in \(\sf{Stack}(n_i)_j\) agrees with the order in \(\sf{Name}(n_i)\).
    
    The three notions are defined with induction on \(i\):  \begin{itemize}
        \item For \(n_0\) we give every \(x_{0,j}\) a new name: \(\sf{Stack}(n_0)_j\coloneqq a_j\), \(\sf{Name}(n_0)\coloneqq\{a_0,\dots,a_{\sf{ob}(\sf{judg}(n_0))-1}\}\), and \(\sf{var}(n_0,a_j)\coloneqq x_{0,j}\).
        \item 
        The definitions for \(n_{i+1}\) are more involved.
        First we introduce new names and variables, then we pick new stacks from a number of candidates, and lastly we perform a number of `resets' to forget unneeded information.
        We will present this as an algorithm (with reassignments):
        \begin{description}
            \item[New names.]
            Let \(a'_{0},a'_1,\dots\) be the first names that are not in the list \(\sf{Name}(n_i)\) and assign: \[
                \sf{Name}(n_{i+1})\coloneqq(\sf{Name}(n_i),a'_0,\dots,a'_{\sf{ob}(\sf{judg}(n_{i+1}))-1}).\!\!\!\!\!\!\!\!\!\!\!\!\!\!\!\!\!\!\!\!\!
            \]
            \item[New variables.]
            We assign: \begin{align*}
                \sf{var}(n_{i+1},a_{\phantom{j'}})&\coloneqq\sf{var}(n_{i},a)\quad\text{if \(a\) appears in \(\sf{Name}(n_i)\),}\!\!\!\!\!\!\!\!\!\!\!\!\!\!\!\!\!\!\!\!\! \\
                \sf{var}(n_{i+1},a'_{j'})&\coloneqq x_{i+1,j'}.\!\!\!\!\!\!\!\!\!\!\!\!\!\!\!\!\!\!\!\!\!
            \end{align*}
            \item[New stacks.]
            For \(j'<\sf{ob}(\sf{judg}(n_{i+1}))\), we have that each \(j<\sf{ob}(\sf{judg}(n_i))\) can give a candidate for \(\sf{Stack}(n_{i+1})_{j'}\): 
            \begin{align*}
                &\sf{Stack}(n_i)_j,a'_{j'} & &\text{if }x_{i,j}>x_{i+1,j'},\!\!\!\!\!\!\!\!\!\!\!\!\!\!\!\!\!\!\!\!\! \\
                &\sf{Stack}(n_i)_j  & &\text{if }x_{i,j}\ge x_{i+1,j'}.\!\!\!\!\!\!\!\!\!\!\!\!\!\!\!\!\!\!\!\!\!
            \end{align*}
            We assign the oldest candidate stack to \(\sf{Stack}(n_{i+1})_{j'}\), where a stack is older than another stack if it contains the oldest name in the symmetric difference \(S\mathbin{\triangle} S'\coloneqq(S\setminus S')\cup(S'\setminus S)\) of their sets of elements \(S\) and \(S'\).
            If there are no candidates, start a new stack by assigning \(\sf{Stack}(n_{i+1})_{j'}\coloneqq a_{j'}'\).
            \item[Reset.] Check if there exists a name $a$ in $\sf{Name}(n_{i+1})$ that is \emph{uniformly covered} by another $a'$ in $\sf{Name}(n_{i+1})$, meaning: 
            \begin{itemize}
                \item \(a\) is strictly older than \(a'\) (that is, \(a\) appears strictly earlier than \(a'\) in \(\sf{Name}(n_{i+1})\));
                \item for every \(j'\), if $a$ occurs in \(\sf{Stack}(n_{i+1})_{j'}\), so does $a'$.
            \end{itemize}
            If so, then for every such \(a\), ordered from young to old, we perform a \emph{reset on $a$}: for every \(j'\), we reassign \[
                \sf{Stack}(n_{i+1})_{j'}\coloneqq\sf{Stack}(n_{i+1})_{j'}\upharpoonright a,\!\!\!\!\!\!\!\!\!\!\!\!\!\!\!\!\!\!\!\!\!
            \] where \(S\upharpoonright a\) is given by removing strict descendants of \(a\) (including \(a'\)): \(S\upharpoonright a\) is the prefix of \(S\) up to and including \(a\) if \(a\) appears in \(S\), and the entire stack otherwise.
            Then, we remove all names that do not appear on one of the stacks from \(\sf{Name}(n_{i+1})\), and record the names that have been reset as \(\sf{Reset}(n_{i+1})\subseteq\{a_0,\dots\}\).
        \end{description}
    \end{itemize}
\end{definition}
When we follow a path through the size-change graphs, the lists $\sf{Stack}(n_i)_j$ are indeed treated like stacks: we only add new names on the right, and only delete a postfix when doing a reset.

With these annotations, we can  obtain a cyclic representation in which buds and sprouts have the same relevant history and every bud $b$ can be assigned a variable that has progressed between $\sf{sprout}(b)$ and $b$.
\begin{theorem}[existence of a reset proof]\label{thm:existence cyclic rep with prog name}
    Every $\cal{C}$-proof \(\pi\) has a cyclic representation \(\pi_{\sf{cyc}}\) such that for every branch $n_0,\dots, n_s,\dots ,n_t$ with bud \(n_t\) and sprout \(n_s=\sf{sprout}(n_t)\) we have:
    \begin{enumerate}
        \item $\sf{Name}(n_t)=\sf{Name}(n_s)$ and for all \(j\): $\sf{Stack}(n_t)_j=\sf{Stack}(n_s)_j$;
        \item there is a name $\sf{prog}(n_t)\in\sf{Reset}(n_b)$ (the \emph{progressing name}) that occurs in \(\sf{Name}(n_i)\) for every \(s\le i\le t\).
    \end{enumerate}
\end{theorem}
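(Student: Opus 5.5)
The plan has three parts: show that only finitely many annotations occur, show that every infinite branch resets some persistent name infinitely often, and then cut every infinite branch at a repetition of subderivation and annotation. (I read $\sf{Reset}(n_b)$ in clause (2) as $\sf{Reset}(n_t)$.)

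\emph{Finiteness.} At every node the names in $\sf{Name}(n_i)$ form an antichain-like configuration. No name is uniformly covered by a younger one, since otherwise a reset would have removed it. A counting argument then bounds $|\sf{Name}(n_i)|$. Each name $a$ determines the set $J_a$ of positions $j$ whose stack contains $a$. If $a$ is older than $a'$ and $J_a\subseteq J_{a'}$, then $a$ is uniformly covered by $a'$. So after the reset phase, the map $a\mapsto J_a$ (with names ordered by age) has no pair with $a$ older and $J_a\subseteq J_{a'}$. Moreover, each stack is a chain in age order. This bounds the number of names by a function of $N=\max_A\sf{ob}(A)$; plausibly at most $2^N$, since distinct names must have distinct $J_a$. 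Because new names are always chosen as the first unused names, only finitely many names from $\sf{Name}$ are ever used. Since $\pi$ is regular, the pairs (subderivation, annotation) therefore range over a finite set.

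\emph{Infinite branches reset persistent names.} Take an infinite branch. By the soundness condition there is a trace $(x_{i,j_i})_{i\ge k}$ with infinitely many progressing edges. I will follow the bottom name of $\sf{Stack}(n_i)_{j_i}$, or more generally the stacks along the trace. Choosing the oldest candidate stack ensures that, at each step, the stack at $j_{i+1}$ is at least as old as the candidate coming from $j_i$. Every progressing edge pushes a fresh name onto the candidate. Names only leave a stack through resets, and the oldest name on a stack only disappears if it is reset away entirely, which cannot happen since a reset on $a$ keeps $a$. Using the lexicographic "older" order on stacks, which is well-founded in the relevant sense because there are boundedly many names, I will argue that some name $a$ eventually stays in $\sf{Name}(n_i)$ forever. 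I then claim that $a$ is reset infinitely often. If not, every later progression would push names above the deepest persistent name, and boundedness of stack height would force a reset of some persistent name. This boundedness-forces-reset step is the main obstacle. The delicate points are that the oldest-candidate rule may switch the trace to a different stack, and that the reset condition requires uniform covering rather than mere covering. I will handle the switch by a König-style argument: define a potential, namely the sequence of persistent names on the stack, that can only decrease finitely often in the older order, and otherwise show that new pushes must eventually create a uniformly covered older name on the followed stack.

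\emph{Cyclic representation.} For each infinite branch, consider the persistent name $a$ from the previous step. By pigeonhole there are $s<t$ such that $n_s$ and $n_t$ have isomorphic subderivations and equal $\sf{Name}$ and $\sf{Stack}$ data, $a\in\sf{Reset}(n_t)$, and $a$ lies in $\sf{Name}(n_i)$ for all $s\le i\le t$ (take both beyond the point where $a$ becomes persistent). I make $n_t$ a bud with sprout $n_s$ and set $\sf{prog}(n_t)\coloneqq a$. To get a \emph{finite} set of repnodes, I cut each branch at the first node $t$ for which such an $s$ exists. By the previous step, every infinite branch meets such a node. By König's lemma the resulting tree is finite, because a finitely branching tree with no infinite branch is finite. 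The annotations depend only on the path, so the isomorphism condition on subderivations carries over, and conditions (1) and (2) hold by construction.
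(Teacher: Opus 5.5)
\textbf{Verdict: there is a genuine gap in the middle step.} Your outer structure is the paper's. The paper bounds the names by the same observation, that two names occurring in exactly the same stacks make one of them uniformly covered, and it finishes with the same pigeonhole-plus-K\"onig step. (Take the maximum of $\mathsf{ob}$ only over the finitely many judgments occurring in $\pi$, which regularity provides.) The gap is the middle lemma: along every infinite branch some name stays in $\mathsf{Name}(n_i)$ from some point on and lies in $\mathsf{Reset}(n_i)$ infinitely often. You flag this as the main obstacle but do not prove it, and the route you sketch does not get there.

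The bottom name of the followed stack does stabilise, but it need not ever be reset. Suppose a one-object judgment steps via graph $0\ge 0$, $0\ge 1$ to a two-object judgment, so both stacks become $(a_0)$. Suppose every later step has graph $0>0$, $1\ge 1$. Then stack $1$ stays $(a_0)$, so $a_0$ is never uniformly covered, while the second name of stack $0$ is reset at every subsequent step. Your appeal to well-foundedness ``because there are boundedly many names'' also fails as stated. Names are recycled and their relative ages change. Along the trace, a push makes the followed stack older while a reset makes it younger. So neither the whole stack nor an unspecified ``sequence of persistent names'' is a monotone potential.

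\textbf{The missing idea: fix which prefix to watch.}
\begin{enumerate}
\item Let $m$ be the least height that $\mathsf{Stack}(n_i)_{t_i}$ attains infinitely often; it exists because heights are positive and bounded. Take $k'$ such that all heights are $\ge m$ from $k'$ on.
\item The first $m$ entries are never touched by a reset, since the height would drop below $m$. They are never replaced by a fresh stack, since the trace always supplies a candidate. They change only when a strictly older candidate is chosen.
\item One checks that such a change makes the length-$m$ prefix lexicographically older as a tuple of \emph{variables}. This can happen only finitely often, because each variable has only finitely many older ones.
\item Hence from some $k''$ on, the $m$-th entry is a fixed name $a$, which therefore lies in every $\mathsf{Name}(n_i)$.
\item At each progressing step the trace's candidate has more than $m$ entries. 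So does the chosen stack before resets: it has the same $m$-prefix, and a proper prefix of a stack is younger than the stack.
\item From a stack longer than $m$, the only way back to height $m$ is a reset that truncates the followed stack at its $m$-th entry, that is, a reset on $a$.
\item Since the height returns to $m$ infinitely often, $a\in\mathsf{Reset}(n_i)$ for infinitely many $i$.
\end{enumerate}
With this lemma in place, your third step goes through as written.
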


Note that the second condition implies that $\sf{Name}(n_k)\upharpoonright\sf{prog}(n_k)$ is a prefix of $\sf{Name}(n_i)$ for every \(s\le i\le k\) (because new names are only added to the right of \(\sf{Name}(n_i)\)).
This means that every name $a$ in $\sf{Name}(n_b)$ older than $\sf{prog}(n_b)$ is preserved from the sprout to the bud, so \(\sf{var}(n_i,a)\) denotes the same variable for every \(s\le i\le k\).

\begin{example}
    For the functions in Examples \ref{exam:run1} and \ref{exam:run1}, we obtain the cyclic representations in Figure \ref{fig}.
    We annotated these with the size-change graphs, the stacks for every node (where we \smash{\hcancel{\text{cross out}}} names that are removed by a reset), and the progressing variable for every back-edge.
    To make these fit on the page, we have left out all subtrees on the right: for \(A\) these are the same as the one on the left, while for \(d\) they are the mirror image.
\end{example}

To prove Theorem \ref{thm:existence cyclic rep with prog name}, we need the following two lemmata:
\begin{lemma}
   For any node $n_k$ with path $n_0,\dots,n_k$ from the root we have $\sf{Name}(n_k)\in\{a_0,\dots,a_{2^m+m}\}^*$, where $m\coloneqq\mathsf{max}\{\sf{ob}(n_i):i\le k\}$.
\end{lemma}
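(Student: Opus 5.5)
The plan is to prove, by induction along the path $n_0,\dots,n_k$, a stronger invariant. At every node $n_i$ (after the reset step) we will show $|\sf{Name}(n_i)|\le 2^{m_i}-1$, where $m_i\coloneqq\max\{\sf{ob}(n_{i'}):i'\le i\}$, and that every name in $\sf{Name}(n_i)$ has index at most $2^{m_i}+m_i-1$. The lemma follows at once, since $m_i\le m$.

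\emph{Step 1 (the index bound from the cardinality bound).} At $n_{i+1}$, the new names $a'_0,\dots,a'_{\sf{ob}(\sf{judg}(n_{i+1}))-1}$ are chosen as the first names not occurring in $\sf{Name}(n_i)$. Every index below them is either in $\sf{Name}(n_i)$ or is another new name, so their indices are all $<|\sf{Name}(n_i)|+\sf{ob}(\sf{judg}(n_{i+1}))\le 2^{m_i}-1+m_{i+1}$. The old names keep their indices, which are bounded by the induction hypothesis. Resets only delete names. Hence every name at $n_{i+1}$ has index at most $2^{m_{i+1}}+m_{i+1}-1$. The base case $n_0$ is immediate, since it uses $a_0,\dots,a_{\sf{ob}(n_0)-1}$ and $\sf{ob}(n_0)\le 2^{\sf{ob}(n_0)}-1$ (with the convention that arity $0$ means no names).

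\emph{Step 2 (the cardinality bound).} For each name $a$ at $n_{i+1}$, let $P(a)\coloneqq\{j':a\in\sf{Stack}(n_{i+1})_{j'}\}\subseteq\{0,\dots,\sf{ob}(\sf{judg}(n_{i+1}))-1\}$. After the final clean-up, every surviving name lies on some stack, so $P(a)\neq\emptyset$. Now suppose no surviving name is uniformly covered by another surviving name. Then $a\mapsto P(a)$ is injective: if $P(a)=P(a')$ with $a$ older, then $a$ would be uniformly covered by $a'$. The map lands in the nonempty subsets of an $\sf{ob}(\sf{judg}(n_{i+1}))$-element set, so $|\sf{Name}(n_{i+1})|\le 2^{m_{i+1}}-1$. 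In fact only the weak consequence ``no two surviving names have equal $P$'' is needed.

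\emph{Step 3 (main obstacle): after the resets, no two surviving names $a$ older than $a'$ have $P(a)\subseteq P(a')$.} The favourable case is easy. If $a$ itself was reset, then $a'$, being younger, comes after $a$ in every stack containing $a$ (stacks are ordered by age). So $a'$ was truncated from all of them, giving $P(a)\cap P(a')=\emptyset$, which contradicts $\emptyset\neq P(a)\subseteq P(a')$.

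The delicate case is a name $a$ that was not covered before the resets but becomes covered afterwards. This can happen because a reset on an older name $b$ removes $a$ from the stacks containing $b$. It also removes $a'$ from those same stacks, so I would compare $P_{\mathrm{post}}(x)=P_{\mathrm{pre}}(x)\setminus P_{\mathrm{pre}}(b)$ for $x\in\{a,a'\}$. From that comparison I would try to show that equality $P_{\mathrm{post}}(a)=P_{\mathrm{post}}(a')$ forces a pre-reset covering, so that $a$ was already among the names reset (processed young to old). If that argument does not go through, my fallback is to read the reset step as iterated until no uniformly covered name remains. That process terminates, since each pass strictly shrinks the finite set of names. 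It then yields the no-covering property directly, and Step 2 applies unchanged.
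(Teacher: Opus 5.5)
Your Steps 1 and 2 are correct and match the paper's counting argument. The paper simply asserts that no name surviving the reset step is uniformly covered, which is your Step 3. That step carries all the content, and neither of your two routes closes it.

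Your planned comparison via $P_{\mathrm{post}}(x)=P_{\mathrm{pre}}(x)\setminus P_{\mathrm{pre}}(b)$ cannot force a pre-reset cover from set-level data alone. Take stacks $S_0=[b,a]$ and $S_1=[a,a']$, with $b$ older than $a$ older than $a'$. The only uniformly covered name is $b$ (covered by $a$). After resetting $b$ you get $P(a)=P(a')=\{1\}$, and all three names survive. So the implication you hope for fails for arbitrary age-ordered stacks; you need something specific to how the stacks are generated.

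The fallback is not legitimate either. The lemma concerns the annotation as defined: a single pass of resets, young to old, over the names that are uniformly covered at that moment. Moreover, $\mathsf{Stack}$, $\mathsf{Reset}$ and $\mathsf{cov}$ are used later (the lemma on resets along infinite branches, the reset-proof theorem, Section 6). You cannot redefine them just to make Step 2 go through.

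The missing idea is a \emph{common-prefix invariant}: at every node, all stacks that contain a given name $c$ agree on their prefix up to and including $c$.
\begin{itemize}
\item It holds trivially at $n_0$.
\item It is preserved when new stacks are formed. Before the reset, each $\mathsf{Stack}(n_{i+1})_{j'}$ is a copy of some $\mathsf{Stack}(n_i)_j$, possibly with the fresh name $a'_{j'}$ on top, or the singleton $[a'_{j'}]$. Hence each fresh name sits on at most one stack.
\item It is preserved by resets, since a truncation $S\upharpoonright b$ does not change the prefix of any name that remains.
\end{itemize}
This is exactly what excludes the configuration above, where $a$ has the two prefixes $[b,a]$ and $[a]$.

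With the invariant, a reset on $b$ removes a name $c$ either from all of its stacks (if $b$ lies strictly below $c$ in the common prefix) or from none. So every name surviving the clean-up has $P_{\mathrm{post}}(c)=P_{\mathrm{pre}}(c)\neq\emptyset$.

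Now suppose surviving names $a$ older than $a'$ had $P(a)\subseteq P(a')$. Then the same inclusion held before the resets, so $a$ was uniformly covered and got reset. Since $a'$ shares a stack with $a$ and is younger, $a$ lies in the common prefix of $a'$. The reset on $a$ therefore deleted $a'$ from every stack, contradicting its survival.

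This gives the injectivity needed in Step 2 for the construction as defined, and shows that iterating the reset step would change nothing. As a side benefit, the invariant makes the sets $P(c)$ laminar. That would even give the linear bound $|\mathsf{Name}(n_k)|\le 2m-1$.
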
\label{lem: finitely many names}
\begin{proof}
    If two names occur in exactly the same subset of stacks of $(\sf{Stack}(n_i)_j)_j$, one of them must be uniformly covered.
    Thus, we can use at most $2^m$ names without a uniformly covered name.
    However, it is possible that \(\sf{Name}(n_i)\) skips certain names because we first add new names before we might remove names with a reset.
    With induction on \(k\), we see that $\sf{Name}(n_k)\subseteq\{a_0,\dots,a_{2^m+m}\}$ and \(|\sf{Name}(n_k)|\le 2^m\), where the induction step uses the fact that we always pick the first available names.
\end{proof}
\begin{lemma}
    For every infinite branch $n_0,n_1,\dots$ of a $\cal{C}$-proof, there exists an index $k$ and some name $a$ that occurs in $\sf{Name}(n_i)$ for every $i\ge k$, with $a\in\sf{Reset}(n_i)$ for infinitely many $i\ge k$.
\end{lemma}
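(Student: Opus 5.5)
Fix an infinite branch $n_0,n_1,\dots$. Since $\pi$ is a proof, the induced sequence of size-change graphs is progressing, so there is a trace: a starting time $k_0$ and indices $j_i$ for $i\ge k_0$ such that $x_{i,j_i}$ and $x_{i+1,j_{i+1}}$ are connected by an edge of $G_i$, with progressing edges infinitely often. Since $\pi$ is regular, $m$ in the previous lemma is bounded along the branch, so by that lemma only finitely many names ever occur. The plan is to follow the stacks $\sf{Stack}(n_i)_{j_i}$ along the trace and show that some name in them eventually stabilises, and then show that this name must be reset infinitely often.

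\textbf{Step 1 (stabilising a prefix).} At each step the trace edge supplies a candidate for $\sf{Stack}(n_{i+1})_{j_{i+1}}$, namely $\sf{Stack}(n_i)_{j_i}$, possibly extended with a new name. The chosen stack is the \emph{oldest} candidate, and a reset only truncates stacks to prefixes. I would define, for each $i\ge k_0$, the stack $S_i\coloneqq\sf{Stack}(n_i)_{j_i}$ and argue that the sequence $(S_i)$ is eventually ``non-increasing in age'' in a lexicographic sense. Whenever the oldest name of $S_i$ is lost, it is replaced by something at least as old. Because names are finitely many and ages of surviving names are fixed relative to each other (new names are always appended on the right of $\sf{Name}$), a standard argument shows that the first element of $S_i$ is eventually constant, equal to some name $b_1$ that persists in $\sf{Name}(n_i)$. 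Inductively, consider the longest prefix $P$ of $S_i$ that is eventually constant. By the boundedness of names (stack length at most $2^m+m+1$), this gives a maximal eventually-constant prefix $P=b_1\cdots b_\ell$ from some index $k$ on. All names of $P$ stay in $\sf{Name}(n_i)$ for $i\ge k$.

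\textbf{Step 2 (the last name of $P$ is reset infinitely often).} Take $a\coloneqq b_\ell$; note that $\ell\ge1$ is forced, which needs a small argument that the first entry stabilises. Each progressing trace edge appends a fresh name $a'_{j'}$ after $S_i$, unless an older candidate is chosen. An older candidate would change the eventually-constant prefix or a name older than it, and I would argue this cannot happen infinitely often. So infinitely often $S_{i+1}$ extends $P$ strictly. By maximality of $P$, the position after $P$ is not eventually constant, so names following $a$ on the trace's stack are removed infinitely often. Removal happens only through a reset truncation $S\upharpoonright c$ for some name $c$. If $c$ lies strictly after $a$ in $S$, the removal happens past the prefix. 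The key point is that a truncation down exactly to $P$ happens infinitely often, which means the reset name is $a$ itself, since $S\upharpoonright c$ keeps everything up to and including $c$. Alternatively, the extension beyond $P$ could be replaced via the choice of an older candidate, which also has to be ruled out. Hence $a\in\sf{Reset}(n_i)$ for infinitely many $i\ge k$.

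\textbf{Main obstacle.} The delicate part is Step 2, namely showing that the entries after $P$ cannot be perpetually ``churned'' by resets on younger names, or by switches to other older candidate stacks that share $P$, without $a$ itself ever being reset. For this I would use a pigeonhole argument on the finitely many names occurring directly after $a$. Suppose the entry directly after $a$ changes infinitely often but never by a reset on $a$. Then each change is caused by a reset on some younger name, or by a candidate switch. Ages are a well-order on the finitely many names present, and these changes strictly decrease the age of the name directly after $a$. So after finitely many changes, the name directly after $a$ would stabilise. This contradicts the maximality of $P$. Making ``strictly decreases'' precise using the oldest-candidate rule is where most of the care is needed.
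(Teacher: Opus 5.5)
\textbf{Verdict: there is a genuine gap.} Your architecture is viable: take the maximal eventually-constant prefix $P=b_1\cdots b_\ell$ of the trace stacks $S_i$, and show that if $a=b_\ell$ were reset only finitely often, slot $\ell+1$ would also stabilise. But the step carrying all the weight is not established, and your case analysis for it is wrong in two places.

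First, a reset on a name younger than $a$ can never change the entry directly after $a$. The truncation $S\upharpoonright c$ keeps everything up to and including $c$, so the only resets touching slot $\ell+1$ are resets on $a$ itself; resets on $b_1,\dots,b_{\ell-1}$ are eventually excluded by stability of $P$.

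Second, your claim that choosing an older candidate ``would change the eventually-constant prefix'' is false. An older candidate can agree with $P$ and differ only after it, and nothing stops this from recurring.

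The fact you actually need, and never state, is a length comparison. If the trace candidate $T$ strictly extends $P$ and the chosen candidate $C$ has prefix $P$, then $C\neq P$, since $P\subsetneq T$ would make $T$ the older stack. Without it, your contradiction argument misses the case where slot $\ell+1$ is \emph{empty} infinitely often, which is one way it can fail to be eventually constant. The empty-to-filled transition can install an arbitrary, even fresh, name, so ``strictly decreases the age'' fails there. Also, age is not a fixed well-order because names are recycled. You need a numerical measure such as the position of the entry in $\mathsf{Name}(n_i)$.

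With the length fact your argument closes:
\begin{enumerate}
\item After the last $a$-reset, the next progressing step fills slot $\ell+1$.
\item The slot can never empty again, since only an $a$-reset could do that.
\item From then on the slot changes only by a candidate switch to a strictly older name. Its position in $\mathsf{Name}(n_i)$ never increases (new names are appended on the right) and strictly drops at each such change.
\end{enumerate}
So the slot stabilises, contradicting maximality of $P$.

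The paper avoids this second stabilisation argument. It takes $m$ to be the $\liminf$ of $|\mathsf{Stack}(n_i)_{t_i}|$ rather than the length of a stable prefix, and shows the first $m$ names stabilise by essentially your Step~1 argument. The length is exactly $m$ infinitely often. By the length fact, every progressing step makes the chosen candidate longer than $m$, and once the stack exceeds $m$ it can only return to length $m$ through a reset on the $m$-th name. In the end your $\ell$ coincides with this $m$.
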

\begin{proof}
    Since $(n_i)_i$ is a branch of a proof, we know that it has an infinitely progressing trace $(t_i)_{i\ge k}$.
    Consider the corresponding sequence of stacks $(\sf{Stack}(n_i)_{t_i})_{i\ge k}$.
    Let $m>0$ be the maximal value such that from some $k'\geq k$ onwards, $|\sf{Stack}(n_i)_{t_i}|\geq m$ for every $i\ge k'$; such an $m$ must exist because the stacks are nonempty by construction and, by regularity and the previous lemma, their size is bounded.
    
    Now consider the behaviour of the first \(m\) names in the sequence of stacks $(\sf{Stack}(n_i)_{t_i})_{i\ge k'}$; we show that they are eventually fixed.
    These names cannot change solely because of a reset, as the size of the stack would drop below $m$, and since $(t_i)_{i\ge k'}$ is a trace, we cannot start a new stack along this sequence either.
    Thus these $m$ names can only change when picking an older candidate stack than the one provided by our sequence.
    However, such a change will only change the first \(m\) names if the older candidate contains a name that is older than the previous \(m\)-th name, and this can only happen finitely often since there are only finitely many names older the chip that initially is at the \(m\)-th place.
    We may therefore take $k''\ge k'$ large enough that the first $m$ names in $(\sf{Stack}(n_i)_{t_i})_{i\ge k''}$ are fixed, where we call the \(m\)-th name \(a\).
    
    As the trace is infinitely progressing, following our trace will infinitely often provide a candidate that contains more than \(m\) names.
    This candidate need not be chosen, but if it is not, then the candidate that was must have an older name in the symmetric difference, and therefore also have more than \(m\) names, since the first \(m\) names are the same.
    As we picked $m$ as the maximal recurring height of the stacks, there must be infinitely many $a$-resets along $(\sf{Stack}(n_i)_{t_i})_{i\ge k''}$. 
\end{proof}
\begin{proof}[Proof of Theorem \ref{thm:existence cyclic rep with prog name}]
    Given an infinite branch \((n_i)_i\), we show that we can eventually introduce a back-edge.
    Let $k$ and $a$ be given by the previous lemma.
    By regularity, \(\pi\) only has finitely many subderivations up to isomorphism, and by Lemma \ref{lem: finitely many names} there are only finitely many options for \(\sf{Name}(n_i)\) and \((\sf{Stack}(n_i)_j)_j\).
    Moreover, there are infinitely many $i\ge k$ for which $a\in\sf{Reset}_i$ so by a pigeonhole argument we can find \(s,t\) satisfying the requirements of the theorem.
    By K\"onig's lemma, a finitely branching tree can only be infinite if contains an infinite branch, so this will produce a finite tree.
\end{proof}

\subsection{Induction order}

Example \ref{exam:importorder} (importance of an induction order) shows that the order in which we introduce induction hypotheses matters.
Theorem \ref{thm:existence cyclic rep with prog name} gives us a cyclic representation that has an `induction order' build in.
In particular, we call a bud \(b'\) \emph{(strictly) older} or \emph{(strictly) more important} than a bud $b$ if: \begin{itemize}
    \item $\sf{Name}(b')\upharpoonright \sf{prog}(b')$ is a (strict) prefix of $\sf{Name}(b)\upharpoonright \sf{prog}(b)$,
    \item $\sf{var}(b,a)=\sf{var}(b',a)$ for every $a$ in $\sf{Name}(b')\upharpoonright\sf{prog}(b')$.
\end{itemize}
However, the age order of the buds does not necessarily match the order in which their sprouts appear in the tree.
So, if we follow our general strategy of introducing induction hypotheses at the sprouts, then more important induction hypotheses need not be introduced first.
Example \ref{exam:importforget} (importance of forgetting), shows that there are proofs where this is a problem, such as Figure \ref{fig:reachability}(a), but also proofs where it is not, such as Figure \ref{fig:reachability}(b). It turns out that the age order of the buds only needs to be respected if they can \emph{reach} each other in the following sense:
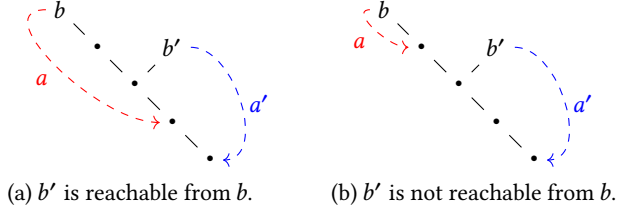
\begin{figure}[h]
    \centering
    \(\begin{gathered}\vcenter{\hbox{\begin{tikzpicture}[scale = 0.5]
        \node (n0)  at (-1, 0) {\(\bullet\)};
        \node (n1)  at (-2, 1) {\(\bullet\)};
        \node (n2)  at (-3, 2) {\(\bullet\)};
        \node (n3)  at (-2, 3) {\(b'\)};
        \node (n3') at (-4, 3) {\(\bullet\)};
        \node (n4)  at (-5, 4) {\(b\)};
        \path (n0)  edge (n1);
        \path (n1)  edge (n2);
        \path (n2)  edge (n3);
        \path (n2)  edge (n3');
        \path (n3') edge (n4);
        \path [->,out=0,in=0, blue, dashed] (n3) edge node[right] {\(a'\)} (n0);
        \path [->,out=180,in=180, red, dashed] (n4) edge node[below left] {\(a\)} (n1);
    \end{tikzpicture}}} \\
        \text{(a)~\(b'\) is reachable from \(b\).}
    \end{gathered}\) \qquad
    \(\begin{gathered}\vcenter{\hbox{\begin{tikzpicture}[scale = 0.5]
            \node (n0)  at (-1, 0) {\(\bullet\)};
            \node (n1)  at (-2, 1) {\(\bullet\)};
            \node (n2)  at (-3, 2) {\(\bullet\)};
            \node (n3)  at (-2, 3) {\(b'\)};
            \node (n3') at (-4, 3) {\(\bullet\)};
            \node (n4)  at (-5, 4) {\(b\)};
            \path (n0)  edge (n1);
            \path (n1)  edge (n2);
            \path (n2)  edge (n3);
            \path (n2)  edge (n3');
            \path (n3') edge (n4);
            \path [->,out=0,in=0, blue, dashed] (n3) edge node[right] {\(a'\)} (n0);
            \path [->,out=180,in=180, red, dashed] (n4) edge node[below left] {\(a\)} (n3');
        \end{tikzpicture}}} \\
        \text{(b)~\(b'\) is not reachable from \(b\).}
    \end{gathered}\)
    \caption{An example and a nonexample of reachability.}\label{fig:reachability}
\end{figure} 
\begin{definition}[reachable]
    Let $\pi_{\sf{cyc}}$ be a cyclic representation of some $\cal{C}$-proof.
    For any $n,n'\in\sf{RepNode}$, we say that $n'$ is \emph{reachable from $n$} (possibly using back-edges) if there exist nodes $n_0,\dots,n_{k-1}\in \sf{RepNode}$ such that $n_0=n$, $n_{k-1}=n'$ and for every $i<k-1$, either $n_{i+1}$ is a child of $n_i$, or $n_i$ is a bud with $\sf{sprout}(n_i)=n_{i+1}$. \footnote{
        Note that reachability of $n'$ from $n$ need not imply the existence of a \emph{path} from $n$ to $n'$, as paths cannot use back-edges by definition.
    }
\end{definition}

The following lemma is essentially a reformulation of \cite[Proposition 13]{leigh_gtc_2024}; as our notion of reset proofs is more strict than Leigh and Wehr's, the proof can be readily applied here. A proof sketch can be found in Apppendix \ref{app: proofs}.

\begin{lemma}\label{lem: lower bound for reachable buds}
    Any finite set of mutually reachable buds contains an oldest bud.
\end{lemma}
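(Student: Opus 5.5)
Given a finite set $B$ of mutually reachable buds, I will show that any two buds $b,b'\in B$ are comparable in the age order. Since the age order is defined by a prefix relation on the lists $\sf{Name}(\cdot)\upharpoonright\sf{prog}(\cdot)$ together with agreement of the variables carried by these names, it is transitive. It then suffices to pick the bud whose list $\sf{Name}(b)\upharpoonright\sf{prog}(b)$ is shortest; pairwise comparability forces this bud to be older than all others.

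\textbf{Walking along a path.} Since $b$ reaches $b'$, there is a walk in $\sf{RepNode}$ from $b$ to $b'$ using child-steps and back-edges. I will track what happens to the names along this walk.
\begin{itemize}
\item At a back-edge step from a bud $c$ to $\sf{sprout}(c)$, clause 1 of Theorem \ref{thm:existence cyclic rep with prog name} gives $\sf{Name}(c)=\sf{Name}(\sf{sprout}(c))$ and equal stacks. The carried variables do change, but in the regular unfolding the subtree above $\sf{sprout}(c)$ is copied above $c$. So I identify the walk with an honest path in $\pi$, where the annotations are computed uniformly and the only changes to $\sf{Name}$ come from appending new names on the right and from resets.
\item Along an honest path, a name $a$ present at every node keeps its variable $\sf{var}(\cdot,a)$. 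Moreover, every name older than $a$ is also preserved, because names are only removed by resets or by dropping out of all stacks. Dropping out would have to be handled by showing that an older name disappears only if it is itself reset or uncovered.
\end{itemize}

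\textbf{Comparing two buds.} Let $p=\sf{prog}(b)$, and consider the cycle from $\sf{sprout}(b)$ to $b$. By clause 2, $p$ survives on the whole segment and is reset at $b$. Now consider the walk $b\to\cdots\to b'\to\cdots\to b$ composed with the cycles of $b$ and $b'$; this gives a closed walk through both sprout--bud segments. The key step is the following: on this closed walk, the shorter of the two prefixes $\sf{Name}(b)\upharpoonright p$ and $\sf{Name}(b')\upharpoonright p'$ must survive unchanged, and must be a prefix of the other.
\begin{itemize}
\item Suppose $p$ is not in $\sf{Name}(b')\upharpoonright p'$. Then either $p$ is removed somewhere along the walk, or it survives.
\item If $p$ survives, then it must be younger than $p'$. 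Otherwise $p'$, being reset at $b'$ while $p$ sits younger, would force the removal of $p$, since resets truncate the prefix after the reset name.
\item This dichotomy gives the prefix relation between the two lists. Agreement of the variables then follows from the preservation fact above.
\end{itemize}

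\textbf{Main obstacle.} I expect the difficulty to be the case where a name older than $p$ disappears along the walk. This can happen not through a reset on that name itself but because it stops occurring in any stack, for example after an older candidate stack was chosen. I would first try to rule this out using the uniform-covering reset rule: a name that vanishes from all stacks along a trace segment should leave younger names uniformly covered, and hence reset, contradicting the survival of $p$ up to $b$. This is where the stricter, uniform notion of reset is used, mirroring \cite[Proposition 13]{leigh_gtc_2024}.
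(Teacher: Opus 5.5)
Your argument rests on the claim that any two buds in $B$ are comparable in age, and this claim is false. Mutual reachability can be routed through a common sprout lower in the tree. So two buds can reach each other while sitting in sibling subtrees whose progressing names carry variables introduced in different branches of $\pi$.

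The paper's own Figure~\ref{fig} shows this. In the representation for $d$, the two highest buds with progressing name $a$ share a sprout. That sprout has two children, both with stacks $ab$ and $ac$, and each child is the sprout of a bud with progressing name $b$. These two $b$-buds reach each other: jump from one to its sprout, climb to the $a$-bud above it, jump to the common sprout, and climb into the other branch. Both have $\sf{Name}(\cdot)\upharpoonright\sf{prog}(\cdot)=(a,b)$. But at each of them $b$ names the first variable of its own sprout, and these variables are introduced at different nodes, so neither bud is older than the other. The four buds together still have an oldest element (either $a$-bud), as the lemma asserts, but pairwise comparability fails.

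Two of your steps let this slip through.
\begin{itemize}
\item Identifying a walk in $\sf{RepNode}$ with an honest path in $\pi$ discards exactly the information you need. After unfolding a back-edge the copied subtree carries fresh variables, so nothing is learned about $\sf{var}(b',\cdot)$ at the actual bud $b'$.
\item The step ``a reset on $p'$ would remove the younger $p$'' is unsound, because a reset on $p'$ only truncates the stacks that contain $p'$.
\end{itemize}
The obstacle you single out is, by contrast, harmless. On the segment from $\sf{sprout}(b)$ to $b$ no name older than $\sf{prog}(b)$ can vanish, since on reappearing it would be appended to the right of $\sf{prog}(b)$.

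What does hold is comparability of nested buds. Suppose $\sf{sprout}(b)$ lies on the segment from $\sf{sprout}(b')$ to $b'$. Then $\sf{Name}(b')\upharpoonright\sf{prog}(b')$ and $\sf{Name}(b)\upharpoonright\sf{prog}(b)$ are both prefixes of $\sf{Name}(\sf{sprout}(b))=\sf{Name}(b)$. Moreover, every name in the shorter of the two carries the same variable at $b$, $\sf{sprout}(b)$ and $b'$, so $b$ and $b'$ are comparable. Similarly, two buds that are both older than a common bud are comparable.

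The paper's proof uses only such comparisons. It takes a closed walk through all of $B$ and lets $b_l$ be the bud whose sprout is the lowest node of that walk. It then orders $B$ by the number of back-edges needed to reach $b_l$, and shows by induction along this order that each downset contains an oldest bud. In the inductive step, minimality of the route from $b$ to $b_l$ forces $\sf{sprout}(b)$ onto the segment from $\sf{sprout}(b')$ to $b'$, where $b'$ is the next bud on the route. Hence $b$ is comparable with the oldest bud already found below $b'$.

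To repair your proof, replace global pairwise comparability by an induction of this kind, which only ever compares nested buds.
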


We now set out to prove the following: 
\begin{proposition}[existence of a reset proof respecting the induction order]\label{prop:SD transitive}
    Every $\cal{C}$-proof $\pi$ has a cyclic representation \(\pi_{\sf{cyc}}\) with progress map $\sf{prog}$ that satisfies the properties of Theorem \ref{thm:existence cyclic rep with prog name} and in addition: for any two buds \(b\) and \(b'\) that can reach each other, if \(\sf{sprout}(b')\) lies strictly below \(\sf{sprout}(b)\), then $b'$ is older than $b$.
\end{proposition}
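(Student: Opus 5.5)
The plan is the Sprenger--Dam unfolding technique, run on the reset proof of Theorem \ref{thm:existence cyclic rep with prog name}: repeatedly unfold back-edges that violate the required property, and show that this process terminates. The key point is that unfolding never changes annotations. $\sf{Name}$, $\sf{Stack}$, $\sf{var}$ and $\sf{Reset}$ are defined along paths of the infinite proof $\pi$, and an unfolded representation is again a finite subtree of $\pi$ with buds and sprouts. So, for properties (1) and (2) of Theorem \ref{thm:existence cyclic rep with prog name}, we only need that every new back-edge is obtained from an old one by a shift along an isomorphic copy. Such a shift preserves $\sf{Name}$ and the stacks. It also keeps $\sf{prog}(b)$ present on the new sprout--bud path, since that path is a concatenation of old sprout--bud segments, each of which keeps the relevant name.

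Call a pair of mutually reachable buds $(b,b')$ \emph{bad} if $\sf{sprout}(b')$ lies strictly below $\sf{sprout}(b)$ but $b'$ is not older than $b$.

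First step: work inside one strongly connected component of the reachability relation. By Lemma \ref{lem: lower bound for reachable buds}, every finite set of mutually reachable buds has an oldest bud $b_0$.

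Second step, the unfolding: for each bud $b'$ in the component other than $b_0$, we want $\sf{sprout}(b')$ to lie above $\sf{sprout}(b_0)$. This can be arranged as follows. Take the subtree at $\sf{sprout}(b_0)$ and unfold the other back-edges of the component inside it, following Example \ref{exam:importunfold}: whenever a bud $b'$ points to a sprout strictly below $\sf{sprout}(b_0)$, replace $b'$ by a copy of the subtree rooted at its sprout. Inside that copy, buds may point either to the original sprouts or to the copies. Choose copies whenever the original sprout lies below $\sf{sprout}(b_0)$, so that after unfolding all sprouts of the component, except that of $b_0$, lie above $\sf{sprout}(b_0)$.

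Third step: remove $b_0$ from consideration and recurse on the remaining buds. Unfolding may split the component into smaller ones, so the recursion goes on the new components restricted to the region above $\sf{sprout}(b_0)$. Termination is by induction on the number of buds of the original representation per component, where the measure is the multiset of component sizes. In each unfolded copy, the buds that reach $b_0$ again point to $\sf{sprout}(b_0)$, which is now the lowest sprout in that region. After the recursion, for mutually reachable $b,b'$ with $\sf{sprout}(b')$ below $\sf{sprout}(b)$, the bud $b'$ was the oldest at the recursion level where the sprouts were separated. Hence $b'$ is older than $b$, because "older" is a prefix relation with agreement on variables, and this relation is preserved under copying.

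The main obstacle will be twofold. First, the age relation must be compared for copies: buds in different copies have different $\sf{var}$ values. So we must check that the second clause of the definition of "older" (equality of $\sf{var}(b,a)$ for $a$ in $\sf{Name}(b')\upharpoonright\sf{prog}(b')$) holds for the particular copies that end up mutually reachable. I would argue this from the remark after Theorem \ref{thm:existence cyclic rep with prog name}: names older than $\sf{prog}$ keep their variable along the whole sprout--bud path. Second, we must show that the unfolding terminates with a finite tree. For that I would bound the depth of nested copies by the number of original back-edges, since each recursion level removes one oldest bud.
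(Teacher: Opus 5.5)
There is a genuine gap in your last step, and it cannot be patched: the statement as worded is not attainable. Once $b_0$ is set aside, you only compare buds inside a common sub-component. But $b_0$'s back-edges stay in the final representation, so buds from different sub-components can still reach each other through $b_0$. For such a pair, nothing makes the bud with the lower sprout older.

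Here is a counterexample. Take $F,Y,X$ of arity $3$ with the following calls:
\begin{itemize}
\item $F(w,u,v)\to Y(w,u,v)$ and $Y(w,u,v)\to X(w,u,v)$, all edges $\ge$;
\item $Y(w,u,v)\to Y(w,u,v')$ with $v>v'$, and $w,u$ preserved;
\item $X(w,u,v)\to X(w,u',v)$ with $u>u'$, and $w,v$ preserved;
\item $X(w,u,v)\to F(w',u',v')$ with the single edge $w>w'$.
\end{itemize}
This system is size-change terminating: trace $v$, $u$ or $w$, depending on whether a branch ends in $Y$-loops, ends in $X$-loops, or passes $X\to F$ infinitely often. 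At every node the names are ordered as ($w$-name, $u$-name, $v$-name). $Y$-buds reset the $v$-name and $X$-buds reset the $u$-name, so a $Y$-bud is never older than an $X$-bud. Nodes entered by $F\to Y$ or $Y\to X$ have empty $\mathsf{Reset}$ and cannot be buds. Since $X\to F$ drops the $u$- and $v$-names, condition (2) of Theorem~\ref{thm:existence cyclic rep with prog name} forces every $Y$-cycle to consist of $Y\to Y$ steps, and every $X$-cycle of $X\to X$ steps.

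Now take any $Y$-bud $y$ with sprout $s_y$. Let $x$ be the first bud on the $X$-chain above the $X$-child of $s_y$; then $y$ reaches $x$, and $\mathsf{sprout}(x)$ lies strictly above $s_y$. Let $f$ be the $F$-child of $\mathsf{sprout}(x)$.
\begin{itemize}
\item If $f$ is a bud, its sprout is an $F$-node strictly below $s_y$, because only $X$-nodes lie in between. So $x$ reaches $y$, and $(x,y)$ violates the proposition.
\item Otherwise $f$ is internal. The $Y$-chain above $f$ yields a new $Y$-sprout strictly above $f$, and iterating this contradicts finiteness.
\end{itemize}
In your recursion, such $x$ and $y$ are connected only through $F$-buds. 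So no level ever has $y$ as the oldest bud of a component containing $x$, and your claim that ``$b'$ was the oldest at the level where the sprouts were separated'' fails for them.

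What can be achieved is the nested form, Lemma~\ref{lem:SD simpel}, with the extra hypothesis that $\mathsf{sprout}(b)$ lies below $b'$. That is what the paper actually constructs, by a different route from yours. Instead of iterated unfolding, it works in one shot on the infinite proof. An \emph{avail} map records, for each original bud, the copy that may serve as its sprout, and discards every bud not older than a newly visited one. Each branch is cut at the first revisit of a still-available bud, and K\"onig's lemma gives finiteness.

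The paper's chaining from that lemma to the proposition breaks at exactly your spot. Minimality of the reachability witness forces $\mathsf{sprout}(b_{j+1})$ strictly below $\mathsf{sprout}(b_j)$ only when another bud follows. At the last link, $\mathsf{sprout}(b_m)$ need not lie below $\mathsf{sprout}(b_{m-1})$. In the example, the witness $x\to\mathsf{sprout}(x)\to f\to\mathsf{sprout}(f)\to\cdots\to y$ ends with $\mathsf{sprout}(y)$ above $\mathsf{sprout}(f)$.

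Two further steps of yours would need repair even for the nested form.
\begin{itemize}
\item Your argument for condition (2) is insufficient. A redirected cycle runs through other buds' cycles, and those preserve only names at least as old as their own progress names. So $\mathsf{prog}(b)$ survives only if $b$ is older than every bud whose back-edge is traversed, not merely because the path is a concatenation of old segments. The avail bookkeeping is what guarantees this: a cycle for $b$ closes only if $b$ outlived every bud visited in between.
\item Termination of your second step is not shown. Copies of component buds whose sprouts lie strictly below the copied root still point below $\mathsf{sprout}(b_0)$, and so force further unfoldings that your measure does not account for.
\end{itemize}
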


To obtain Proposition \ref{prop:SD transitive}, we first prove the following weaker statement using the technique of Sprenger and Dam \cite[Thm. 5]{sprenger_structure_2003}:
\begin{lemma}\label{lem:SD simpel}
  Every $\cal{C}$-proof $\pi$ has a cyclic representation \(\pi_{\sf{cyc}}\) with progress map $\sf{prog}$ that satisfies the properties of Theorem \ref{thm:existence cyclic rep with prog name} and in addition: for any two buds \(b\) and \(b'\), if \(\sf{sprout}(b')\) lies strictly below $\sf{sprout}(b)$ and $\sf{sprout}(b)$ lies below \(b'\), then $b'$ is older than $b$.
\end{lemma}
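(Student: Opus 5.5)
Start from the cyclic representation $\pi_{\sf{cyc}}$ given by Theorem \ref{thm:existence cyclic rep with prog name}, and repair it by Sprenger and Dam's unfolding technique (Example \ref{exam:importunfold}). A pair of buds $(b,b')$ is \emph{bad} if $\sf{sprout}(b')$ lies strictly below $\sf{sprout}(b)$, $\sf{sprout}(b)$ lies below $b'$, and $b'$ is not older than $b$. The repair step for a bad pair is to unfold $b'$: replace the bud $b'$ by a copy of the subtree rooted at $\sf{sprout}(b')$. Inside the copy, every bud whose sprout lies in the copied region is redirected to the \emph{copy} of its sprout, and every bud whose sprout lies strictly below $\sf{sprout}(b')$ keeps its original sprout.

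The first step is to check that this operation preserves the conclusions of Theorem \ref{thm:existence cyclic rep with prog name}. The annotations $\sf{Name}$, $\sf{Stack}$ and $\sf{Reset}$ of a node depend only on its subderivation and on the annotations along the path. By condition (1), $b'$ and $\sf{sprout}(b')$ carry the same stacks and names. Hence every node in the copy receives exactly the annotations of the node it copies. Every back-edge inside the copy therefore still satisfies (1) and (2), with the same progressing name. Moreover, $\sf{var}$ only changes by a consistent renaming of the variables introduced above $\sf{sprout}(b')$.

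The second step is a termination measure, which I expect to be the main obstacle. After unfolding $b'$, the copy of $b$ now has its sprout (the copy of $\sf{sprout}(b)$) above the original $b'$ position. So the bad pair is no longer of the form required, because $\sf{sprout}(b)$ is no longer below the relevant $b'$. However, new bad pairs may arise inside the copy. To control this, I would argue as Sprenger and Dam do: for each bud $b$, consider the number of sprouts strictly below $\sf{sprout}(b)$ on its branch whose buds are strictly younger than or incomparable with $b$. Then show that the multiset of these counts, or alternatively the depth structure of nested back-edges, decreases in a well-founded order.

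To get such an order, I would process the buds in a fixed order. Using Lemma \ref{lem: lower bound for reachable buds} (oldest buds among mutually reachable ones), I repeatedly make the current oldest bud's sprout as low as possible relative to the other back-edges that overlap it. Each unfolding only pushes sprouts of younger buds upward, past the sprouts of older ones. The key claim to verify is the following: in the relevant situation, where $\sf{sprout}(b)$ lies on the path from $\sf{sprout}(b')$ to $b'$, the buds $b$ and $b'$ reach each other. Lemma \ref{lem: lower bound for reachable buds} then gives comparability of the two buds in the age order, so one of them is older. If $b$ is older, unfolding $b'$ fixes the nesting. This case analysis is the delicate point, since age is only a preorder on reachable sets.

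Finally, since the age order on names is bounded (Lemma \ref{lem: finitely many names}), only finitely many nesting levels can occur. The unfolding process therefore stops and yields a finite representation with no bad pairs, which is the representation the lemma requires.
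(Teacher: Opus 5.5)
\textbf{The repair step does not repair anything.} In your rule, $\mathsf{sprout}(b')$ itself lies in the copied region. So the copy $b'_c$ of $b'$ is sent to the copy of $\mathsf{sprout}(b')$, which sits at the old position of $b'$. The copy $b_c$ of $b$ is sent to the copy of $\mathsf{sprout}(b)$. The copied region is therefore an exact replica of the cyclic structure above $\mathsf{sprout}(b')$. Hence $(b_c,b'_c)$ is a bad pair of exactly the same shape, one level higher. Your remark that ``$\mathsf{sprout}(b)$ is no longer below the relevant $b'$'' overlooks $b'_c$, and iterating the step never terminates.

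The unfolding that actually inverts the nesting is the right-hand one in Example~\ref{exam:importunfold}. There $b'_c$ goes to the copy of $\mathsf{sprout}(b')$, but $b_c$ goes back to the \emph{original} $\mathsf{sprout}(b)$. Afterwards $\mathsf{sprout}(b)$ lies strictly below $\mathsf{sprout}(b'_c)$, which lies below $b_c$, so the constraint becomes that $b$ is older than $b'$. This lengthened back-edge is not a copy of an existing one, so your Step~1 does not cover it. You must check condition~(2) of Theorem~\ref{thm:existence cyclic rep with prog name} along the whole path from $\mathsf{sprout}(b)$, via the old position of $b'$, up to $b_c$. This holds precisely because $b$ is older than $b'$, so $\mathsf{prog}(b)$ survives everywhere between $\mathsf{sprout}(b')$ and $b'$. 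Your observation that $b$ and $b'$ reach each other, and hence are comparable by Lemma~\ref{lem: lower bound for reachable buds}, is correct and is what justifies this.

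\textbf{The termination argument is missing, even with the correct unfolding.} The measure you sketch is neither pinned down nor shown to decrease, and no naive count can decrease. The copy contains replicas of every other bad pair whose sprouts both lie above $\mathsf{sprout}(b')$. The lengthened edge $b_c\to\mathsf{sprout}(b)$ also now spans copies of buds whose sprouts lie strictly below $\mathsf{sprout}(b')$, and these can form new bad pairs with $b_c$. The bound on the number of names only limits the length of strict age chains; it says nothing about how many unfolding steps are needed.

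The paper avoids iterative repair altogether:
\begin{itemize}
\item It unfolds $\pi_{\mathsf{cyc}}$ completely back into $\pi$ via the map $\mathsf{cycRep}$.
\item Along every branch it maintains a partial map $\mathsf{avail}$, recording for each bud an earlier node available as its sprout. Whenever a new entry is recorded, it discards every bud that is not older than the new one.
\item A node becomes a bud of the new representation at the first point where its representative has an available earlier node.
\end{itemize}
The required nesting property then follows directly from the discarding rule. Every infinite branch is eventually cut: the representatives that recur infinitely often are mutually reachable, so they contain an oldest bud (Lemma~\ref{lem: lower bound for reachable buds}), and its entry is never discarded. K\"onig's lemma gives finiteness. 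The properties of Theorem~\ref{thm:existence cyclic rep with prog name} are inherited because $\mathsf{cycRep}$ preserves names and stacks. To salvage an iterative approach you would need both the correct orientation and an explicit well-founded measure.
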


Note that this is exactly the situation depicted in Figure \ref{fig:reachability}(a).

\begin{proof}
    Let the cyclic representation \(\pi_{\sf{cyc}}\) with progress map $\sf{prog}$ be given by Theorem \ref{thm:existence cyclic rep with prog name}. The main idea is that, to obtain the desired representation $\pi'_{\sf{cyc}}$,  we will unfold \(\pi_{\sf{cyc}}\) while only introducing a back-edge for a bud if all older buds already have an `available' sprout node lower in the tree. 
    
    To make this precise, we first define a map $\sf{cycRep}:\sf{Node}\to\sf{RepNode}$ that sends each node of $\pi$ to a representative in $\sf{RepNode}$. We define $\sf{cycRep}$
    with induction on the height of the node: for a branch \((n_i)_i\), let \((j_i)_i\) be the indices such that \(n_{i+1}=\sf{Child}(n_i)_{j_i}\), and: \begin{align*}
        \sf{cycRep}(n_0)&\coloneqq n_0, \\
        \sf{cycRep}(n_{i+1})&\coloneqq\begin{cases}
            \sf{Child}(\sf{sprout}(\sf{cycRep}(n_i))_{j_i}\quad\text{if this is defined,} \\
            \phantom{\sf{sprout}(}\sf{Child}(\sf{cycRep}(n_i))_{j_i}\quad\text{otherwise.}
        \end{cases}
    \end{align*}
    Note that $\sf{cycRep}$ sends a bud (and `copies' thereof) to itself, and not to its sprout. 

    We now define partial functions \(
        \sf{avail}(n_i):\sf{Bud}\rightharpoonup\sf{Node}
    \), which indicate which (if any) potential sprout is available for a given bud, using the following algorithm (with reassignments): \begin{itemize}
        \item if \(i=0\), initiate \(\sf{avail}(n_0)\) as the empty function, otherwise \(\sf{avail}(n_{i})\coloneqq\sf{avail}(n_{i-1})\);
        \item then, for every bud $b'$ with \({b'=\sf{cycRep}(n_{i})}\):
        \begin{itemize}
            \item if \(b'\notin\sf{dom}(\sf{avail}(n_{i}))\), then set \(\sf{avail}(n_{i})(b')\coloneqq n_{i}\), and remove all buds from the domain of \(\sf{avail}(n_{i})\) that are not older than \(b'\) (including incomparable ones).
        \end{itemize}
    \end{itemize}

    We build a new cyclic representation $\pi'_{\sf{cyc}}$ with buds $\sf{Bud}'$, sprout map \(\sf{sprout}'\) and progress map $\sf{prog}'$.
    For every infinite branch $(n_i)_{i<\omega}$, we let \(t\) be the least index such that $\sf{avail}(n_t)(\sf{cycRep}(n_t))$ is defined as some \(n_s\) where \(s<t\), in which case we let $n_t$ be in $\sf{Bud}'$, \(\sf{sprout}'(n_t)\coloneqq n_s\), \(\sf{prog}'(n_{i})\coloneqq\sf{prog}(\sf{cycRep}(n_{i}))\).
    This gives a finite tree by K\"onig's lemma.
    
    To see the existence of such an index, consider the infinite sequence $( \sf{cycRep}(n_{i}))_{i<\omega}$ and let $k$ be such that for every $i\ge k$, the node $\sf{cycRep}(n_{i})$ occurs infinitely often in this sequence. Note that all the buds in $( \sf{cycRep}(n_{i}))_{i\ge k}$ can reach each other, so by Lemma \ref{lem: lower bound for reachable buds} there exists a $b\in\sf{Bud}$ that is older than any bud in $( \sf{cycRep}(n_{i}))_{i\ge k}$.
    Now let $s\geq k$ be such that $\sf{cycRep}(n_{k'})=b$.
    If \(b\notin\sf{dom}(\sf{avail}(n_{k'}))\), then a value for \(b\) is added to \(\sf{avail}(n_s)\), otherwise a value is already there. 
    Moreover, this value can never be removed, as $b$ is older than any bud $\sf{cycRep}(n_{i})$ for $i\geq k'$.
    So, we can take some $t>s$ such that $\sf{cycRep}(n_t)=b$ and \(\sf{avail}(\sf{cycRep}(n_t))\) is defined and lower in the tree.
    
    We check that $\pi'_{\sf{cyc}}$ has the desired properties. With a straightforward induction, one shows that \(\sf{cycRep}\) preserves \(\sf{Name}\) and \(\sf{Stack}\) because the map \(\sf{sprout}\) does.
    Thus \(\pi_{\sf{cyc}}'\) inherits the properties of Theorem \ref{thm:existence cyclic rep with prog name} from  \(\pi_{\sf{cyc}}\).
    Now, suppose that we have two buds \(b,b'\in\sf{Bud}'\) such that \(\sf{sprout}'(b')\) lies strictly below $\sf{sprout}'(b)$ and $\sf{sprout}'(b)$ lies below to \(b'\). Then we added a value (namely $\sf{sprout}'(b)$) for $\sf{cycRep}(b)$ to $\sf{avail}(\sf{sprout}'(b))$.  If $\sf{cycRep}(b')$ is not older than $\sf{cycRep}(b)$, then we would have removed a possible value for  $\sf{cycRep}(b')$ from $\sf{avail}(\sf{sprout}'(b))$. However, as \(\sf{sprout}'(b')\) lies strictly below $\sf{sprout}'(b)$ and $\sf{sprout}'(b)$ lies below to \(b'\),  this would mean that $\sf{avail}(b')$ cannot assign the value $\sf{sprout}'(b')$ to $\sf{cycRep}$, contradicting that $\sf{sprout}'(b')$ is the sprout of $b'$. Thus $\sf{cycRep}(b)$ is older than $\sf{cycRep}(b')$, and moreover, as $\sf{sprout}'(b)$ lies between $\sf{sprout}'(b')$ and $b'$, every name $a$ in $\sf{Name}(b')\upharpoonright \sf{prog}'(b')$ must satisfy $\sf{var}(b,a)=\sf{var}(b',a)$ (See the comment below Theorem \ref{thm:existence cyclic rep with prog name}), which implies that $b$ is older than $b'$.
\end{proof}
\begin{proof}[Proof of Proposition \ref{prop:SD transitive}] 
    Let $n_0,\dots,n_k\in\sf{RepNode}$ be a sequence of minimal length that witnesses the reachability of $b'$ from $b$, and let $b_0,\dots,b_m$ denote the subsequence of buds. 
    Then $b_0=b$, $b_m=b'$ and $m>0$ because $b\neq b'$.
    Moreover, for every $j<m$, the bud $b_{j+1}$ lies above $\sf{sprout}(b_{j})$, and so $\sf{sprout}(b_{j+1})$ must either lie above or strictly below $\sf{sprout}(b_{j})$; however, minimality of $k$ ensures that it is the latter.
    Then $b_{j+1}$ is older than $b_{j}$ by Lemma \ref{lem:SD simpel}, so $\sf{Name}(b_{j+1})\upharpoonright \sf{prog}(b_{j+1})$ is a prefix of $\sf{Name}(b_{j})\upharpoonright \sf{prog}(b_{j})$.
    
    Transitivity (of both the prefix relation and of equality) then implies that $\sf{Name}(b')\upharpoonright \sf{prog}(b')$ is a prefix of $\sf{Name}(b)\upharpoonright \sf{prog}(b)$ and that $\sf{var}(\sf{sprout}(b),a)=\sf{var}(\sf{sprout}(b'),a)$ for every name $a$ in $\sf{Name}(b')\upharpoonright\sf{prog}(b')$.
\end{proof}

\section{From a cyclic to an inductive proof}\label{sec:transformation}

We are now ready to prove our main theorem:
\begin{theorem}\label{the:main}
     For every $\cal{C}$-proof $\pi$ of a judgment \(A\), there exists a \(\cal C_{\sf{ind}}\)-proof $\pi_{\sf{ind}}$ of \({}\vdash A(\overline x_0)\).
     Moreover, $\pi_{\sf{ind}}$ preserves the structure of \(\pi\): if we forget all applications of extending rules in $\pi_\ind$ and strip every judgment of its context and variables, then we obtain a finite subtree of \(\pi\).
\end{theorem}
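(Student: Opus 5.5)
We take the cyclic representation \(\pi_{\sf{cyc}}\) with progress map \(\sf{prog}\) given by Proposition \ref{prop:SD transitive}, and build \(\pi_\ind\) by recursion over the finite tree \(\sf{RepNode}\), from the root upward. Every node \(n\in\sf{RepNode}\) with path \(n_0,\dots,n_k=n\) is translated into a sequent
\[
    \Delta(n),\,\sf{Hyp}(n)\vdash A_k(\overline{x_k}).
\]
Here \(\Delta(n)\) is a list of inequalities between the variables \(\sf{var}(n,a)\) for \(a\in\sf{Name}(n)\) and \(\overline{x_k}\). It records, for each \(j\) and each \(a\) on \(\sf{Stack}(n)_j\), that \(\sf{var}(n,a)\ge x_{k,j}\), and that this is strict when \(a\) is not the last name of the stack. \(\sf{Hyp}(n)\) is the list of induction hypotheses that are still \emph{open} at \(n\).

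At an internal node we apply the corresponding \(\cal C\)-rule of \(\cal C_\ind\). This adds \(\overline{x_k}\gtrsim_{G}\overline{x_{k+1}}\) to the context. We then use \(\ge\)-trans, \(>\)-extend and weakening to rewrite the context into \(\Delta(n_{k+1})\), following the algorithm that defines the new stacks. The oldest candidate stack yields exactly the chain of inequalities we need. A reset on \(a\) simply forgets the inequalities that mention the removed names, which is a weakening step. Since only \(\cal C\)-rules and extending rules are used, stripping the extending rules recovers the tree shape of \(\pi_{\sf{cyc}}\), which is a finite subtree of \(\pi\). That gives the structural claim.

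Induction hypotheses are introduced at sprouts. Let \(s\) be a sprout and \(b_1,\dots,b_r\) the buds with \(\sf{sprout}(b_i)=s\). For each \(b_i\) we take \(a_i\coloneqq\sf{prog}(b_i)\), sort these names by age, and apply \(>\)-ind\('_{\sf{var}(s,a_i)}\) once per name, oldest first. Each application quantifies over all variables in the current context other than those carrying names older than \(a_i\); those older names stay fixed because of the remark after Theorem \ref{thm:existence cyclic rep with prog name}. Before each application we use weakening to drop hypotheses of buds whose sprout lies below but which are not reachable from the current subtree. This is the forgetting from Example \ref{exam:importforget}, and it keeps the induction hypotheses as strong as possible.

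At a bud \(b\) with sprout \(s\) we instantiate the hypothesis for \(\sf{prog}(b)\) with the current variables. Since \(\sf{Name}(b)=\sf{Name}(s)\) and the stacks agree, \(\Delta(b)\) is \(\Delta(s)\) under renaming. So the antecedent \(\Gamma[\overline z'/\overline z]\) of the hypothesis consists of \(\Delta(b)\) together with those hypotheses of \(\sf{Hyp}(s)\) that were introduced before it. The progress requirement \(\sf{var}(s,\sf{prog}(b))>\sf{var}(b,\sf{prog}(b))'\) follows from the reset on \(\sf{prog}(b)\) between \(s\) and \(b\). The reset occurs only when the name is uniformly covered by a younger name that has strictly progressed on every stack containing it. Uniformity is what lets a single strict inequality be derived for all stacks at once without assuming linearity. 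We derive it with the \(>\)-extend rules. We then close the sequent with \(\to\)-elim and \(\forall\)-elim.

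The main obstacle is discharging the earlier hypotheses in that antecedent. An older hypothesis \(\sf{hyp}'\), for a name \(a'\) at a sprout \(s'\) at or below \(s\), must be re-proved for the new variables. We plan to do this by showing that \(\sf{hyp}'\) at the new arguments follows from \(\sf{hyp}'\) at the current ones. This needs \(\sf{var}(b,a')\le\sf{var}(s,a')\), and it needs the variables fixed by \(\sf{hyp}'\) to be unchanged between \(s\) and \(b\). Proposition \ref{prop:SD transitive} provides exactly this: any bud of \(s'\) reachable from \(b\) with a lower sprout is older than \(b\). Hence \(a'\) lies in the prefix \(\sf{Name}(b)\upharpoonright\sf{prog}(b)\) and is preserved from \(s\) to \(b\). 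The remaining hypotheses are either unreachable, and were therefore weakened away, or they have a sprout above \(s\), and then they were not yet in the context at \(s\). Making this bookkeeping precise is the delicate part. We would carry it out as a simultaneous induction that maintains the invariant: \(\sf{Hyp}(n)\) contains only hypotheses for buds reachable from \(n\), and each of them is ordered by bud age.
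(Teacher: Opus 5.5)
Your route is the paper's: the representation from Proposition~\ref{prop:SD transitive}, hypotheses introduced at sprouts oldest first after weakening away unreachable ones, and Proposition~\ref{prop:SD transitive} to discharge older hypotheses at a bud. However, the bud step fails with the invariant you chose.

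Write \(p\coloneqq\sf{prog}(b)\). \(\Delta(s)\) contains \(x_p\ge x_{s,j}\) for every \(j\) with \(p\in\sf{Stack}(s)_j\). So applying the hypothesis requires a variable \(w\) such that the bud's context proves \(x_p>w\) and \(w\ge x_{t,j}\) for all those \(j\). It must be a variable, because the only terms of \(\cal C_\ind\) are variables and there is no linearity.

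Your renaming by names cannot supply \(w\). The name \(p\) persists from \(s\) to \(b\), so \(\sf{var}(b,p)=x_p\), and the progress premise becomes \(x_p>x_p\). The only candidate is the covering variable \(x_{\sf{cov}}\coloneqq\sf{var}(b,\sf{cov}(b))\) of the reset on \(p\), and that reset happens at \(b\) itself. But your internal step implements the reset as weakening away every inequality that mentions removed names. So \(\Delta(b)\) never mentions \(x_{\sf{cov}}\). Moreover, \(p\) is now last on its stacks, so \(\Delta(b)\) only records \(x_p\ge x_{t,j}\), not even strictly.

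A concrete case is the leftmost bud of the \(d\)-representation in Figure~\ref{fig}, read with the binary-tree version of Example~\ref{exam:run1}. The two current variables there are the two immediate subtrees of \(x_{\sf{cov}}\), hence incomparable. Nothing in \(\Delta(b),\sf{Hyp}(b)\) provides a variable strictly below \(x_p\) and above both. So your remark that uniformity lets a single strict inequality be derived has nothing to derive it from.

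The repair is what the paper does with \(\sf{RelAnc}\). At a bud, keep \(x_{\sf{cov}}\) together with \(x_p>x_{\sf{cov}}\) and \(x_{\sf{cov}}\ge x_{t,j}\) for every \(j\) with \(p\in\sf{Stack}(b)_j\); this is exactly where uniformity of the cover is used. Then substitute \(x_p\mapsto x_{\sf{cov}}\), or \(x_p\mapsto x_{t,j}\) if \(x_p=x_{s,j}\).

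The same witness is needed when you discharge an older hypothesis whose progressing name is also \(p\), for example one coming from another bud of the same sprout. There your condition \(\sf{var}(b,a')\le\sf{var}(s,a')\) holds trivially by equality, and what is actually needed is \(x_{\sf{cov}}\le x_p\).

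A second, smaller fix: you cannot keep fixed every variable carrying a name older than \(a_i\). The current variables \(\overline{x_s}\) can carry such names when those names are created at \(s\). For instance, take the root as sprout with \(x_{0,0}\) named \(a_0\) and \(\sf{prog}=a_1\). Unless \(\overline{x_s}\) is quantified, the hypothesis concludes the judgment at \(\overline{x_s}\) instead of \(\overline{x_t}\). The paper avoids this by leaving only \(x_{\sf{prog}(b)}\) free.

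With these repairs the rest of your plan matches the paper's Lemma~\ref{lem:provable judgment}. Your leaner \(\Delta\), which omits inequalities among named variables, appears to suffice.
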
 

From Proposition \ref{prop:SD transitive}, we obtain our cyclic representation \(\pi_{\sf{cyc}}\) with progress map $\sf{prog}$, which we will use to define \(\pi_{\sf{ind}}\).
First, we use the annotations to define for any node $n_k\in\sf{RepNode}$ with path $n_0,\dots,n_k$ from the root a sequent \(\sf{judg}{}_\ind(n_k)\): \begin{align*}
    \sf{Ineq}(n_k),\sf{Hyp}(n_k)\vdash \sf{judg}(n_k)(\overline{x}_k).
\end{align*}
The list $\sf{Ineq}(n_k)$ will contain information about the order of the currently relevant variables, and \(\sf{Hyp}(n_k)\) will contain induction hypotheses.
Our goal will be to show that this judgment is provable in the extended system $\cal{C}_{\sf{ind}}$.

To define $ \sf{Ineq}(n_k)$ and $\sf{Hyp}(n_k)$, we define the set $\sf{RelAnc}(n_k)\subseteq\sf{Var}(n_k)$ of relevant ancestors, which contains all variables of the form $\sf{var}(n_k,a)$ with $a$ in $\sf{Name}(n_k)$, and, if $n_k$ is a bud, also the variable $\sf{var}(n_k,\sf{cov}(n_k))$ named by the uniform cover \(\sf{cov}(n_k)\) of \(\sf{prog}(n_k)\).
To be precise: for every bud \(b\) we have that \(\sf{prog}(b)\in\sf{Reset}(b)\), so before performing the reset on \(\sf{prog}(b)\), this name was uniformly covered by another name, which we denote by \(\sf{cov}(b)\).
After resetting, we have removed the variable \(\sf{cov}(b)\) from \(\sf{Name}(b)\) and all stacks \(\sf{Stack}(b)_j\), but it still makes sense to talk about the \emph{covering variable} \(x_{\sf{cov}(b)}\coloneqq\sf{var}(b,\sf{cov}(b))\) that carried the covering name, which is necessarily a strict ancestor of the \emph{progressing variable} \(x_{\sf{prog}(b)}\coloneqq\sf{var}(b,\sf{prog}(b))\).

We can then define $\sf{Ineq}(n_k)$ as a list of inequalities that contains all information about \(\sf{RelAnc}(n_k)\) and \(\overline x_k\):
\begin{itemize}
    \item  $y>z\in \sf{Ineq}(n_k)$ iff $y,z\in\sf{RelAnc}(n_k)$ and $y$ is a strict ancestor of $z$;
    \item $y\geq z\in \sf{Ineq}(n_k)$ iff $y\in\sf{RelAnc}(n_k)$, $y$ is an ancestor of $x_{k,j}$ for some $j<\sf{ob}(\sf{judg}(n_k))$, and \(z=x_{k,j}\).
\end{itemize}
Due to the exchange rule, the order of the list will not matter.

The list $\sf{Hyp}(n_k)$ collects the currently relevant induction hypotheses, which are added in at the sprout nodes, and is defined by the following algorithm:
\begin{itemize}
    \item If $k=0$, start with $\sf{Hyp}(n_0)\coloneqq \emptyset$, otherwise start with $\sf{Hyp}(n_k)\coloneqq \sf{Hyp}(n_{k-1})$.
    \item If $n_k$ is a sprout, then for any bud $b$ that is not reachable from $n_k$, remove $\sf{hyp}{}_b$ (if defined) from $\sf{Hyp}(n_k)$. 
    Then, for any bud $b$ with $\sf{sprout}(b)=n_k$, ordered from old to young:\footnote{Note that buds with the same sprout are indeed comparable. For buds of the same age, any order suffices.}\label{footnote: buds same age} 
    \begin{itemize}
        \item set $\sf{OldHyp}{}_b\coloneqq\sf{Hyp}(n_k)$ and extend $\sf{Hyp}(n_k)$ with:
        \begin{align*}
           \sf{hyp}{}_b\coloneqq\forall\overline x'((x_{\sf{prog}(b)}>x'_{\sf{prog}(b)})&\to{}\!\!\!\!\!\!\!\!\!\!\!\!\!\!\!\!\!\!\!\! \\
           \sf{Ineq}(n_k)[\overline x'/\overline x] &\to{}\!\!\!\!\!\!\!\!\!\!\!\!\!\!\!\!\!\!\!\! \\
           \sf{OldHyp}{}_b[\overline x'/\overline x] &\to \sf{judg}(n_k)(\overline x_k')),\!\!\!\!\!\!\!\!\!\!\!\!\!\!\!\!\!\!\!\!
        \end{align*}
        where $\overline{x}$ ranges over $\sf{RelAnc}(n_k)$ and $\overline{x}_k$.
    \end{itemize}
\end{itemize}
The following lemma states some key properties of $\sf{Ineq}(n_k)$ and $\sf{Hyp}(n_k)$. The proof can be found in Appendix \ref{app: proofs}.

\newcounter{count}

\begin{lemma}\label{lem properties of the IH}
    For any node \(n_k\in\sf{RepNode}\) with path $n_0,\dots,n_k$ from the root, the following hold:
    \begin{enumerate}
        \item If $k>0$ and $n_k= \sf{Child}(n_{k-1})_j$, then for any $\varphi$ in $\sf{Ineq}(n_k)$ we can prove $\sf{Ineq}(n_{k-1}), \overline{x}_{k-1}\gtrsim_{\sf{Graph}(\sf{rule}(n_{k-1}))_j}\overline{x}_{k}\vdash \varphi$ using only the extending rules of \(\cal C_\ind\).
        \item For any $a,a'$ in $\sf{Name}(n_k)$, we have $\sf{var}(n_k,a)> \sf{var}(n_k,a')$ in $\sf{Ineq}(n_k)$ iff there is a stack $\sf{Stack}(n_k)_j$ in which the name $a$ occurs below $a'$.\setcounter{count}{\value{enumi}}
\end{enumerate}
In addition, for any $b\in\sf{Bud}$ we have the following:
\begin{enumerate}\setcounter{enumi}{\value{count}}
        \item  $\sf{hyp}{}_b$ has $x_{\sf{prog}(b)}$ as the only free variable; 
        \item $\sf{hyp}{}_b$ occurs in $\sf{Hyp}(b)$ and $\sf{OldHyp}{}_b$ is a sublist of $\sf{Hyp}(b)$;
        \item the judgment $\sf{hyp}{}_b,  x_{\sf{prog}(b)}\geq y\vdash \sf{hyp}{}_b[y/x_{\sf{prog}(b)}]$ is provable using only the extending rules of \(\cal C_\ind\).
    \end{enumerate}
\end{lemma}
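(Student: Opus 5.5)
We treat the five items separately; (1) and (2) concern only the annotation machinery of Section \ref{sec:safra}, while (3)--(5) follow by induction along the root-to-bud path from the construction of $\sf{Hyp}$.

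For (1), fix $\varphi\in\sf{Ineq}(n_k)$. If $\varphi$ is $y>z$ with $y,z\in\sf{RelAnc}(n_k)$, or $y\ge x_{k,j}$, we trace the witnessing path in the size-change graphs back to level $k-1$. The key fact is that every name in $\sf{Name}(n_k)$ either is new (carried by some $x_{k,j}$) or was already in $\sf{Name}(n_{k-1})$, and every $x_{k,j}$ with a candidate stack has a parent $x_{k-1,j''}$ linked by an edge of $G:=\sf{Graph}(\sf{rule}(n_{k-1}))_j$. Hence a path from $y$ to $z$ either lies entirely in $\sf{Var}(n_{k-1})$, giving a relation between two variables of $\sf{RelAnc}(n_{k-1})$ (modulo the bud-only variable $x_{\sf{cov}}$, which we handle by noting that at a bud the covering variable is still named at $n_{k-1}$ or is new), or its last step passes from some $x_{k-1,j''}$ to $x_{k,j'}$. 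In the latter case we use the $\ge$-facts about $\overline x_{k-1}$ in $\sf{Ineq}(n_{k-1})$, the hypothesis $x_{k-1,j''}\gtrsim x_{k,j'}$, and combine them with $\ge$-trans, $>$-extend$_0$/$_1$ and $\ge$-subsum. The delicate point is that a strict ancestor relation between two relevant ancestors must already be recorded in $\sf{Ineq}(n_{k-1})$ between relevant ancestors, not via a forgotten intermediate. Here we use the invariant that names on one stack are strictly ordered ancestors, and that resets only delete suffixes. Item (2) is proved together with this invariant by induction on $k$: ``$a$ below $a'$ on a stack'' is preserved by stack extension and truncation, and conversely a strict ancestry between named variables forces a common stack by the ``oldest candidate'' rule.

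For (3), $\sf{hyp}_b$ universally quantifies over all variables of $\sf{RelAnc}$ and $\overline x_k$ except $x_{\sf{prog}(b)}$. We show by induction on the construction that $\sf{Ineq}(n_k)[\overline x'/\overline x]$ and $\sf{OldHyp}_b[\overline x'/\overline x]$ contain no unprimed variable other than $x_{\sf{prog}(b)}$. The main worry is that $\sf{OldHyp}_b$ contains some $\sf{hyp}_{b'}$ whose free variable is $x_{\sf{prog}(b')}$. By the induction hypothesis (3) for $b'$, that variable is a relevant ancestor at the sprout (this uses Proposition \ref{prop:SD transitive} and the prefix remark after Theorem \ref{thm:existence cyclic rep with prog name}), so it gets substituted too.

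For (4), the hypothesis $\sf{hyp}_b$ is added at $\sf{sprout}(b)$. Along the path to $b$ it could only be removed at a sprout $n$ from which $b$ is not reachable; but $b$ lies above $n$, so it is reachable. The same argument keeps $\sf{OldHyp}_b$. Item (5) is immediate: from $x_{\sf{prog}(b)}\ge y$ and a premise $y>x'$ we get $x_{\sf{prog}(b)}>x'$ by $>$-extend$_0$, so we intro, instantiate $\sf{hyp}_b$, and discharge. The main obstacle is item (1) for strict inequalities, together with the bud's extra covering variable.
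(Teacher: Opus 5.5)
Your arguments for (3)--(5) follow the paper's: reachability through $b$ for (4), Proposition~\ref{prop:SD transitive} plus the prefix remark for (3), and $>$-extend$_0$ for (5). The gap is in (1)--(2), at the two steps that carry the weight there. The first is the claim that strict ancestry between named variables ``forces a common stack by the oldest-candidate rule''. The second is the claim that a strict fact between relevant ancestors can always be recovered from $\sf{Ineq}(n_{k-1})$ and the last graph.

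Both claims are false. The oldest-candidate rule can discard exactly the lineage that carries the progressing edge. Meanwhile $\sf{Ineq}(n_{k-1})$ records only $y\ge x_{k-1,j}$, never $y>x_{k-1,j}$, when $x_{k-1,j}$ is unnamed. Here is a counterexample.
\begin{itemize}
\item Take a branch $n_0,n_1,n_2$ of a finite proof, with two objects at $n_0$ (names $a$ older than $e$) and three objects at each of $n_1$ and $n_2$.
\item The graphs are $G_0=\{0\ge0,\ 1>0,\ 1\ge1,\ 0\ge2\}$ and $G_1=\{0\ge0,\ 2>0,\ 1\ge1,\ 2\ge2\}$.
\item At $n_1$, the candidates for $x_{1,0}$ are $[a]$ and $[e,c]$. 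Since $a$ is the oldest name in their symmetric difference, the stacks are $[a],[e],[a]$, nothing is reset, and $x_{1,0}$ stays unnamed.
\item At $n_2$, the candidate $[a,d]$ beats $[a]$ for $x_{2,0}$, so the stacks are $[a,d],[e],[a]$. Again nothing is reset, because $a$ also sits on the third stack without $d$.
\end{itemize}
Now $\sf{var}(n_2,e)=x_{0,1}$ and $\sf{var}(n_2,d)=x_{2,0}$ are both relevant, and $x_{0,1}>x_{1,0}\ge x_{2,0}$. Hence $x_{0,1}>x_{2,0}\in\sf{Ineq}(n_2)$, yet $e$ and $d$ share no stack, so the ``only if'' direction of (2) fails.

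Item (1) fails too. We have $\sf{Ineq}(n_1)=\{x_{0,0}\ge x_{1,0},\ x_{0,0}\ge x_{1,2},\ x_{0,1}\ge x_{1,0},\ x_{0,1}\ge x_{1,1}\}$. Together with the inequalities of $G_1$, these hold in $\mathbb{N}$ for $x_{0,0}=x_{1,2}=1$ and all other variables $0$. In that assignment $x_{0,1}>x_{2,0}$ is false, and the extending rules are sound for $(\mathbb{N},\ge,>)$.

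The paper's own proof makes the same two moves. For (2) it infers from strict ancestry that $a$ lies below $a'$ on every stack containing $a'$. For (1) it asserts that $\sf{Ineq}(n_{k-1})$ contains all order information about $\overline x_{k-1}$. So no sharper argument closes your gap: with $\sf{Ineq}$ defined via ancestry, (1) and (2) are false.

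The missing idea is to read $\sf{Ineq}$ off the stacks instead:
\begin{itemize}
\item put $\sf{var}(n_k,a)>\sf{var}(n_k,a')$ in $\sf{Ineq}(n_k)$ iff $a$ lies below $a'$ on some $\sf{Stack}(n_k)_j$;
\item put $\sf{var}(n_k,a)\ge x_{k,j}$ in it iff $a$ occurs in $\sf{Stack}(n_k)_j$;
\item at a bud $b$, also add $x_{\sf{prog}}>x_{\sf{cov}}$, and $x_{\sf{cov}}\ge x_{k,j}$ for each $j$ with $\sf{prog}$ in $\sf{Stack}(b)_j$.
\end{itemize}
Then (2) holds by definition, and (1) becomes the one-step check you were aiming for:
\begin{itemize}
\item Each pre-reset stack at $n_k$ is either the chosen candidate $\sf{Stack}(n_{k-1})_j$, possibly extended by a fresh name, or a fresh singleton. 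Resets only truncate, so the old strict facts are inherited.
\item A freshly pushed name $a'_{j'}$ gets $\sf{var}(n_{k-1},a)>x_{k,j'}$ from $\sf{var}(n_{k-1},a)\ge x_{k-1,j}$ and the progressing edge, by $>$-extend$_0$.
\item The $\ge$-facts follow by $\ge$-trans, $\ge$-subsum and $\ge$-refl.
\item The bud facts come from the pre-reset stacks, on which $\sf{cov}$ sits above $\sf{prog}$ wherever $\sf{prog}$ occurs.
\end{itemize}
The proof of Lemma~\ref{lem:provable judgment} only ever uses $\sf{Ineq}$ through such stack facts, so it survives this change.
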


\begin{lemma}
    For any internal node \(n_k\), the sequent \(\sf{judg}{}_\ind(n_k)\) is derivable from \(\sf{judg}{}_\ind(\sf{Child}(n_k)_0),\sf{judg}{}_\ind(\sf{Child}(n_k)_1),\dots\) using the extending rules of \(\cal C_\ind\) and one application of the \(\sf{rule}(n_k)\)-rule.
\end{lemma}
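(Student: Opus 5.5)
Let \(n_k\) have children \(n_{k+1}^{(j)}\coloneqq\sf{Child}(n_k)_j\), and write \(G_j\coloneqq\sf{Graph}(\sf{rule}(n_k))_j\).
The plan is to work bottom-up: apply the \(\cal C_\ind\)-rule for \(\sf{rule}(n_k)\) once to the goal \(\sf{judg}{}_\ind(n_k)\), taking the fresh variables \(\overline y\) to be \(\overline x_{k+1}\).
This leaves, for each \(j\), the premise
\[
    \sf{Ineq}(n_k),\sf{Hyp}(n_k),\overline x_k\gtrsim_{G_j}\overline x_{k+1}\vdash\sf{judg}(n_{k+1}^{(j)})(\overline x_{k+1}).
\]
It therefore suffices to derive each such sequent from \(\sf{judg}{}_\ind(n_{k+1}^{(j)})\), which is
\[
    \sf{Ineq}(n_{k+1}^{(j)}),\sf{Hyp}(n_{k+1}^{(j)})\vdash\sf{judg}(n_{k+1}^{(j)})(\overline x_{k+1}),
\]
using only extending rules.

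The derivation is a cut against each hypothesis of the child's context. The child sequent has a finite list of hypotheses, and we discharge them one at a time. Starting from the child sequent, we apply \(\to\)-intro repeatedly to move all hypotheses to the right. We then weaken the resulting sequent to the parent's context, followed by the inequalities of \(G_j\). For each discharged hypothesis \(\phi\), we use \(\to\)-elim, for which we need a derivation of \(\phi\) from the extended parent context
\[
    \Delta\coloneqq\sf{Ineq}(n_k),\sf{Hyp}(n_k),\overline x_k\gtrsim_{G_j}\overline x_{k+1}.
\]
Exchange and contraction handle the bookkeeping.

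Two cases have to be covered.
\begin{itemize}
    \item For \(\phi\in\sf{Ineq}(n_{k+1}^{(j)})\), this is exactly Lemma~\ref{lem properties of the IH}(1), weakened by \(\sf{Hyp}(n_k)\).
    \item For \(\phi\in\sf{Hyp}(n_{k+1}^{(j)})\), we need \(\sf{Hyp}(n_{k+1}^{(j)})\) to be a sublist of \(\sf{Hyp}(n_k)\), so that identity plus weakening suffices.
\end{itemize}

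The main obstacle is the second case, because \(\sf{Hyp}\) changes along an edge of the tree. Looking at the defining algorithm, \(\sf{Hyp}(n_{k+1}^{(j)})\) starts from \(\sf{Hyp}(n_k)\). It is modified only if \(n_{k+1}^{(j)}\) is itself a sprout, and then only by removing hypotheses and by adding new \(\sf{hyp}{}_b\) with \(\sf{sprout}(b)=n_{k+1}^{(j)}\). Removed hypotheses cost nothing, since weakening handles them.

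The added hypotheses are the delicate point. The child sequent then has \(\sf{hyp}{}_b\) in its context, but the parent does not, so we cannot discharge them by identity. The intended fix is that this child step is not a plain child step: we insert an application of \(>\)-ind\('_{x_{\sf{prog}(b)}}\) for each such \(b\), from old to young. Each application introduces precisely \(\sf{hyp}{}_b\) with \(\sf{OldHyp}{}_b\) as the preceding context. This works because the hypothesis \(\forall\overline z'(x>x'\to(\Gamma\to\delta)[\ldots])\) produced by the rule matches \(\sf{hyp}{}_b\) once \(\Gamma\) is \(\sf{Ineq},\sf{OldHyp}{}_b\). Lemma~\ref{lem properties of the IH}(3) guarantees that the free variables match.

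We have to check that this matching holds exactly, up to the order of the premises, which is irrelevant by exchange. We also have to check that the context at the sprout before introducing hypotheses coincides with the one obtained by cutting down from the parent. The latter is the content of the previous paragraph, now applied to the sprout's sequent stripped of its new hypotheses.

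The sprout case will be organized so that the children's goals are the sequents before the new \(\sf{hyp}{}_b\) are added. Their derivation by the induction rules is then part of the child's own proof, and the present lemma only needs the sublist property together with Lemma~\ref{lem properties of the IH}(1).
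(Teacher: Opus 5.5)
Your proposal is correct and follows essentially the paper's proof. You apply the rule for \(\mathsf{rule}(n_k)\) once, discharge the child's inequalities with Lemma~\ref{lem properties of the IH}(1), weaken away hypotheses dropped at a sprout, and recover the newly added \(\mathsf{hyp}_b\) by applications of the \(>\)-ind\('\) rules from old to young. Your check that the introduced hypothesis coincides with \(\mathsf{hyp}_b\) is a detail the paper leaves implicit.

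Drop the reorganization in your final paragraph. Read literally, it proves a different statement: \(\mathsf{judg}_{\mathsf{ind}}\) of a sprout child already contains the new \(\mathsf{hyp}_b\), so its hypotheses are not a sublist of the parent's. The \(>\)-ind\('\) applications are themselves extending rules and belong inside this step, exactly as in your penultimate paragraph.
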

\begin{proof}
    We apply the \(\sf{rule}(n_k)\)-rule, which has hypotheses: \[
        \sf{Ineq}(n_k),\sf{Hyp}(n_k),\overline x_k\ge_{\sf{Graph}(\sf{rule}(n_k))_i}\overline x_{k+1}\vdash\sf{judg}(\sf{Child}(n_k)_i).
    \]
    We can prove the $i$-th hypothesis using \(\sf{judg}_\ind(\sf{Child}(n_k)_i)\), by applying Lemma \ref{lem properties of the IH}(1), weakening, and the \(>\)-ind\('\) rules.
\end{proof}

\begin{lemma}\label{lem:provable judgment}
    For any bud $b$, the sequent $\sf{judg}_\ind(b)$ is provable with the extending rules of $\cal{C}_{\sf{ind}}$. 
\end{lemma}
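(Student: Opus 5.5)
The plan is to close the bud sequent by instantiating the hypothesis $\sf{hyp}_b$ at the variables of $b$ itself. Let $s=\sf{sprout}(b)$. By Lemma~\ref{lem properties of the IH}(4), $\sf{hyp}_b$ occurs in $\sf{Hyp}(b)$, so after weakening we may use it. By Lemma~\ref{lem properties of the IH}(3) its only free variable is $x_{\sf{prog}(b)}=\sf{var}(s,\sf{prog}(b))$, and by the remark after Theorem~\ref{thm:existence cyclic rep with prog name} this is the same variable as $\sf{var}(b,\sf{prog}(b))$.

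The key step is to first move from $x_{\sf{prog}(b)}$ to the covering variable $x_{\sf{cov}(b)}$. This variable lies in $\sf{RelAnc}(b)$ and is a strict ancestor of the progressing variable at $b$, i.e.\ of the fresh copy. Concretely:
\begin{enumerate}
\item Note that $\sf{Ineq}(b)$ contains $x_{\sf{prog}(b)}\ge x_{\sf{cov}(b)}$ (or $>$). This holds because $\sf{cov}(b)$ is a younger name lying above $\sf{prog}(b)$ on a stack, so by Lemma~\ref{lem properties of the IH}(2) it is a descendant.
\item Apply Lemma~\ref{lem properties of the IH}(5) with $y\coloneqq x_{\sf{cov}(b)}$ to obtain $\sf{hyp}_b[x_{\sf{cov}(b)}/x_{\sf{prog}(b)}]$.
\item Eliminate the universal quantifiers by $\forall$-elim, with $\overline x'$ instantiated to the current relevant ancestors at $b$. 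Since $\sf{Name}(b)=\sf{Name}(s)$ and the stacks agree, there is a canonical matching: each name $a$ sends $\sf{var}(s,a)$ to $\sf{var}(b,a)$, and $\overline x_s$ goes to $\overline x_b$. The progressing name then goes to the bud's progressing variable, which is strictly below $x_{\sf{cov}(b)}$.
\end{enumerate}

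This leaves three premises to discharge by $\to$-elim.

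\textbf{First premise: $x_{\sf{cov}(b)}>x'_{\sf{prog}(b)}$.} This is in $\sf{Ineq}(b)$, by uniform covering and the strict-ancestor fact noted above.

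\textbf{Second premise: $\sf{Ineq}(s)[\overline x'/\overline x]$.} Every inequality of $\sf{Ineq}(s)$ is determined by stack positions (Lemma~\ref{lem properties of the IH}(2)) and by the $\ge$ relation to $\overline x_s$. Stacks and names coincide at $s$ and $b$, so each translated inequality is again an inequality of $\sf{Ineq}(b)$. The $\ge z$ entries need separate care: if $y\ge x_{s,j}$, I need the corresponding $y'\ge x_{b,j}$. I would argue this by the stack-based characterisation, since the name at the bottom of $\sf{Stack}(b)_j$ equals that of $\sf{Stack}(s)_j$, and ancestry is witnessed along the stack. The identity rule then closes each premise.

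\textbf{Third premise: $\sf{OldHyp}_b[\overline x'/\overline x]$.} This is the main obstacle. Each member is some $\sf{hyp}_{b'}$ with only free variable $x_{\sf{prog}(b')}$, by Lemma~\ref{lem properties of the IH}(3). After substitution it becomes $\sf{hyp}_{b'}[\sf{var}(b,\cdot)/\ldots]$. I would argue as follows:
\begin{enumerate}
\item Every $b'$ whose hypothesis is in $\sf{OldHyp}_b$ was not discarded at $s$. It is therefore reachable from $s$, hence mutually reachable with $b$.
\item Since $\sf{sprout}(b')$ lies below $s$ (or equals it with $b'$ older), Proposition~\ref{prop:SD transitive} makes $b'$ older than $b$.
\item By the definition of \emph{older}, the name $\sf{prog}(b')$ lies in the prefix $\sf{Name}(b)\upharpoonright\sf{prog}(b)$, and $\sf{var}(b,\sf{prog}(b'))=\sf{var}(b',\sf{prog}(b'))$. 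If $\sf{prog}(b')$ is strictly older than $\sf{prog}(b)$, the substitution fixes this variable, so the premise is literally $\sf{hyp}_{b'}$.
\item By Lemma~\ref{lem properties of the IH}(4), $\sf{OldHyp}_b\subseteq\sf{Hyp}(b)$, so $\sf{hyp}_{b'}$ is available by identity.
\end{enumerate}
The delicate case is $\sf{prog}(b')=\sf{prog}(b)$. Here I would use Lemma~\ref{lem properties of the IH}(5) again, with $x_{\sf{prog}(b)}\ge x'_{\sf{prog}(b)}$ (a strict inequality, subsumed to $\ge$), to weaken $\sf{hyp}_{b'}$ to the required instance.

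After these three eliminations we obtain $\sf{judg}(b)(\overline x_b)$. Contraction and exchange then finish the derivation. Only extending rules have been used, as required.
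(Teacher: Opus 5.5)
Your argument breaks at the first premise, and the cause is step~2. As you note yourself, $\sf{var}(b,\sf{prog}(b))=\sf{var}(n_s,\sf{prog}(b))=x_{\sf{prog}}$, writing $b=n_t$ and $\sf{sprout}(b)=n_s$. The progressing name carries the \emph{same} variable from sprout to bud, so there is no ``fresh copy'' of the progressing variable at $b$. Also, $x_{\sf{cov}}$ lies strictly \emph{below} $x_{\sf{prog}}$, as your step~1 correctly says; it is not above some further progressing variable.

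So once Lemma~\ref{lem properties of the IH}(5) has lowered the free variable of $\sf{hyp}_b$ to $x_{\sf{cov}}$, your matching asks for one of two things. If $x'_{\sf{prog}}$ is matched by name, you need $x_{\sf{cov}}>x_{\sf{prog}}$, which is the wrong way round. If $x_{\sf{prog}}=x_{s,j}$, the positional matching must win, since the conclusion has to be $\sf{judg}(b)(\overline x_t)$; then you need $x_{\sf{cov}}>x_{t,j}$. Uniform covering only gives the non-strict version of this, and $\sf{cov}$ may well be the fresh name of $x_{t,j}$.

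The swapped-addition proof of Figure~\ref{fig} shows this concretely. With variables $x_0,x_1$, $y_0,y_1$, $z_0,z_1$ as in Example~\ref{exam:importprogres}, one has $x_{\sf{prog}}=x_1=x_{s,1}$ and $x_{\sf{cov}}=z_1=x_{t,1}$, and your instance demands $z_1>z_1$. Lowering the bound before instantiating throws away the only strict inequality the reset at $b$ provides, namely $x_{\sf{prog}}>x_{\sf{cov}}$. As the example shows, there need not be anything strictly below $x_{\sf{cov}}$ to instantiate at instead.

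The missing idea is to keep the bound $x_{\sf{prog}}$ and instantiate the bound variable $x'_{\sf{prog}}$ at $x_{\sf{cov}}$ itself. If $x_{\sf{prog}}=x_{s,j}$, instantiate at $x_{t,j}$ and use $x_{\sf{prog}}>x_{\sf{cov}}\ge x_{t,j}$. Match the other variables by position or name as you propose.

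This moves the real work into the second premise, whose instances are then no longer literally entries of $\sf{Ineq}(b)$:
\begin{itemize}
\item For $x_{\sf{prog}}$ not among the $x_{s,j}$, an entry $x_{\sf{prog}}\ge x_{s,j}$ of $\sf{Ineq}(n_s)$ becomes $x_{\sf{cov}}\ge x_{t,j}$. This holds because $\sf{cov}$ \emph{uniformly} covers $\sf{prog}$, so before the reset it sat on every stack $\sf{Stack}(b)_j$ containing $\sf{prog}$.
\item An entry $\sf{var}(n_s,a)>x_{\sf{prog}}$ becomes $\sf{var}(b,a)>x_{\sf{cov}}$, obtained from $\sf{var}(b,a)>x_{\sf{prog}}$ and $x_{\sf{prog}}>x_{\sf{cov}}$.
\end{itemize}
This is where uniformity is genuinely needed; you invoke it only for the first premise, where it cannot help.

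Your treatment of $\sf{OldHyp}_b$ then goes through essentially as written. When $\sf{prog}(b')=\sf{prog}(b)$, apply (5) with $x_{\sf{prog}}\ge x_{\sf{cov}}$. When $x_{\sf{prog}(b')}$ is some $x_{s,j}$, apply (5) with $x_{s,j}\ge x_{t,j}$, since the substitution then does not fix it.

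Finally, Lemma~\ref{lem properties of the IH}(2) does not apply to $\sf{cov}(b)$, because that name has been removed from $\sf{Name}(b)$. The entry $x_{\sf{prog}}>x_{\sf{cov}}$ of $\sf{Ineq}(b)$ comes directly from $x_{\sf{cov}}\in\sf{RelAnc}(b)$.
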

\begin{proof}
    Write the path from the root to \(b\) as \(n_0,\dots,n_s,\dots,n_t\), where \(n_t\coloneqq b\) and \(n_s\coloneqq\sf{sprout}(b)\), and for brevity, we define $\sf{prog}\coloneqq \sf{prog}(b)$ and $\sf{cov}\coloneqq \sf{cov}(b)$.
    By definition, we have 
         \begin{align*}
           \sf{hyp}{}_b=\forall\overline x'((x_{\sf{prog}}>x'_{\sf{prog}})&\to{}\!\!\!\!\!\!\!\!\!\!\!\!\!\!\!\!\!\!\!\! \\
           \sf{Ineq}(n_s)[\overline x'/\overline x] &\to{}\!\!\!\!\!\!\!\!\!\!\!\!\!\!\!\!\!\!\!\! \\
           \sf{OldHyp}{}_b[\overline x'/\overline x] &\to \sf{judg}(n_s)(\overline x_s')),\!\!\!\!\!\!\!\!\!\!\!\!\!\!\!\!\!\!\!\!
        \end{align*}
    where  $\overline{x}$ ranges over $\sf{RelAnc}(n_s)$ and $\overline{x}_s$.
    We will use the induction hypothesis  $\sf{hyp}{}_b$ to deduce $\sf{judg}(b)(\overline{x}_t)$ from $\sf{Ineq}(b)$ and $\sf{Hyp}(b)$. Note that indeed $\sf{hyp}{}_b\in \sf{Hyp}(b)$ (by (4) of Lemma \ref{lem properties of the IH})  and $\sf{judg}(n_s)=\sf{judg}(b)$ (as buds have the same label as their sprout), so our task is now to pick appropriate substitutes $\overline{x}'$ for the variables $\overline{x}$.
    
    For every variable $x$ in $\sf{RelAnc}(n_s)$, let $a_x$ be its name in $\sf{Name}(n_s)$.  Since $\sf{Name}(n_s)=\sf{Name}(b)$, we can define a substitution $\sigma$ by
    \begin{align*}
        \sigma(x)\coloneqq \begin{cases}
            x_{t,j} & \text{if }x=x_{s,j}\text{ for some }j; \\
           x_{\sf{cov}} &\text{else if }x=x_{\sf{prog}}; \\
            \sf{var}(b,a_x) & \text{otherwise}.
        \end{cases}
    \end{align*}
    We will show that 
    \begin{align}
        &\sf{Ineq}(b) \vdash x_{\sf{prog}}>\sigma(x_{\sf{prog}}), 
\label{eq: inequality uniform cover}\\
         &\sf{Ineq}(b) \vdash \varphi[\sigma],&& \text{ for }\varphi\in \sf{Ineq}(n_s) \label{eq: inequality substituted context}, \text{ and }  \\
        &\sf{Ineq}(b),\sf{Hyp}(b) \vdash \varphi[\sigma],&& \text{ for }\varphi\in \sf{OldHyp}{}_b, \label{eq: old induction hyp}
    \end{align}
    are provable in $\cal{C}_{\sf{ind}}$. Due to the form of $\sf{hyp}{}_b$, one then readily obtains that $\sf{Ineq}(b), \sf{Hyp}(b) \vdash \sf{judg}(n_t)(\overline{x}_t)$ is provable as well.

    \begin{enumerate}
        \item
        Since $x_{\sf{prog}}$ is a strict ancestor of $x_{\sf{cov}}$ and $x_{\sf{prog}},x_{\sf{cov}}\in \sf{RelAnc}(b)$, we have $x_{\sf{prog}}>x_{\sf{cov}}\in\sf{Ineq}(b)$. Moreover, if $x_{\sf{prog}}=x_{s,j}$ for some $j$, then $\sf{prog}$ occurs in $\sf{Stack}(n_s)_j=\sf{Stack}(n_t)_j$; as $\sf{cov}$ uniformly covered $\sf{prog}$ before the reset at $n_t$, it follows that $x_{\sf{cov}}$ must be an ancestor of $x_{t,j}$.
        (Recall that the names on the $j$-th stack of a node $n_t$ always refer ancestors of the variable $x_{t,j}$.)
        So in this case we also have $ x_{\sf{cov}}\geq x_{t,j}\in\sf{Ineq}(b)$, which suffices to derive $x_{\sf{prog}}>\sigma(x_{\sf{prog}})$.

        \item
        First suppose $\varphi$ is of the form $\sf{var}(n_s,a)>\sf{var}(n_s,a')$ with $a,a'$ names in $\sf{Name}(n_s)$. Since $n_s$ and $b$ have the same stacks, Lemma \ref{lem properties of the IH}(2) gives $\sf{var}(b,a)>\sf{var}(b,a')\in \sf{Ineq}(b)$. Note that $\sf{var}(n_s,a)$ cannot be $x_{\sf{prog}}$ (as $a'$ cannot cover $\sf{prog}$ due to $\sf{prog}\in \sf{Reset}(b)$) nor of the form $x_{s,j}$.
        If we have that $\sf{var}(n_s,a')=x_{s,j}$ for some $j$, then $a'$ occurs in $\sf{Stack}(b)_j$ and thus $\sf{var}(b,a')$ is an ancestor of $x_{t,j}$, which implies that $\sf{var}(b,a')\geq x_{t,j}\in \sf{Ineq}(b)$.  
        
        If $\sf{var}(n_s,a')=x_{\sf{prog}}$, then $\sf{var}(b,a')=x_{\sf{prog}}$. Since $x_{\sf{prog}}> x_{\sf{cov}}\in \sf{Ineq}(b)$ (as argued for above), we obtain that in all cases we can derive $\varphi[\sigma]$ from $\sf{Ineq}(b)$.
        
        Now suppose that $\varphi$ is of the form $\sf{var}(n_s,a')\geq x_{s,j}$ with $a'$ a name in $\sf{Name}(n_s)$. Then $\sf{var}(b,a')$ is an ancestor of $ x_{t,j}$, so $\sf{var}(b,a')\geq x_{t,j}\in \sf{Ineq}(b)$. If $\sf{var}(n_s,a')=x_{s,j'}$ for some $j'$, then $j'=j$. Otherwise, if $\sf{var}(n_s,a')=x_{\sf{prog}}$, then $a'=\sf{prog}$; the latter means that $x_{\sf{cov}}$ is an ancestor of $x_{t,j}$ and thus $x_{\sf{cov}}\geq x_{t,j}\in \sf{Ineq}(b)$.\footnote{Note that here we use that $\sf{cov}$ is a \textit{uniform} cover, as it needs to give an upper bound for \textit{every} variable $x_{t,j}$ that descends from the progress variable $x_{\sf{pro}}$.} Thus  $\varphi[\sigma]$ is derivable from $\sf{Ineq}(b)$.

        \item 
        Let $\varphi\in \sf{OldHyp}{}_b$. Then, by construction, we have $\varphi=\sf{hyp}{}_{b'}$ for some bud $b'$ such that either (i) $b'$ is reachable from $n_s$ and $\sf{sprout}(b')$ lies strictly below $n_s$, or (ii) we have $\sf{sprout}(b')=n_s$ and $b'$ is older than $b$.
        If (i) then Proposition \ref{prop:SD transitive} applies, so in both cases, $\sf{prog}(b')$ is older than $\sf{prog}$ in $\sf{Name}(b)$ and $\sf{var}(b',\sf{prog}(b'))=\sf{var}(n_s,\sf{prog}(b'))$.
        The former also implies $\sf{var}(b,\sf{prog}(b'))=\sf{var}(n_s,\sf{prog}(b'))$.
        By Lemma \ref{lem properties of the IH} (4--5), we know that the only free variable in $\sf{hyp}{}_{b'}$ is $x_{\sf{prog}(b')}=\sf{var}(b',\sf{prog}(b'))$ and that $\sf{hyp}{}_{b'}\in \sf{Hyp}(b)$.
        Since  $x_{\sf{prog}(b')}=\sf{var}(b,\sf{prog}(b'))$, we have that $\sigma(x_{\sf{prog}(b')})=x_{\sf{prog}(b')}$ whenever $x_{\sf{prog}(b')}$ is not of the form $x_{s,j}$ or $x_{\sf{prog}}$. If it is of this form, then we can show that
         $\sf{Ineq}(b)\vdash  x_{\sf{prog}(b')}\geq\sigma(x_{\sf{prog}(b')})$ is derivable by similar reasoning as above. Thus Lemma \ref{lem properties of the IH}.(5) implies that $\varphi[\sigma]$ is derivable from $\sf{Ineq}(b),\sf{Hyp}(b)$. \qedhere
    \end{enumerate}
\end{proof}

\begin{proof}[Proof of Theorem \ref{the:main}]
    By the previous two lemmas, we see that $\sf{Ineq}(n_0),\sf{Hyp}(n_0)\vdash A(\overline x_0)$ is provable in \(\cal C_{\sf{ind}}\).
    Note that \(\sf{Ineq}(n_0)\) may only contain inequalities of the form $x\geq x$, as the variables $\overline x_0$ do not have any strict ancestors yet, but \(\sf{Hyp}(n_0)\) may already contain induction hypotheses, as $n_0$ may be a sprout.
    So, using weakening and the \(>\)-ind\('\) rules, we obtain a proof $\pi_{\sf{ind}}$ of \({}\vdash A(\overline x_0)\) in \(\cal C_{\sf{ind}}\). Note that, by construction of $\pi_{\sf{ind}}$, the application of \(\cal C\)-rules follows precisely the rule applications that occur in the finite subtree of $\pi$ given by \(\sf{RepNode}\). Thus, if we forget all applications of extending rules in $\pi_\ind$ and strip every judgment of its context and variables, we obtain precisely this subtree.
\end{proof}

%%novalidate

\section{Application}\label{sec:application}

As a simple example, we consider \(\sf{CHA}\) (cyclic Heyting arithmetic), and show how our results can be applied to unravel \(\sf{CHA}\)-proofs into \(\sf{HA}\)-proofs (Heyting arithmetic).
For both proof systems, the terms and formulas given by: \begin{align*}
    n,m,\dots&\coloneqq x\,|\,0\,|\,\sf{suc}(n)\,|\,n+m\,|\,n\times m, \\
    \phi,\psi,\dots&\coloneqq (n=m)\,|\,\bot\,|\,\top\,|\,\phi\lor\psi\,|\,\phi\land\psi\,|\,\phi\to\psi\,|\,\exists x\,\psi\,|\,\forall x\,\psi,
\end{align*} while the judgments are sequents \(\Gamma\vdash\phi\).
Both have standard structural rules (exchange/constraction/weakening/cut), standard intuitionistic sequent calculus rules for \({=},\bot,\top,{\lor},{\land},{\to},{\exists},{\forall}\), as well as the axioms: \begin{align*}
    \begin{gathered}
        0+y=y, \\
        0\makesize\times{{}+{}} y=0, \\
        0\ne\sf{suc}(y')
    \end{gathered} && \begin{gathered}
        \sf{suc}(x')+y=\makesize[l]{\sf{suc}(x'+y),}{(x'\makesize\times{{}+{}} y)+y,} \\
        \sf{suc}(x')\makesize\times{{}+{}} y=(x'\makesize\times{{}+{}} y)+y, \\
        \sf{suc}(x')=\sf{suc}(y')\to x'=y'.
    \end{gathered}
\end{align*}
\(\sf{HA}\) also has an induction axiom scheme, for every \(\phi\) an axiom: \[
    \phi[0/x]\land\forall x'\,(\phi[x'/x]\to\phi[\sf{suc}(x')/x])\to\forall x\,\phi.
\]
In contrast, \(\sf{CHA}\) only has a case distinction rule: \[
    \prftree[r]{\(\sf{case}_x\).}
        {(\Gamma\vdash\delta)[0/x]}
        {(\Gamma\vdash\delta)[\sf{suc}(x)/x]}
        {\Gamma\vdash\delta}
\]
To compensate for this weaker principle, \(\sf{CHA}\) allows cyclic proofs: \begin{itemize}
    \item the \emph{trace objects} of a sequent are the free variables, (we assume a consistent ordering of the free variables, obtained for example using a bijection with \(\bb N\));
    \item the \emph{size change graph} from a conclusion to one of its premises has an edge between two free variables iff they are the same, where the edge is progressing for \(x\) iff the rule is \(\sf{case}_x\).
\end{itemize}
For a detailed exposition of the rules, see Appendix \ref{app:CHA}.

It is relatively easy to see that CHA derives every induction axiom with a cyclic proof (see \cite[Theorem 7]{leigh2025unravelling} for example), so we can transform every \(\sf{HA}\)-proof into a \(\sf{CHA}\)-proof. For the other direction, first note that the \(\sf{case}_x\) rule is derivable in HA using the induction axiom for \(\Gamma\to\delta\) and the fact that the relations \begin{align*}
    (n\le m)&\coloneqq\exists x\,(n+x=m), &
    (n< m)&\coloneqq(\sf{suc}(n)\le m),
\end{align*} satisfy the rules of Section \ref{sec:inductive system}.
Theorem \ref{the:main} shows that any \(\sf{CHA}\)-proof \(\pi\) of \(\Gamma_0\vdash\delta_0\) can be turned into a \(\sf{CHA}_\ind\) proof \(\pi_\ind\) of \({}\vdash(\Gamma_0\vdash\delta_0)(\overline x_0)\).
Here we obtain a nested sequent since \(\sf{CHA}_\ind\) adds its own layer of formulas and sequents, where the judgments \(\Gamma\vdash\delta\) of \(\sf{CHA}\) are seen as relation symbols, and the arity is given by the number of free variables \(\overline z\).
However, we can readily transform \(\pi_{\sf{ind}}\) into a \(\sf{HA}\)-proof by replacing every occurrence \((\Gamma\vdash\delta)(\overline x)\) of such a relation symbol into a formula \((\Gamma\to\delta)[\overline x/\overline z]\); one easily checks that this maps every \(\sf{CHA}_\ind\)-rule to a rule that is derivable in \(\sf{HA}\). We thus obtain a proof of \(\Gamma_0\vdash\delta_0\) in \(\sf{HA}\).
%%novalidate

\section{Extending to multiple sorts}\label{sec:multiple sorts}

We defined our first-order system with a single sort for convenience; however, extending to multiple sorts is straightforward.
Consider a cyclic proof system \(\cal C^{\sf{sort}}\) with the following additional data: \begin{itemize}
    \item sets \(\sf{Sort}\) (the \emph{sorts}) and \(\sf{IndSort}\subseteq\sf{Sort}\) (the \emph{inductive sorts}),
    \item for every judgment \(A\) and \(i<\sf{ob}(A)\) a sort \(\sf{Sort}(A)_i\) such that for every edge \(j\ge j'\) or \(j>j'\) in \(\sf{Graph}(r)_i\), we have \(\sf{Sort}(\sf{concl}(r))_j=\sf{Sort}(\sf{Prem}(r)_i)_{j'}\in\sf{IndSort}\).
\end{itemize}
We then define \(\cal C_{\sf{ind}}^{\sf{sort}}\) as a multi-sorted first-order logic (see Appendix \ref{app:multi-sort}): every variable is annotated with a sort \(S\), which is done for quantifiers by writing \(\forall x:S.\,\phi\) and for a sequent with free variables among \(\overline z\) by writing \(\overline{z:S};\Gamma\vdash\phi\).
Instead of the two relations \(\ge\) and \(>\), we have two relations \(\ge_I\) and \(>_I\) for every inductive sort \(I\). Any cyclic proof system can be given the additional sort data in a degenerate way (\(\sf{Sort}=\sf{IndSort}=\{*\}\)), so the following result generalises Theorem \ref{the:main}: \begin{theorem}
     For every $\cal{C}^{\sf{sort}}$-proof $\pi$ of a judgment \(A\), there exists a \(\cal C^{\sf{sort}}_{\sf{ind}}\)-proof $\pi_{\sf{ind}}$ of \(\overline{x_0:S}{}\vdash A(\overline x_0)\), where \(\overline S\) is the list of sorts \(\sf{Sort}(A)_0,\dots,\sf{Sort}(A)_{\sf{ob}(A)-1}\).
     Moreover, $\pi_{\sf{ind}}$ preserves the structure of \(\pi\): if we forget all applications of extending rules in $\pi_\ind$ and strip every judgment of its context and variables, then we obtain a finite subtree of \(\pi\).
\end{theorem}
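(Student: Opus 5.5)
The plan is to repeat the single-sorted argument of Section~\ref{sec:transformation} essentially verbatim, with every variable now carrying a sort. No new combinatorics should be needed. The annotation machinery of Section~\ref{sec:safra} (stacks, names, resets, uniform covers) refers only to the size-change graphs, so it applies unchanged to \(\cal C^{\sf{sort}}\) once we forget the sort data. Theorem~\ref{thm:existence cyclic rep with prog name}, Lemma~\ref{lem: lower bound for reachable buds}, Lemma~\ref{lem:SD simpel} and Proposition~\ref{prop:SD transitive} therefore give a cyclic representation \(\pi_{\sf{cyc}}\) with progress map \(\sf{prog}\) respecting the induction order.

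The first real step is the observation that makes sorts harmless. Every edge in a size-change graph connects trace objects of the same inductive sort, so every ancestor relation \(x_{i,j}\ge x_{i',j'}\) or \(x_{i,j}>x_{i',j'}\) in \(\sf{Var}(n_k)\) relates two variables of one common sort \(I\in\sf{IndSort}\). Consequently, for a nonempty stack \(\sf{Stack}(n_k)_j\), all names on it denote variables of the sort of \(x_{k,j}\) (or \(x_{k,j}\) is itself the stack bottom). Also, a uniform cover \(\sf{cov}(b)\) of \(\sf{prog}(b)\) names a strict ancestor of \(x_{\sf{prog}(b)}\), so both have the same inductive sort. Each inequality in \(\sf{Ineq}(n_k)\) can thus be read as \(y>_I z\) or \(y\ge_I z\) for the appropriate \(I\). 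Each induction hypothesis \(\sf{hyp}_b\) is then introduced by \(>_I\)-ind\('_{x_{\sf{prog}(b)}}\) with \(I\) the sort of the progressing variable, and its universally quantified variables \(\overline x'\) are annotated with the sorts of the variables they replace.

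Next we redo Lemma~\ref{lem properties of the IH} and the two subsequent lemmas in the sorted system. The derivations of inequalities only chain \(\ge\)/\(>\) along paths in the size-change graphs, and such paths stay within one sort, so each use of \(\ge_I\)-trans, \(>_I\)-extend and subsumption is well-sorted. The substitution \(\sigma\) in Lemma~\ref{lem:provable judgment} sends \(x_{s,j}\mapsto x_{t,j}\), \(x_{\sf{prog}}\mapsto x_{\sf{cov}}\) and \(\sf{var}(n_s,a)\mapsto\sf{var}(b,a)\). Each case must be checked to be sort-preserving: the first because \(n_s\) and \(b\) carry the same judgment, the second by the observation above, and the third because the two variables carry the same name and lie on the same stack, hence have the same sort (or coincide). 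The final assembly goes as in the proof of Theorem~\ref{the:main}: weakening and the \(>_I\)-ind\('\) rules at the root give \(\overline{x_0:S}\vdash A(\overline x_0)\). The structure-preservation claim is untouched, since the \(\cal C\)-rules used are exactly those of \(\sf{RepNode}\).

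The main obstacle I expect is the third case of \(\sigma\). A name \(a\) may be removed by a reset and later reused for a variable of a different sort, so "same name implies same sort" is not automatic. Here I would use condition (2) of Theorem~\ref{thm:existence cyclic rep with prog name}: names older than \(\sf{prog}(b)\) persist from \(n_s\) to \(b\). For younger names, use equality of stacks at \(n_s\) and \(b\): each name sits on some stack \(j\), whose variables all have the sort of \(x_{s,j}\), which equals the sort of \(x_{t,j}\) because the judgments agree. This forces the sorts to agree and closes the gap.
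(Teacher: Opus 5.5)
Your proposal is correct and follows essentially the same route as the paper. You reuse the annotated representation of Proposition~\ref{prop:SD transitive} unchanged, introduce each hypothesis by the sorted $>$-ind$'$ rule at the (necessarily inductive) sort of the progressing variable, and check that the three cases of $\sigma$ preserve sorts because the covering and named variables are ancestors of some $x_{t,j}$ whose sort is fixed by the common judgment of sprout and bud; your stack-equality argument for reused names is exactly what the paper's brief remark that ``the judgment and the annotations of the bud and its sprout are the same'' relies on.

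One small detail that both you and the paper pass over: a variable of non-inductive sort is the bottom of its own fresh stack, so it is named, and $\sf{Ineq}(n_k)$ then contains the reflexive inequality $x\ge x$ for it. That is not a formula of the sorted system, so such inequalities should simply be omitted, which is harmless.
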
 
\begin{proof}
    We follow the proof of Theorem \ref{the:main}.
    In particular, the results from Section \ref{sec:safra} are still applicable because \(\cal C^{\sf{sort}}\) is just a cyclic system with additional data, so \autoref{prop:SD transitive} gives an annotated cyclic representation \(\pi_{\sf{cyc}}\) of \(\pi\).
    
    For Section \ref{sec:transformation} we have to be more careful.
    Introducing an induction hypothesis at a sprout still works: the sort of the progressing variable has to be an inductive sort, since these are the only variables that can make strict progress.
    So, to introduce the induction hypothesis, we can use the following derivable rule: \[
        \prfstack[r]{\(>\)-ind\(_{z_i}'\)}
            {\overline{z:S};\Gamma,\forall \overline{z:S}.\,(z_i>_{S_i}z_i'\to(\Gamma\to\delta)[\overline z'/\overline z])\vdash\delta}
            {\overline{z:S};\Gamma\vdash\delta}
    \] where \(S_i\in\sf{IndSort}\).
    
    We then have to check that when we `use' the induction hypothesis at a bud \(b=n_k\), we only apply it to variables that have the correct sort.
    To do this, consider the substituted variables, as described by the substitution \(\sigma\) defined in the proof of Lemma \ref{lem:provable judgment}.
    By definition of $\sigma$, we have three cases: the variables \(x_{k,j}\), the uniform cover \(x_{\sf{cov}}\), and variables that have a name.
    All of these can be linked to a specific sort given by some \(\sf{Sort}(A)_i\) for the judgment \(A\) of the bud: the variable \(x_{k,j}\) has sort \(\sf{Sort}(A)_j\), and for the uniform cover and the named variables we know that they are an ancestor of at least one of the variables, say the \(i\)-th, and therefore also have sort \(\sf{Sort}(A)_i\).
    As the judgment and the annotations of the bud $b$ and its sprout are the same, it follows the original variables and those substituted by \(\sigma\) have the same sorts.
\end{proof}
\section{Conclusion}\label{sec:conclusion}

We introduced an abstract framework for cyclic proofs and showed that any cyclic proof can be unravelled into a proof by induction. The inductive proof is given in an extension of the cyclic system, where we only add those principles that are needed to discuss well-founded induction, namely, the relations \(\le\) and \(<\), and the first-order connectives needed to state the induction hypothesis (\(\forall\) and \(\to\)).\footnote{
    We have formulated the rules in a natural deduction style, but we could just as easily used a sequent calculus with a cut-rule for the inductive system.
    However, for a cyclic system like \(\sf{CHA}\), the choice of natural deduction vs sequent calculus might have an impact on the regular derivations that are considered proofs, since the specific rules affect the traces through the infinite tree.
}
Moreover, we have shown that our transformation preserves the structure of the cyclic proof and can be applied in a multi-sorted setting.

\paragraph{Discussion.}

To our knowledge, this is the first transformation that is formulated for a general notion of cyclic proofs rather than a specific proof system. As stated earlier, our work builds heavily on work by Wehr et al. \cite{afshari_abstract_2022,leigh_gtc_2024,leigh2025unravelling, wehr2025cyclic}.
Where they follow the `standard' notion of a reset proof \cite{jungteerapanich2010tableau, Stirling, afshari2024cyclic}, we modify these proofs in one important way. Instead of doing a reset on a name \(a\) whenever it is \begin{description}
    \item[\emph{covered}] (for every stack \(S\) that contains \(a\), there exists an \(a'\) strictly younger than \(a\), such that \(a'\) is also in \(S\)),
\end{description}
we only do a reset on \(a\) when it is
\begin{description}
    \item[\emph{uniformly covered}] (there exists an \(a'\) strictly younger than \(a\), such that for every stack \(S\) that contains \(a\), \(a'\) is also in \(S\)).
\end{description}
This modification allows us to work without the assumption that the well-founded relation is linear.
More precisely, although in both approaches one replaces a back-edge with progressing name \(a\) by an application of the induction rule, we have the following crucial distinction: \begin{itemize}
    \item Wehr et al. do induction on the maximum of all trace objects where \(a\) appears in the stack: when \(a\) is covered, we know that all inputs have become strictly smaller, and so their maximum is also strictly smaller.
    \item We retain more information in the sequent and do induction on the variable named by \(a\): this variable is an upper bound for all trace objects where \(a\) appears in the stack, and its uniform cover names a strictly smaller upper bound.
\end{itemize}

Let us note here that our abstract notion of cyclic proofs is more general than it might appear. Not only does it cover cyclic systems with `simple' trace conditions as found in cyclic arithmetic, it also covers cyclic systems with more complicated traces, such as those encountered in the full modal \(\mu\)-calculus \cite{NW96}.\footnote{To formulate the traces of modal $\mu$-formulas in terms of size-change graphs, one can take tuples of formulas and bound variables as trace objects; see Definition 3.4 in \cite{afshari_abstract_2022} for details.} This does not mean, of course, that our translation always provides an inductive proof the `intended' inductive system; for example, our first-order inductive system is quite different from Kozen's inductive system for the $\mu$-calculus \cite{kozen1983results}. Nevertheless, our result can be used to reduce the finitisation problem of a cyclic system to finding a reduction from our induced inductive system to the intended one.

\paragraph{Future Work.}

Our work only covers cyclic proofs on inductive sorts, and it would be interesting to extend it to coinductive sorts, or more generally, to a mix between inductive and coinductive sorts \cite{danielsson2009mixing,basold2018mixed}.
This could be done by generalising our notion of cyclic proof by allowing both inductive and coinductive traces;
% such as in the modal \(\mu\)-calculus \cite{NW96}.\footnote{
%     This is similar to the generalisation from B\"uchi automata to parity automata: instead of two activation values (progress and preservation) we have one activation value for every natural number.
%     We accept an infinite sequence if there exists a trace where the lowest activation value that appears infinitely often is even.
% }
the soundness/termination conditions for such traces can also be described using size-change graphs \cite{hyvernat_size-change_2025}.
Alternatively, one can study cyclic systems with \emph{ordinal approximations}, allowing a reduction of both induction and coinduction on arbitrary sorts to induction on ordinals; Sprenger and Dam's system for the first-order \(\mu\)-calculus \cite{sprenger_structure_2003,SprengerDamComp} is an example of this, and such approximations are also the idea behind sized-types in the proof assistant \texttt{Agda} \cite{abel2016well}.

Although our results can be applied to constructive theories, we assume a classical metatheory (where we make use of K\"onig's lemma), and it would be interesting to see whether our results can be proven in a constructive metatheory.
Here the work on constructive Ramsey theory \cite{hutchison_stop_2012} can be useful, which uses `almost-full' relations as a constructive formulation of well-quasi-orders that covers size-change termination.

Lastly, we want to apply our result to show that pattern matching with recursive calls for indexed inductive types is conservative over primitive elimination rules; as mentioned earlier, this is still open for recursive calls that satisfy the size-change termination principle.
Our abstract framework is general enough to include arbitrary inductive types, but more work is needed to clarify in which ways computational behaviour is preserved, and how our results can be applied to indexed inductive types.
Dually, we want to investigate whether copattern matching on coinductive types is conservative over primitive introduction rules.

\begin{acks}
We are thankful for valuable perspectives and ideas from both the cyclic proof theory and dependent proof-assistant community; we are particularly grateful for insightful conversations with Andreas Abel, Bahareh Afshari, Benno van den Berg, Jesper Cockx, Gianluca Curzi, Anupam Das, Andr\'as Kov\'acs, Johannes Kloibhofer, Graham Leigh, Meven Lennon-Bertrand, Yde Venema, and Dominik Wehr.
\end{acks}

%% The acknowledgments section is defined using the "acks" environment
%% (and NOT an unnumbered section). This ensures the proper
%% identification of the section in the article metadata, and the
%% consistent spelling of the heading.
% \begin{acks}
% To Robert, for the bagels and explaining CMYK and color spaces.
% \end{acks}

%% The next two lines define the bibliography style to be used, and
%% the bibliography file.
\bibliographystyle{ACM-Reference-Format}
\bibliography{references,zotero}

%% If your work has an appendix, this is the place to put it.
\appendix

\allowdisplaybreaks

\section{Omitted proofs}\label{app: proofs}

\begin{proof}[Proof of Lemma  \ref{lem: lower bound for reachable buds}]
    We only provide a sketch, and refer the reader to \cite[Proposition 13]{leigh_gtc_2024} for details.

    Let $B$ be a set of mutually reachable buds. Then there exists a sequence  $\overline{n}\coloneqq n_0,\dots,n_{k}$ in $\sf{RepNode}$ that contains every bud in $B$, such that $n_0=n_{k}$ and for every $i<k$, either $n_{i+1}$ is a child of $n_i$, or $n_i$ is a bud with $\sf{sprout}(n_i)=n_{i+1}$. Since $\overline{n}$ describes a `cycle with back-edges' through $\sf{RepNode}$, note that $\overline{n}$ must contain a lowest node $n_l$, and that this node must be the sprout of some $b_l\in B$. We can then define a well-founded order $\preceq$ on $B$ based on the minimal number of back-edges that a bud needs to reach the bud $b_l$. 

    We can then prove by induction on $\preceq$ that for every $b\in B$, the downset $\sf{Down}_\preceq(b)$ contains an oldest bud. The key observation for the inductive step is that, if $b'\prec b$ such that there is no $b'\prec b''\prec b$, then $\sf{sprout}(b)$ lies on the path from $\sf{sprout}(b')$ to $b'$; the latter implies that $b$ and the oldest bud in $\sf{Down}_\preceq(b')$ are comparable in terms of age. \qedhere
\end{proof}

\begin{proof}[Proof of Lemma  \ref{lem properties of the IH}]
\phantom{a}
    \begin{enumerate}
        \item Note that any variable in $\sf{RelAnc}(n_{k})$ is of the form $x_{k,j}$ or contained in $\sf{RelAnc}(n_{k-1})$.
        Since $\sf{Ineq}(n_{k-1})$ contains all information about the relative size of the variables in $\sf{RelAnc}(n_{k-1})$ and  $\overline{x}_{k-1}$, and since $\overline{x}_{k-1}\gtrsim_{\sf{Graph}(\sf{rule}(n_{k-1}))_j}\overline{x}_{k}$ contains all information about the relative size of the variables $\overline{x}_{k-1}$ and $\overline{x}_{k}$, it follows for any $\phi\in \sf{Ineq}(n_{k})$ that we can derive $\sf{Ineq}(n_{k-1}), \overline{x}_{k-1}\gtrsim_{\sf{Graph}(\sf{rule}(n_{k-1}))_j}\overline{x}_{k}\vdash \varphi$ due to the rules $\geq$-trans and $>$-extend$_1$ of $\cal{C}_{\sf{ind}}$.
        \item Let $a,a'\in \sf{Name}(n_k)$. Then $a$ and $a'$ occur in $\sf{Stack}(n_k)$.  If $\sf{var}(n_k,a)> \sf{var}(n_k,a') \in \sf{Ineq}(n_k)$, then $\sf{var}(n_k,a)$ is a strict ancestor of $\sf{var}(n_k,a')$, and so $a$ must occur below $a'$ in every stack $\sf{Stack}(n_k)_j$ that contains $a'$. Conversely, if there is a stack $\sf{Stack}(n_k)_j$ in which $a$ occurs below $a'$, then $\sf{var}(n_k,a)$ must be a strict ancestor of $\sf{var}(n_k,a')$ and thus $\sf{var}(n_k,a)> \sf{var}(n_k,a') \in \sf{Ineq}(n_k)$.
        \item  By definition of $\sf{hyp}_b$ it suffices to show that every free variable of $\sf{OldHyp}_{b}$ is contained in $\sf{RelAnc}(\sf{sprout}(b))$; or equivalently, that every free variable of $\sf{OldHyp}_{b}$ has a name in $\sf{Name}(b)$. Note that $\sf{OldHyp}_{b}$ can only contain induction hypotheses of the form $\sf{hyp}_{b'}$ with $b'$ older than $b$, due to Proposition \ref{prop:SD transitive}. By induction on the order in which the hypotheses $\sf{hyp}_{b'}$ are added (cf. footnote \ref{footnote: buds same age}), we can prove that every free variable of $\sf{OldHyp}_{b'}$ has a name in $\sf{Name}(b')$. If $\sf{hyp}_{b'}$ is the first such hypothesis, then $\sf{OldHyp}_{b'}$ is empty. The inductive step follows by definition of the age relation, which implies that buds preserve the progressing variable of older buds. 
        \item Clearly, $b$ is reachable from any node between $\sf{sprout}(b)$ and $b$, so $\sf{hyp}{}_b$ occurs in $\sf{Hyp}(b)$. Moreover, by construction, $\sf{OldHyp}_{b}$ only contains induction hypotheses of the form $\sf{hyp}_{b'}$ with $b'$ reachable from $\sf{sprout}(b)$, and thus from  any node between $\sf{sprout}(b)$ and $b$. So every such $\sf{hyp}_{b'}$ is also contained in $\sf{Hyp}(b)$.
        \item Straightforward by inspection of $\sf{hyp}_b$ and the rules of $\cal{C}_{\sf{ind}}$. \qedhere
    \end{enumerate}
\end{proof}

\section{(Cyclic) Heyting arithmetic}\label{app:CHA}

We give the rules for \(\sf{HA}\) (Heyting arithmetic) and \(\sf{CHA}\) (cyclic Heyting arithmetic).
For both proof systems, the terms and formulas are given by: \begin{align*}
    n,m,\dots&\coloneqq x\,|\,0\,|\,\sf{suc}(n)\,|\,n+m\,|\,n\times m, \\
    \phi,\psi,\dots&\coloneqq (n=m)\,|\,\bot\,|\,\top\,|\,\phi\lor\psi\,|\,\phi\land\psi\,|\,\phi\to\psi\,|\,\exists x\,\psi\,|\,\forall x\,\psi,
\end{align*} while the judgments are sequents of the form \(\Gamma\vdash\phi\) where $\Gamma$ is a finite list of formulas.
Both systems have the following structural rules: \begin{gather*}
    \prfstack[r]{identity,}
        {\phi\vdash\phi} \qquad
    \prfstack[r]{exchange,}
        {\Gamma,\phi,\phi',\Gamma'\vdash\delta}
        {\Gamma,\phi',\phi,\Gamma'\vdash\delta} \\[2ex]
    \prfstack[r]{contraction,}
        {\Gamma,\phi,\phi\vdash\delta}
        {\Gamma,\phi\vdash\delta} \qquad
    \prfstack[r]{weakening,}
        {\Gamma\vdash\delta}
        {\Gamma,\phi\vdash\delta} \\[2ex]
    \prfstack[r]{cut,}
        {\Gamma\vdash\phi}
        {\Gamma,\phi\vdash\delta}
        {\Gamma\vdash\delta}
\end{gather*} Moreover, both systems have intuitionistic sequent calculus rules for the logical connectives: \begin{gather*}
    \prfstack[r]{\(=\)-left,}
        {\Gamma[x/y]\vdash\delta[x/y]}
        {\Gamma[m/x,n/y],m=n\vdash\delta[m/x,n/y]} \quad
    \prfstack[r]{\(=\)-right,}
        {\Gamma\vdash n=n} \\[2ex]
    \prfstack[r]{\(\bot\)-left,}
        {\Gamma,\bot\vdash\delta} \qquad
    \prfstack[r]{\(\top\)-right,}
        {\Gamma\vdash\top} \\[2ex]
    \prftree[r]{\(\lor\)-left,}
        {\Gamma,\phi_0\vdash\delta}
        {\Gamma,\phi_1\vdash\delta}
        {\Gamma,\phi_0\lor\phi_1\vdash\delta} \qquad
    \prfstack[r]{\(\lor\)-right\(_i\),}
        {\Gamma\vdash\phi_i}
        {\Gamma\vdash\phi_0\lor\phi_1} \\[2ex]
    \prfstack[r]{\(\land\)-left,}
        {\Gamma,\phi_0,\phi_1\vdash\delta}
        {\Gamma,\phi_0\land\phi_1\vdash\delta} \qquad
    \prftree[r]{\(\land\)-right,}
        {\Gamma\vdash\phi_0}
        {\Gamma\vdash\phi_1}
        {\Gamma\vdash\phi_0\lor\phi_1} \\[2ex]
    \prfstack[r]{\(\to\)-left,}
        {\Gamma\vdash\phi}
        {\Gamma,\psi\vdash\delta}
        {\Gamma,\phi\to\psi\vdash\delta} \qquad
    \prfstack[r]{\(\to\)-right,}
        {\Gamma,\phi\vdash\psi}
        {\Gamma\vdash\phi\to\psi} \\[2ex]
    \prfstack[r]{\(\exists\)-left,}
        {\Gamma,\psi\vdash\delta}
        {\Gamma,\exists x\,\psi\vdash\delta} \qquad
    \prfstack[r]{\(\exists\)-right,}
        {\Gamma\vdash\psi[n/x]}
        {\Gamma\vdash\exists x\,\psi} \\[2ex]
    \prfstack[r]{\(\forall\)-left,}
        {\Gamma,\psi[n/x]\vdash\delta}
        {\Gamma,\forall x\,\psi\vdash\delta} \qquad
    \prfstack[r]{\(\forall\)-right,}
        {\Gamma\vdash\psi}
        {\Gamma\vdash\forall x\,\psi}
\end{gather*} In addition, both systems have the following axioms: \begin{align*}
    \begin{gathered}
        0+y=y, \\
        0\makesize\times{{}+{}} y=0, \\
        0\ne\sf{suc}(y')
    \end{gathered} && \begin{gathered}
        \sf{suc}(x')+y=\makesize[l]{\sf{suc}(x'+y),}{(x'\makesize\times{{}+{}} y)+y,} \\
        \sf{suc}(x')\makesize\times{{}+{}} y=(x'\makesize\times{{}+{}} y)+y, \\
        \sf{suc}(x')=\sf{suc}(y')\to x'=y'.
    \end{gathered}
\end{align*}
\(\sf{HA}\) also has an induction axiom scheme, for every \(\phi\) an axiom: \[
    \phi[0/x]\land\forall x\,(\phi\to\phi[\sf{suc}(x)/x])\to\forall x\,\phi.
\]
In contrast, \(\sf{CHA}\) contains the case distinction rule: \[
    \prftree[r]{\(\sf{case}_x\).}
        {(\Gamma\vdash\delta)[0/x]}
        {(\Gamma\vdash\delta)[\sf{suc}(x)/x]}
        {\Gamma\vdash\delta}
\]
% To make up for this weaker principle, \(\sf{CHA}\) allows cyclic proofs: \begin{itemize}
%     \item the \emph{trace objects} of a sequent are the free variables, (we assume a consistent ordering of the free variables, obtained for example using a bijection with \(\bb N\))
%     \item the \emph{size change graph} between any conclusion and assumption has an edge between two free variables iff they are the same, where the edge is progressing for \(x\) iff the rule is \(\sf{case}_x\).
% \end{itemize}

\section{Multi-sorted inductive system}\label{app:multi-sort}

The multi-sorted system \(\cal C_\ind^{\sf{sort}}\) of Section \autoref{sec:multiple sorts} is given by: \[
    \phi,\psi,\dots\coloneqq A(\overline x)~|~x>_Iy~|~x\ge_Iy~|~\phi\to\psi~|~\forall x:S.\,\psi.
\]
Here \(A\) is a judgment of \(\cal C\), which is viewed as a relation symbol of arity \(\sf{ob}(A)\), and we have \(S\in\sf{Sort}\) and \(I\in\sf{IndSort}\).
A judgment of \(\cal C_{\sf{ind}}\) is a sequent \(\overline{z:S};\Gamma\vdash\phi\), where \(\overline{z:S}\) is a list of the form \(z_0:S_0,\dots,z_{n-1}:S_{n-1}\) where the \(z_i\) do not repeat, and \(\Gamma\) is a list of formulas with free variables in \(\overline z\).
We have the multi-sorted sequent calculus rules: \begin{gather*}
    \prfstack[r]{identity,}
        {\overline{z:S};\phi\vdash\phi} \qquad
    \prfstack[r]{exchange,}
        {\overline{z:S};\Gamma,\phi,\phi',\Gamma'\vdash\delta}
        {\overline{z:S};\Gamma,\phi',\phi,\Gamma'\vdash\delta} \\[2ex]
    \prfstack[r]{weakening,}
        {\overline{z:S};\Gamma\vdash\delta}
        {\overline{z:S};\Gamma,\phi\vdash\delta} \qquad
    \prfstack[r]{contraction,}
        {\overline{z:S};\Gamma,\phi,\phi\vdash\delta}
        {\overline{z:S};\Gamma,\phi\vdash\delta} \\[2ex]
    \prfstack[r]{\(\to\)-intro,}
        {\overline{z:S};\Gamma,\phi\vdash\psi}
        {\overline{z:S};\Gamma\vdash\phi\to\psi} \qquad
    \prfstack[r]{\(\to\)-elim,}
        {\overline{z:S};\Gamma\vdash\phi\to\psi}
        {\overline{z:S};\Gamma\vdash\phi}
        {\overline{z:S};\Gamma\vdash\psi} \\[2ex]
    \prfstack[r]{\(\forall\)-intro,}
        {\overline{z:S},x:S;\Gamma\vdash\psi}
        {\overline{z:S};\Gamma\vdash\forall x\,\psi} \\[2ex]
    \prfstack[r]{\(\forall\)-elim.}
        {\overline{z:S},y:S,\overline{z':S'};\Gamma\vdash\forall x:S.\,\psi}
        {\overline{z:S},y:S,\overline{z':S'};\Gamma\vdash\psi[y/x]}
\end{gather*}
Moreover, for every \(I\in\sf{IndSort}\) we have rules for the well-founded quasi-order: \begin{align*}
    &\mathrlap{\prfstack[r]{\(\ge\)-refl,}
        {\overline{z:S},x:I,\overline{z':S'};\Gamma\vdash x\ge_I x}} \\[2ex]
    &\prfstack[r]{\(\ge\)-trans,}
        {\overline{z:S};\Gamma\vdash x\ge_I y}
        {\overline{z:S};\Gamma\vdash y\ge_I z}
        {\overline{z:S};\Gamma\vdash x\ge_I z}\, &
    &\prfstack[r]{\(\ge\)-subsum,}
        {\overline{z:S};\Gamma\vdash x>_Iy}
        {\overline{z:S};\Gamma\vdash x\ge_I y} \\[2ex]
    &\prfstack[r]{\(>\)-extend\(_0\),}
        {\overline{z:S};\Gamma\vdash x\ge_I y}
        {\overline{z:S};\Gamma\vdash\makesize yx>_Iz}
        {\overline{z:S};\Gamma\vdash x>_Iz}\, &
    &\prfstack[r]{\(>\)-extend\(_1\),}
        {\overline{z:S};\Gamma\vdash x>_Iy}
        {\overline{z:S};\Gamma\vdash\makesize yx\ge_I z}
        {\overline{z:S};\Gamma\vdash x>_Iz} \\[2ex]
    &\mathrlap{\prfstack[r]{\(>\)-ind\(_x\).}
        {\overline{z:S},x:I;\Gamma,\forall x':I.\,(x>_Ix'\to\phi[x'/x])\vdash\phi}
        {\overline{z:S};\Gamma\vdash\forall x:I.\,\phi}}
\end{align*}
Lastly, for every \(r\in\sf{Rule}\) with conclusion $A$, premises $A_0,\dots,A_{n-1}$, and graphs \(G_0,\dots,G_{n-1}\) we have a rule:
\begin{align*}
    \prfstack[r]{$r$.}
        {\overline{z:S},\overline{y:S_i};\Gamma,\overline x\gtrsim_{G_i}\overline y\vdash A_i(\overline y) & \qquad \text{for every $i<n$}}
        {\overline{z:S};\Gamma\vdash A(\overline{x})}
\end{align*} where \(\overline S_i\) is the list \(\sf{Sort}(A_i)_0,\dots,\sf{Sort}(A_i)_{\sf{ob}(A_i)-1}\).

\end{document}